\ifdefined\pdfoutput
  \pdfoutput=1
\fi
\documentclass[10pt]{article}
\usepackage[T1]{fontenc}
\usepackage{lmodern}
\usepackage[a4paper,margin=1in]{geometry}
\usepackage{amsmath,amssymb,amsthm,mathtools,booktabs}
\usepackage[english]{babel}
\usepackage{microtype}
\usepackage[numbers,sort&compress]{natbib}
\usepackage{parskip}
\usepackage{xcolor}
\usepackage{listings}
\usepackage{enumitem}
\usepackage{etoolbox}
\usepackage{needspace}
\usepackage[colorlinks=true,linkcolor=blue!55!black,citecolor=blue!55!black,urlcolor=blue!55!black]{hyperref}
\usepackage[nameinlink,capitalise]{cleveref}
\makeatletter
\patchcmd{\lst@Init}{\penalty-50}{\penalty200}{}{}
\patchcmd{\lst@DeInit}{\penalty-50}{\penalty200}{}{}
\lst@AddToHook{EveryLine}{\ifnum\c@lstnumber<6 \nopagebreak\fi}
\makeatother
\AddToHook{env/lstlisting/before}{\par\Needspace{8\baselineskip}}
\AddToHook{cmd/section/before}{\par\Needspace{8\baselineskip}}
\AddToHook{cmd/subsection/before}{\par\Needspace{8\baselineskip}}

\newtheorem{theorem}{Theorem}[section]
\newtheorem{lemma}[theorem]{Lemma}
\newtheorem{proposition}[theorem]{Proposition}
\newtheorem{corollary}[theorem]{Corollary}
\theoremstyle{definition}
\newtheorem{remark}[theorem]{Remark}

\crefname{theorem}{Theorem}{Theorems}
\crefname{lemma}{Lemma}{Lemmas}
\crefname{proposition}{Proposition}{Propositions}
\crefname{corollary}{Corollary}{Corollaries}
\crefname{remark}{Remark}{Remarks}
\crefname{section}{\S}{Sections}

\AddToHook{cmd/appendix/after}{\crefalias{section}{appendix}}

\AddToHook{env/theorem/begin}{\crefalias{section}{theorem}}
\AddToHook{env/lemma/begin}{\crefalias{theorem}{lemma}}
\AddToHook{env/proposition/begin}{\crefalias{theorem}{proposition}}
\AddToHook{env/corollary/begin}{\crefalias{theorem}{corollary}}
\AddToHook{env/remark/begin}{\crefalias{theorem}{remark}}

\newcommand{\RH}{\mathrm{RH}}

\newcommand{\tht}{\vartheta}

\title{A 120-State Binary Turing Machine Equivalent to the Riemann Hypothesis}
\author{Joseph M. Shunia\thanks{Independent researcher, Ann Arbor, Michigan, United States. Email: \texttt{jshunia@gmail.com}.}}
\date{September 2026}

\hypersetup{
  pdftitle={A 120-State Binary Turing Machine Equivalent to the Riemann Hypothesis},
  pdfauthor={Joseph M. Shunia},
  pdfsubject={An explicit 120-state binary Turing machine equivalent to the Riemann hypothesis},
  pdfkeywords={Riemann hypothesis, Turing machines, Busy Beaver, Chebyshev function, blank-tape computation}
}

\begin{document}
\maketitle

\begin{abstract} \noindent
We construct an explicit deterministic one-tape, two-symbol Turing machine
with 120 working states whose blank-tape computation halts if and only if the
Riemann hypothesis is false. The machine uses the same binary blank-tape model
and state-counting convention as the widely cited 744-state
Matiyasevich--O'Rear--Aaronson construction. The reduction from 744 to 120 working
states removes 624 states, or 83.87\%. Subject to independent end-to-end
verification, our machine is a new state-count record for the Riemann
hypothesis.
\end{abstract}

\begingroup\small
\noindent\textbf{2020 Mathematics Subject Classification:}
11M26 (primary), 03D10, 68Q05 (secondary).\par
\noindent\textbf{Keywords:} Riemann hypothesis, Turing machines,
Busy Beaver, Chebyshev function, blank-tape computation.\par
\endgroup

\section{Introduction}
\label{sec:intro}

Explicit small Turing machines provide a quantitative measure of the
descriptive complexity of mathematical statements \cite{Aaronson}. The
Riemann hypothesis (RH) asserts that every nontrivial zero of the Riemann
zeta function has real part $1/2$. The widely cited explicit construction of Matiyasevich,
O'Rear, and Aaronson uses 744 working states in a deterministic one-tape,
two-symbol blank-tape model \cite{Aaronson,RH744Repo,BBCryptids}. The
purpose of this paper is to give a substantially smaller construction in
the same model and to provide a self-contained, computer-assisted proof of
its correctness.

The machine constructed here has 120 working states. The reduction is achieved
primarily by reformulating the underlying arithmetic search rather than by
applying a generic state minimizer to the 744-state machine. The construction
proceeds through four principal reductions. First, the analytic statement is
encoded by an integer product whose logarithm contains the Chebyshev error.
Second, this product is generated by a self-sieving recurrence, eliminating
any separate prime table or primality subroutine. Third, the required error
bound is expressed through an offset digit count and a simultaneous
halving-and-quartering procedure, replacing direct evaluation of logarithms,
square roots, and exponentials by elementary integer loops. Finally, the
resulting five-register program is compiled to a binary one-tape machine, the
dispatcher is specialized to the finite control language of this program, the scaling loop shares an existing increment-and-continue fragment, and
a local reachability invariant justifies the final reduction from 121 to 120
states.

These reductions separate the construction into mathematical, register-level,
and tape-level components. This organization is useful both for the
state-count analysis and for verification: each reduction is accompanied by an
explicit invariant or finite certificate, and the final transition table is
included in full.

\subsection{Machine model and main theorem}

A deterministic binary Turing machine consists of a finite set of working
states $Q=\{0,\ldots,N-1\}$, a distinct halting target $H$, tape alphabet
$\{0,1\}$ with blank symbol $0$, and a transition function
\[
\delta:Q\times\{0,1\}\longrightarrow \{0,1\}\times\{L,R\}\times(Q\cup\{H\}).
\]
The tape is two-way infinite and initially all zero. State $0$ is initial. The
halting target is not counted as a working state. There are no stay-put moves,
no preloaded input, no oracle, and no nonzero background.

The exact transition table $M_{120}$ appears in \cref{app:table}; it contains
120 working states and 240 transitions.

\begin{theorem}[Main theorem]
\label{thm:main}
For the explicit machine $M_{120}$ of \cref{app:table}, the blank-tape
computation halts if and only if the Riemann hypothesis is false.
Equivalently,
\[
\RH\quad\Longleftrightarrow\quad M_{120}\text{ does not halt on the blank tape}.
\]
\end{theorem}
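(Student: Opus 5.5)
The proof will be a chain of equivalences, following the four reductions in the introduction. Each link is proved separately, either by an analytic argument or by an explicit invariant.

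\textbf{Step 1 (analytic).} Start from a known effective criterion for RH. The natural choice is Schoenfeld's form of von Koch's bound: RH holds if and only if
\[
|\tht(x)-x| < \tfrac{1}{8\pi}\sqrt{x}\,\log^2 x \qquad\text{for all } x\ge 599 ,
\]
or an equivalent bound on $\psi$. The forward direction is Schoenfeld's theorem. For the converse, a bound of the form $O(\sqrt{x}\log^2 x)$ forces the Dirichlet series of $\tht(x)-x$ to converge for $\Re s>1/2$, so $\zeta$ has no zeros there. It suffices to test $x$ at integers, up to a bounded adjustment of the constant that can be absorbed by weakening the inequality slightly.

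Next, encode $e^{\tht(n)}=\prod_{p\le n}p$, or $e^{\psi(n)}=\operatorname{lcm}(1,\dots,n)$, as an integer product $P_n$. The test then becomes a comparison of $P_n$ against $e^{n\pm \text{bound}}$. It will be replaced by a decidable integer inequality that holds for every $n$ under RH and fails for some $n$ if RH is false. To achieve this, thresholds need slack on both sides: use a slightly weaker bound for the halting condition, so that "RH false $\Rightarrow$ eventually violated" still follows from the converse direction above, which gives $\Omega_\pm(x^{\beta-\epsilon})$ oscillations when a zero has real part $\beta>1/2$.

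\textbf{Step 2 (register program).} Write the five-register program: a self-sieving recurrence that generates $P_n$, together with the offset digit count and halving/quartering loops that approximate $\log$, $\sqrt{\cdot}$ and $\exp$. For each loop, prove a loop invariant stating that the register contents equal the intended arithmetic quantities, or bracket them within the proven slack. Conclude that the program halts iff some $n$ violates the integer inequality of Step 1.

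\textbf{Step 3 (compilation).} Show that $M_{120}$ simulates the register program. Specify an encoding of register contents on the tape, for example unary blocks separated by delimiters. For each instruction fragment of the specialized dispatcher, verify that the states realize the instruction and return to the correct dispatch point. This is a finite check over the transition table in \cref{app:table}, and can be certified by a computer-checked trace over symbolic register values. The shared increment-and-continue fragment and the local reachability invariant that justifies the 121$\to$120 merge each need their own lemma. In particular, one must show that the merged state is never entered in the configuration where the two original states would differ.

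\textbf{Main obstacle.} The hardest step is the quantitative integer test in Steps 1--2. One must prove that the crude integer approximations of $\log$, $\sqrt{\cdot}$ and $\exp$ never produce a spurious halt under RH, which requires an explicit error budget valid for all $n$, including small $n$ handled by finite computation. One must also prove that they still detect every RH failure. I would first fix explicit inequalities such as $\lfloor\log_2\rfloor$-type bounds for the digit-count routine. I would then show that the total approximation error is $o(\sqrt{n}\log^2 n)$ and dominated by the slack between the $\tfrac1{8\pi}$ constant and the test constant, and verify the finitely many small cases by direct computation.
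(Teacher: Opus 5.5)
You have the same chain of reductions the paper uses: an analytic criterion, loop invariants for the register program, tape simulation, and separate lemmas for the tail sharing and the reachability-justified $121\to120$ merge. You also correctly state what the merge needs, namely that the one disagreeing state/read pair never occurs on the blank-tape run. Two finite facts are missing from Step 3: the bootstrap run from the blank tape into the normal-form register layout, and the certificate that no live control word contains four consecutive zeros, which justifies the specialized dispatcher.

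\textbf{The genuine gap is in Step 1.} $M_{120}$ does not test $|\tht(n)-n|$. It only tests the deficit $\Delta_n=\max\{n-1-\beta_n(A_n-1),0\}$ against $(q_n+1)2^{q_n}$. That test fires when $\tht(n)-n$ is too \emph{negative} and ignores positive deviations. So the converse you need is one-sided: $\tht(x)-x\ge -C\sqrt x\log^2x$ for large $x$ implies RH. Your argument says an $O(\sqrt x\log^2x)$ bound makes the Mellin integral of $\tht(x)-x$ converge for $\Re s>1/2$. That uses a two-sided bound, so it gives neither the one-sided converse nor the signed $\Omega_\pm$ oscillations you then invoke; it yields only an unsigned $\Omega$. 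The missing ingredient is Landau's positivity principle, applied as follows:
\begin{enumerate}
\item Take $g(x)=\tht(x)-x+C\sqrt x\log^2x\ge0$. Up to an entire function, its Mellin transform is $\frac1s\sum_p(\log p)p^{-s}-\frac1{s-1}+\frac{2C}{(s-1/2)^3}$.
\item This has no real singularity on $(1/2,\infty)$: the pole at $1$ cancels, and $\zeta$ has no real zeros in $(1/2,1)$.
\item Landau's principle then forces the integral to converge and be holomorphic throughout $\Re s>1/2$.
\item A zero with $\Re\rho>1/2$ would produce a pole there, which is impossible.
\end{enumerate}
This is the hinge of the equivalence. You must either prove it, as the paper does, or cite the $\Omega_-$ theorem as an independent result; it does not follow from the two-sided argument.

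\textbf{Your guessed arithmetic also differs from what the machine does.} This is why your ``main obstacle'' looks harder than it is. The registers hold $A_n-1$ with $A_n=n!\prod_{p\le n}p$, not the primorial, and the factorial matters in two ways.
\begin{itemize}
\item It makes the self-sieve work. Every composite $J$ divides $A_{J-1}$ (for $J=4$ via $A_3=36$). Hence $J$ is prime iff $J\nmid A_{J-1}$, and $A_J$ is $JA_{J-1}$ or $J^2A_{J-1}$. With the primorial alone this fails for every non-squarefree composite, e.g.\ $4\nmid 6$.
\item It centers the logarithm. We have $\log A_n=n\log n+(\tht(n)-n)+a_n$ with $1\le a_n\le1+\log n$. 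So the offset base-$n$ digit count of $A_n-1$ is $n+(\tht(n)-n)/\log n+O(1)$, and comparing it with $n-1$ measures the Chebyshev error directly.
\end{itemize}
No $\exp$, $\sqrt{\cdot}$ or $e^{n\pm\text{bound}}$ is ever approximated. The allowance is the exact integer $(q_n+1)2^{q_n}$ with $q_n=\lfloor\log_4(3n)\rfloor$, realized by a quartering clock with simultaneous floor-halving of $\Delta_n$. Consequently:
\begin{itemize}
\item Your Step 2 invariants become exact identities, not error budgets.
\item The forward direction is just Schoenfeld plus $\sqrt n<2^{q_n+1}$ and $\log n<(q_n+1)\log4$, leaving a slack factor $\log4/(4\pi)<1$.
\item The range $2\le n\le2656$ follows from the crude bound $\Delta_n<n/\log n$, with no prime table. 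Direct computation would also work there.
\end{itemize}
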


The theorem is an equivalence, not a proof of RH. If RH is true, the machine
runs forever; if RH is false, it eventually discovers a concrete failure of
the integer inequality proved below and halts. The state count measures the
size of that explicit counterexample search, not the difficulty of predicting its
eventual behavior.

\subsection{Comparison with the 744-state benchmark}
\label{sec:record}

The comparison is to the same standard binary blank-tape model. Aaronson
states in \emph{The Busy Beaver Frontier} that there is an explicit 744-state
Turing machine that halts if and only if RH is false \cite{Aaronson}. The
corresponding artifact remains in the public \emph{metamath-turing-machines}
repository under \texttt{2016-riemann-matiyasevich-aaronson-744}, and the Busy
Beaver community's Cryptids index likewise records RH in the domain $BB(744)$
\cite{RH744Repo,BBCryptids}.

The present machine uses the same counting convention: one two-way tape,
alphabet $\{0,1\}$, an all-zero initial tape, only left/right moves, and a
distinguished halting target that is not counted among the working states.
Nothing is supplied on the tape for free. The comparison is therefore direct.

\begin{center}
\begin{tabular}{@{}lcc@{}}\toprule
Property & Public RH benchmark & This work \\\midrule
Working states & 744 & \textbf{120} \\
Tape & one, two-way & one, two-way \\
Alphabet & $\{0,1\}$ & $\{0,1\}$ \\
Initial tape & all zero & all zero \\
Head moves & $L/R$ & $L/R$ \\
External halt target & yes & yes \\
Preloaded input & none & none \\
Halting condition & RH false & RH false \\\bottomrule
\end{tabular}
\end{center}

The state-count reduction is
\[
744\longrightarrow120,
\qquad 744-120=624.
\]
This is a reduction of 83.87\%; equivalently, the earlier public machine uses
approximately 6.2 times as many working states. As of 22 September 2026, we
found no smaller publicly documented explicit binary blank-tape RH
construction in the current Busy Beaver index, Aaronson's survey, or the
public \emph{metamath-turing-machines} repository
\cite{Aaronson,RH744Repo,BBCryptids}. We therefore describe the
120-state artifact as a \emph{candidate new state-count record}: if the
construction survives independent end-to-end audit, it is the smallest known
explicit construction we found in this model.

The record comparison is logically separate from the correctness theorem. A
subsequent smaller construction would alter the state-count record without
affecting the mathematical equivalence proved here. Conversely, the
correctness of the present machine depends only on the arithmetic criterion
and its implementation, not on the priority claim.

\subsection{Structure of the proof}

The proof is organized into four layers corresponding to the principal
reductions in the construction.
\begin{enumerate}[label=(\roman*)]
\item \textbf{Arithmetic compression.} We construct $A_n=n!\prod_{p\le n}p$ by
an exact recurrence in which the current product itself tells us whether the
next integer is prime.
\item \textbf{Analytic compression.} We turn the Chebyshev error into an
offset digit deficit, then prove that a simple halving/quartering inequality
for that deficit is equivalent to RH.
\item \textbf{Program representation.} We realize the test with five
nonnegative counters, reuse registers across disjoint phases, and retain split
divisors when restoration is unnecessary.
\item \textbf{Tape compression.} We compile the register program to one binary
tape, specialize the dispatcher to the actual control language, and use a
finite local invariant to justify the final 121-to-120 state projection.
\end{enumerate}

The finite compilation layer is fully explicit: the generator, minimizer,
dispatcher specialization, reachability certificate, and exact transition
table are included. The construction is computer-assisted rather than
proof-assistant formalized; the supplied artifacts are intended to make
independent audit straightforward.

\section{A self-sieving arithmetic sequence}

Define the Chebyshev function
\[
\tht(x)=\sum_{p\le x}\log p
\]
and, for integers $n\ge1$,
\[
P(n)=\prod_{p\le n}p,
\qquad
A_n=n!P(n).
\]
Thus
\begin{equation}
\label{eq:logA}
\log A_n=\log(n!)+\tht(n).
\end{equation}

The first reduction is arithmetic. Rather than invoke a separate primality
test, we use a recurrence in which divisibility of the current value
determines whether the next index is prime. This recurrence is the arithmetic
core of the construction.

\begin{lemma}[Self-sieving recurrence]
\label{lem:selfsieve}
Set $A_1=1$. For every integer $J\ge2$,
\begin{equation}
\label{eq:selfsieve}
A_J=
\begin{cases}
J A_{J-1},&J\mid A_{J-1},\\
J^2A_{J-1},&J\nmid A_{J-1}.
\end{cases}
\end{equation}
Moreover,
\[
J\text{ is prime}\quad\Longleftrightarrow\quad J\nmid A_{J-1}.
\]
\end{lemma}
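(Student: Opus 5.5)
The plan is to prove the primality criterion first and then derive the recurrence from it. The recurrence itself is immediate from the definitions: $A_J/A_{J-1}=J\cdot P(J)/P(J-1)$, and $P(J)/P(J-1)$ equals $J$ if $J$ is prime and $1$ otherwise. So $A_J=J^2A_{J-1}$ when $J$ is prime and $A_J=JA_{J-1}$ when $J$ is composite. Once we know that $J$ is prime exactly when $J\nmid A_{J-1}$, the two cases of \eqref{eq:selfsieve} follow.

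For the criterion, fix $J\ge2$ and write $A_{J-1}=(J-1)!\,P(J-1)$.

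\textbf{Prime case.} If $J=p$ is prime, every prime factor of $(J-1)!$ and of $P(J-1)$ is at most $J-1<p$. Hence $p\nmid A_{J-1}$. (When $J=2$, $A_1=1$ and $2\nmid1$ directly.)

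\textbf{Composite case.} If $J$ is composite, I will show $J\mid A_{J-1}$ by splitting into two cases.
\begin{enumerate}[label=(\alph*)]
\item Suppose $J=ab$ with $1<a<b<J$. Then $a$ and $b$ are distinct factors in $(J-1)!$, so $J\mid(J-1)!$.
\item Otherwise $J=q^2$ for a prime $q$. For $q\ge3$ we have $2q<q^2$, so $q$ and $2q$ are distinct factors of $(J-1)!$, giving $q^2\mid(J-1)!$. The remaining case is $J=4$, where $(J-1)!=6$ is not divisible by $4$.
\end{enumerate}

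The main obstacle is exactly the case $J=4$, since Wilson-type arguments using $(J-1)!$ alone fail there. This is where the extra factor $P(J-1)$ is needed. Concretely, $A_3=3!\cdot P(3)=6\cdot6=36$, which is divisible by $4$. More uniformly, for $J=q^2$ one can take one factor $q$ from $(q^2-1)!$ (since $q<q^2$) and another from $P(q^2-1)$ (since $q\le q^2-1$ is prime). This gives $q^2\mid A_{J-1}$ for every prime $q$ without special-casing $J=4$. I would present this uniform argument in the writeup.

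Combining the two cases gives the equivalence $J$ prime $\Leftrightarrow J\nmid A_{J-1}$, and with the first paragraph this yields \eqref{eq:selfsieve}. A short induction is not needed, because $A_J=J!P(J)$ holds by definition and the recurrence only describes it. The base value $A_1=1!\cdot P(1)=1$ is consistent with the empty product.
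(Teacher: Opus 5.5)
Your proof is correct and follows the paper's argument: you use the same prime/composite split, with composites $J=ab$, $a<b$, handled inside $(J-1)!$, prime squares treated separately, and the recurrence read off from $A_J/A_{J-1}=J\,P(J)/P(J-1)$. The one difference is your uniform treatment of $J=q^2$, taking one factor $q$ from $(q^2-1)!$ and one from $P(q^2-1)$. This replaces the paper's $a$, $2a$ argument for $a\ge3$ together with its separate computation $A_3=36$. It does require $q$ to be prime, which holds because the smallest prime factor $p$ of $J$ satisfies $p\le J/p$ with equality only when $J=p^2$; state that one-line justification in your write-up.
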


\begin{proof}
Assume inductively that $A_{J-1}=(J-1)!P(J-1)$. If $J$ is prime, then $J$
divides neither $(J-1)!$ nor any factor of $P(J-1)$, hence $J\nmid A_{J-1}$.

Now suppose that $J$ is composite. If $J=ab$ with $2\le a<b<J$, then both $a$
and $b$ occur among the factors of $(J-1)!$, hence $J\mid(J-1)!$. If $J=a^2$
with $a\ge3$, then the distinct factors $a$ and $2a$ both lie below $J$ and
their product is divisible by $J$, so again $J\mid(J-1)!$. The only remaining
composite is $J=4$; here $A_3=3!\cdot2\cdot3=36$, so $4\mid A_3$. Therefore
every composite $J$ divides $A_{J-1}$.

Thus the extra factor $J$ in the second line of \eqref{eq:selfsieve} occurs
exactly when $J$ is prime. Multiplying the inductive formula by the mandatory
factorial factor $J$ and, exactly for primes, the new factor in $P(J)$ proves
$A_J=J!P(J)$.
\end{proof}

For the tape implementation it is advantageous to use an offset coordinate.
Instead of storing the positive integer $A_J$ directly, the machine stores
\begin{equation}
\label{eq:Bdef}
B_J=A_J-1.
\end{equation}
The shift is useful because zero then represents the initial value $A_1=1$.

\section{Offset digit counts and factorial centering}

The next reduction replaces explicit logarithmic evaluation by repeated
integer division. We use an \emph{offset} digit count in which the loop
condition consumes one unit before division. This modification changes the
ordinary digit length by at most one and leads to a smaller register
implementation while preserving the required analytic scale.

For an integer base $b\ge2$ and $B\ge0$, define $\beta_b(B)$ to be the number
of iterations of
\begin{equation}
\label{eq:beta_iter}
B\longmapsto \left\lfloor\frac{B-1}{b}\right\rfloor
\end{equation}
performed while $B>0$. Put $\beta_b(0)=0$. This is deliberately not the usual
digit length.

Let
\[
\ell_b(B)=1+\lfloor\log_b B\rfloor\quad(B>0),
\qquad \ell_b(0)=0.
\]

\begin{lemma}[Offset count versus ordinary digit length]
\label{lem:beta}
For every $B>0$,
\[
\ell_b(B)-1\le\beta_b(B)\le\ell_b(B).
\]
More precisely, the largest $B$ for which $\beta_b(B)\le d$ is
\begin{equation}
\label{eq:Md}
M_d=b+b^2+\cdots+b^d,
\qquad M_0=0.
\end{equation}
\end{lemma}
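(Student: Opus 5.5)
The plan is to prove the precise statement \eqref{eq:Md} first and then read off the two-sided bound from it. Write $f(B)=\lfloor (B-1)/b\rfloor$ for $B>0$. Since $f$ is nondecreasing and $f(B)<B$, the iteration terminates. Also $\beta_b(B)=1+\beta_b(f(B))$ for $B>0$, and $\beta_b$ is nondecreasing in $B$. Monotonicity will follow by induction on $B$: if $B\le B'$ then $f(B)\le f(B')$, and the recursion gives the comparison.

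The core claim is
\[
\beta_b(B)\le d \iff B\le M_d \qquad (B\ge0,\ d\ge0),
\]
and I will prove it by induction on $d$. For $d=0$, $\beta_b(B)=0$ exactly when $B=0=M_0$. For the inductive step with $B>0$, we have $\beta_b(B)\le d+1$ iff $\beta_b(f(B))\le d$, iff $f(B)\le M_d$ by the inductive hypothesis. Now
\[
\lfloor (B-1)/b\rfloor\le M_d \iff (B-1)/b<M_d+1 \iff B-1\le bM_d+b-1 \iff B\le b+bM_d=M_{d+1},
\]
using $b+b(b+\cdots+b^d)=b+\cdots+b^{d+1}$. The case $B=0$ is trivial. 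So $M_d$ is the largest $B$ with $\beta_b(B)\le d$.

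For the inequalities, let $B>0$ with $\ell=\ell_b(B)$, so $b^{\ell-1}\le B\le b^\ell-1$.

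\textbf{Upper bound.} Since $b^\ell-1<b^\ell\le M_\ell$, we have $B\le M_\ell$, hence $\beta_b(B)\le \ell$.

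\textbf{Lower bound.} We need $\beta_b(B)\ge \ell-1$, i.e.\ $B>M_{\ell-2}$ when $\ell\ge2$; the case $\ell=1$ is trivial since $\beta_b(B)\ge0$. This holds because
\[
M_{\ell-2}=\frac{b^{\ell-1}-b}{b-1}<b^{\ell-1}\le B .
\]
Hence $\beta_b(B)\ge \ell-1$.

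The only delicate step is the floor manipulation in the inductive step, specifically converting $\lfloor x\rfloor\le M$ into $x<M+1$ with the offset $-1$ handled correctly. I will check it directly on the integer inequality $B-1<b(M_d+1)$.
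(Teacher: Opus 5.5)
Your proof is correct and follows essentially the same route as the paper. The equivalence $\lfloor (B-1)/b\rfloor\le M_{d-1}\iff B\le bM_{d-1}+b$ gives the recursion $M_d=bM_{d-1}+b$, and the two-sided bound follows from $M_{\ell-2}<b^{\ell-1}\le B<b^{\ell}\le M_\ell$. Stating the first part as the equivalence $\beta_b(B)\le d\iff B\le M_d$, proved by induction on $d$, is slightly more careful than the paper's ``largest input'' phrasing. That phrasing tacitly assumes the set $\{B:\beta_b(B)\le d\}$ is an initial segment, which your induction establishes.
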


\begin{proof}
Let $M_d$ be the largest input terminating within $d$ iterations. The first
iteration terminates within another $d-1$ steps exactly when
\[
\left\lfloor\frac{B-1}{b}\right\rfloor\le M_{d-1},
\]
which is equivalent to $B\le bM_{d-1}+b$. Hence $M_d=bM_{d-1}+b$, giving
\eqref{eq:Md}.

If $e=\ell_b(B)$, then $B<b^e\le M_e$, so $\beta_b(B)\le e$. If $e\ge2$, then
$M_{e-2}<b^{e-1}\le B$, hence $\beta_b(B)\ge e-1$. The case $e=1$ is
immediate.
\end{proof}

Although $A_n$ grows rapidly, the machine never evaluates its logarithm.
Only its base-$n$ digit occupancy is required. The next lemma identifies the
precise relation between this integer statistic and the prime-number-theorem
error. Define
\[
E(n)=\tht(n)-n,
\qquad
a_n=\log(n!)-n\log n+n.
\]

\begin{lemma}[Factorial centering]
\label{lem:center}
For all $n\ge1$,
\begin{equation}
\label{eq:anbound}
1\le a_n\le1+\log n.
\end{equation}
For $n\ge3$, $A_n$ is not a power of $n$, and therefore
\begin{equation}
\label{eq:digit_shift}
\ell_n(A_n-1)=\ell_n(A_n).
\end{equation}
Finally,
\begin{equation}
\label{eq:center_exact}
\log A_n=n\log n+E(n)+a_n.
\end{equation}
\end{lemma}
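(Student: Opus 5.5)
The lemma has three parts. I would dispose of the exact identity \eqref{eq:center_exact} first, then prove the factorial bound \eqref{eq:anbound}, and finish with the digit-shift claim \eqref{eq:digit_shift}.

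\textbf{The identity.} This is pure bookkeeping. By \eqref{eq:logA}, $\log A_n=\log(n!)+\tht(n)$. Substituting $\log(n!)=n\log n-n+a_n$ and $\tht(n)=n+E(n)$ gives
\[
\log A_n=n\log n+E(n)+a_n,
\]
with nothing to estimate.

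\textbf{The bound on $a_n$.} I will use the elementary integral comparison rather than Stirling with explicit error terms. Since $\log$ is increasing, for $n\ge2$,
\[
\sum_{k=1}^{n-1}\log k\le\int_1^n\log t\,dt\le\sum_{k=2}^{n}\log k .
\]
The integral equals $n\log n-n+1$.
\begin{itemize}[label=$\circ$]
\item The right inequality gives $\log(n!)\ge n\log n-n+1$, i.e.\ $a_n\ge1$.
\item The left inequality gives $\log((n-1)!)\le n\log n-n+1$. Adding $\log n$ yields $\log(n!)\le n\log n-n+1+\log n$, i.e.\ $a_n\le1+\log n$.
\end{itemize}
For $n=1$ we have $a_1=0-0+1=1$, so both bounds hold with equality.

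\textbf{The digit shift.} For $n\ge3$ I show $A_n$ is not a power of $n$. By Bertrand's postulate there is a prime $p$ with $n/2<p\le n$.
\begin{itemize}[label=$\circ$]
\item If $p=n$ (so $n$ is prime), then $p$ divides $n!$ exactly once and $P(n)$ exactly once. Hence $v_p(A_n)=2$, while $v_p(n^k)=k$, which forces $k=2$. But then $A_n=n^2$ would require $(n-1)!\,P(n-1)=1$, which is false for $n\ge3$.
\item If $p<n$, then $p\nmid n$, because $n/p<2$ and $p\ne n$. Yet $p\mid A_n$, so $A_n$ cannot be a power of $n$.
\end{itemize}
This case split is where I expect the only real care to be needed. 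One could avoid Bertrand by using any prime $p\le n$ not dividing $n$, but that fails when every prime up to $n$ divides $n$ (e.g.\ $n=6$), so Bertrand is the cleanest route.

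Once $A_n$ is known not to be a power of $n$, the digit claim is immediate. Put $e=\ell_n(A_n)$, so $n^{e-1}\le A_n<n^e$. Strictness at the bottom gives $n^{e-1}<A_n$, hence $n^{e-1}\le A_n-1<n^e$, and so $\ell_n(A_n-1)=e$. Here $A_n-1\ge1$ because $A_n\ge A_3=36$.
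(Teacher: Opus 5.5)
Your proof is correct. Your handling of the identity \eqref{eq:center_exact} and of the bound \eqref{eq:anbound} is the same integral comparison the paper uses. Your last step, where the strict inequality $n^{e-1}<A_n$ gives $n^{e-1}\le A_n-1<n^e$, is the paper's ``does not cross a digit boundary'' remark written out in full.

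The genuine difference is in showing that $A_n$ is not a power of $n$.
\begin{itemize}
\item You invoke Bertrand's postulate and split into two cases. When $n$ is prime, you compare $p$-adic valuations. When $n$ is composite, you use that a prime in $(n/2,n)$ cannot divide $n$.
\item The paper instead notes that $n-1>1$ divides $n!$, hence divides $A_n$. Since $\gcd(n-1,n)=1$, the integer $n-1$ divides no power of $n$.
\end{itemize}
Both arguments are valid. The paper's is fully elementary and needs no case split, so Bertrand buys nothing here.

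Your side remark justifying Bertrand is wrong. $n=6$ is not an example, since $5\le 6$ and $5\nmid 6$. In fact, for every $n\ge3$ there is always a prime $p\le n$ with $p\nmid n$: any prime factor of $n-1$ works. That observation is exactly the paper's coprimality argument.
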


\begin{proof}
Since $\log t$ is increasing,
\[
\int_1^n\log t\,dt\le\sum_{k=2}^n\log k
\le \log n+\int_1^n\log t\,dt.
\]
The integral equals $n\log n-n+1$, proving \eqref{eq:anbound}.

For $n\ge3$, the integer $n-1>1$ divides $n!$ and hence $A_n$, while
$\gcd(n,n-1)=1$. Thus $A_n$ cannot be a power of $n$. Subtracting one
therefore does not cross a power-of-$n$ digit boundary, proving
\eqref{eq:digit_shift}. Equation \eqref{eq:center_exact} follows from
\eqref{eq:logA}.
\end{proof}

\section{An RH-equivalent scaling criterion}

Define
\begin{equation}
\label{eq:Deltaq}
\Delta_n=\max\{n-1-\beta_n(A_n-1),0\},
\qquad
q_n=\lfloor\log_4(3n)\rfloor.
\end{equation}
The preceding reductions yield a nonnegative integer deficit. It remains to
express the RH-scale allowance by an operation that is economical on counters.
The machine tests
\begin{equation}
\label{eq:criterion}
\Delta_n<(q_n+1)2^{q_n}.
\end{equation}
The exponential and logarithm in this mathematical statement are used only to
describe the threshold. The tape will test it using division by two and four.

\subsection{The scaling clock}

A literal implementation of a threshold of order $\sqrt n\log n$ would
require arithmetic that is unnecessary for the decision problem. Instead,
the two scale factors are generated implicitly by counters evolving at
different rates. Quartering one counter determines the relevant power of
four, while halving the deficit in parallel supplies the corresponding power
of two.

\begin{lemma}[Quartering clock]
\label{lem:clock}
Starting with $z=n-1$, the following Python loop
\begin{lstlisting}
q = 0
while z > 0:
    z = (z-1)//4
    q += 1
\end{lstlisting}
performs exactly $q_n=\lfloor\log_4(3n)\rfloor$ iterations. If a second
nonnegative integer $f$ is replaced by $\lfloor f/2\rfloor$ once per
iteration, its final value is $\lfloor f_0/2^{q_n}\rfloor$.
\end{lemma}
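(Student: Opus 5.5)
The plan is to recognize the quartering loop as exactly the offset digit count of \cref{lem:beta} in base $b=4$: the update $z\mapsto\lfloor (z-1)/4\rfloor$ performed while $z>0$ is the iteration \eqref{eq:beta_iter}. The number of iterations is therefore $\beta_4(n-1)$. It then suffices to prove $\beta_4(n-1)=\lfloor\log_4(3n)\rfloor$ for every $n\ge1$, and to handle the halving counter separately.

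For the count, I would use the sharp threshold from \cref{lem:beta}: $\beta_4(B)\le d$ if and only if $B\le M_d=4+4^2+\cdots+4^d=(4^{d+1}-4)/3$. With $B=n-1$, the condition $n-1\le M_d$ rearranges to $3n+1\le 4^{d+1}$. Since $3n+1\equiv 1\equiv 4^{d+1}\pmod 3$, the two sides cannot differ by $1$ or $2$, so this is equivalent to $3n<4^{d+1}$. Hence $\beta_4(n-1)$ is the least $d\ge0$ with $3n<4^{d+1}$. That least $d$ is the unique $d$ satisfying $4^d\le 3n<4^{d+1}$, namely $\lfloor\log_4(3n)\rfloor$. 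For $d=0$ the lower inequality $1\le 3n$ is automatic.

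The degenerate case $n=1$ should be checked directly. Then $z=0$, the loop performs no iterations, and $\lfloor\log_4 3\rfloor=0$; this agrees with the convention $\beta_4(0)=0$.

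The halving claim follows by induction on the number of iterations $k$, using the nested-floor identity $\lfloor\lfloor f_0/2^k\rfloor/2\rfloor=\lfloor f_0/2^{k+1}\rfloor$. This holds because integer division by $2^k$ followed by division by $2$ equals division by $2^{k+1}$. After $q_n$ iterations the value is $\lfloor f_0/2^{q_n}\rfloor$.

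The only delicate point is the boundary bookkeeping in converting $n-1\le M_d$ into a clean inequality on $3n$. One must confirm that $3n$ is never a power of $4$ and that the "$+1$" offset never flips an inequality. The congruence modulo $3$ settles both, so I expect no real obstacle beyond stating this carefully.
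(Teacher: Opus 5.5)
Your proof is correct and is essentially the paper's argument: the paper rederives the threshold recursion $S_q=4S_{q-1}+4$ directly (this is exactly $M_q$ of \cref{lem:beta} with $b=4$), converts it to $4^q\le 3n<4^{q+1}$ using integrality of $3n$, and notes that successive floor-halvings compose. Your mod-$3$ congruence is harmless but unnecessary, since $3n+1\le 4^{d+1}\iff 3n<4^{d+1}$ holds for any integers.
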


\begin{proof}
Let $S_q$ denote the largest initial $z$ that terminates within $q$
iterations. Then $S_0=0$ and
\[
S_q=4S_{q-1}+4,
\]
so
\[
S_q=\frac{4(4^q-1)}3.
\]
Thus exactly $q$ iterations occur iff
\[
\frac{4^q-1}{3}<n\le\frac{4^{q+1}-1}{3}.
\]
Since $3n$ is integral, this is equivalent to $4^q\le3n<4^{q+1}$, which is
precisely $q=\lfloor\log_4(3n)\rfloor$. Successive floor-halvings compose to
division by $2^q$.
\end{proof}

\subsection{Forward implication under RH}

The forward implication needs one explicit analytic estimate. Everything after
that estimate is elementary integer bookkeeping.

\begin{theorem}[Schoenfeld 1976, Theorem 10]
\label{thm:Schoenfeld}
Assume RH. For $x>599$,
\begin{equation}
\label{eq:Schoenfeld}
|\tht(x)-x|< \frac{\sqrt{x}\,\log^2x}{8\pi}.
\end{equation}
\end{theorem}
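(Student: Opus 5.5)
Since \cref{thm:Schoenfeld} is a published explicit estimate, the paper may simply cite it. If I had to reprove it, I would follow Schoenfeld's route: prove a bound for $\psi(x)=\sum_{p^k\le x}\log p$ first, then pass to $\tht$.

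\textbf{Step 1 (explicit formula).} I would start from the truncated explicit formula
\[
\psi(x)=x-\sum_{|\gamma|\le T}\frac{x^\rho}{\rho}-\log 2\pi-\tfrac12\log\bigl(1-x^{-2}\bigr)+R(x,T),
\]
where $R(x,T)$ carries a fully explicit bound of size $O\bigl(x\log^2(xT)/T+\log x\bigr)$, with constants tracked through Perron's formula.

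\textbf{Step 2 (the zero sum under RH).} Under RH we have $|x^\rho|=\sqrt x$. The zero sum is then bounded by $\sqrt x\sum_{0<\gamma\le T}2/\gamma$. I would evaluate this by partial summation against an explicit Riemann--von Mangoldt formula $N(T)=\frac{T}{2\pi}\log\frac{T}{2\pi e}+O(\log T)$. This gives $\sum_{0<\gamma\le T}1/\gamma=\frac{1}{4\pi}\log^2\frac{T}{2\pi}+O(1)$ with explicit lower-order constants. Choosing $T\asymp\sqrt x$ (in practice $T$ slightly larger, e.g.\ $\sqrt x\log x$, to kill the truncation error), one gets
\[
\tfrac{2}{4\pi}\log^2 T\approx\tfrac{1}{8\pi}\log^2x .
\]
So the main term has exactly the constant $1/(8\pi)$, plus a \emph{negative} secondary term coming from the $-\log 2\pi$ shift in $\log^2(T/2\pi)$. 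This is Schoenfeld's $\psi$-estimate $|\psi(x)-x|<\frac{\sqrt x\log^2 x}{8\pi}$, valid for $x\ge 73.2$, and it leaves a lower-order saving of order $\sqrt x\log x$.

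\textbf{Step 3 (from $\psi$ to $\tht$).} Since $0\le\psi(x)-\tht(x)=\sum_{k\ge2}\tht(x^{1/k})$, the upper side $\tht(x)-x\le\psi(x)-x$ is immediate. For the lower side I would use an explicit unconditional Chebyshev bound, $\psi(x)-\tht(x)<1.43\sqrt x$. This $O(\sqrt x)$ loss must be absorbed by the lower-order saving from Step 2, which holds once $x$ exceeds some explicit threshold. The finitely many remaining $x$ down to $599$ would be verified by direct computation of $\tht$. Because $\tht$ is a step function, it suffices to check at primes and just before them.

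\textbf{Main obstacle.} The hardest part is the constant bookkeeping in Steps 1--2. The leading constant $1/(8\pi)$ leaves no room, so the argument depends on extracting a genuinely negative secondary term of order $\sqrt x\log x$. That term must dominate both the Perron truncation error and the $\psi$--$\tht$ gap. I would first fix $T=\sqrt x\log x$ and carry every error term with explicit constants. Only if the margin fails would I refine the choice of $T$ or the $N(T)$ error term.
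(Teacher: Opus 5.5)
Citing Schoenfeld is exactly what the paper does. It gives no proof of this statement and simply invokes his Theorem 10, using it only for $x\ge 2657$. Your fallback reproof, however, has a genuine gap in Step 2, and the problem is structural rather than constant bookkeeping. With $T=\sqrt x\log x$ one has $\log\frac{T}{2\pi}=\frac12\log x+\log\log x-\log 2\pi$. So, up to $O(\sqrt x)$, your bound for the zero sum is
\[
\frac{\sqrt x}{2\pi}\log^2\frac{T}{2\pi}
=\frac{\sqrt x\log^2x}{8\pi}+\frac{\sqrt x\log x\log\log x}{2\pi}-\frac{\log 2\pi}{2\pi}\sqrt x\log x+\frac{\sqrt x}{2\pi}\bigl(\log\log x-\log 2\pi\bigr)^2 .
\]
The positive cross term exceeds the $-\log 2\pi$ saving (only about $0.29\sqrt x\log x$) as soon as $x>e^{2\pi}\approx 535$. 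On top of that sits the Perron truncation error, still about $\frac94\sqrt x\log x$ times its implied constant. So your bound exceeds $\frac{1}{8\pi}\sqrt x\log^2x$ for every large $x$, and no finite computation can repair a failure at large $x$.

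Retuning $T$ does not help either, given the error term you wrote. For $T=u\sqrt x$ the excess over the target is roughly $\sqrt x\log x\bigl(\frac{1}{2\pi}\log u+\frac{c\log x}{u}-\frac{\log 2\pi}{2\pi}\bigr)$, and its minimum over $u$ tends to $+\infty$. Bounded $u$ leaves a truncation error of the same order $\sqrt x\log^2x$ as the main term, while $u\to\infty$ costs $\frac{1}{2\pi}\sqrt x\log x\log u$.

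The missing idea is smoothing. Work with $\psi_1(x)=\int_0^x\psi(t)\,dt$, whose explicit formula has the absolutely convergent zero sum $\sum_\rho x^{\rho+1}/(\rho(\rho+1))$, and use the monotonicity of $\psi$:
\[
\frac{\psi_1(x)-\psi_1(x-h)}{h}\le\psi(x)\le\frac{\psi_1(x+h)-\psi_1(x)}{h}.
\]
Under RH each zero then contributes at most $\min\bigl(\sqrt{x+h}/|\rho|,\ 2(x+h)^{3/2}/(h|\rho|^2)\bigr)$. This is an effective cutoff at $|\gamma|\approx 2x/h$ with no truncation error, paid for by a main-term error of $h/2$. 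Taking $h\asymp\sqrt x\log x$ puts the cutoff at $\asymp\sqrt x/\log x$, \emph{below} $\sqrt x$. The same expansion then produces $-\frac{1}{2\pi}\sqrt x\log x\log\log x$, while $h/2$ and the tail over $|\gamma|>2x/h$ cost only $O(\sqrt x\log x)$. That $\log\log$-size saving is what makes the constant $1/(8\pi)$ reachable. It also leaves room to absorb $\psi(x)-\vartheta(x)<1.43\sqrt x$ in your Step 3, which is otherwise sound.
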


We only use this result for $x\ge2657$, so the stronger cutoff $x>599$ leaves
the remainder of the argument unchanged. This is the only external explicit
numerical estimate used below.

\begin{proposition}[Large arguments]
\label{prop:forwardlarge}
Assume RH. Then \eqref{eq:criterion} holds for every integer $n\ge2657$.
\end{proposition}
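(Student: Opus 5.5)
The plan is to bound $\Delta_n$ from above by the Chebyshev error divided by $\log n$, invoke \cref{thm:Schoenfeld}, and compare the result with a lower bound for $(q_n+1)2^{q_n}$ coming from the definition of $q_n$. If $\Delta_n=0$ there is nothing to prove, since $(q_n+1)2^{q_n}\ge1$. So assume $\Delta_n=n-1-\beta_n(A_n-1)>0$.

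\textbf{Step 1: upper bound for $\Delta_n$.}
\begin{itemize}
\item By \cref{lem:beta} applied with $b=n$ and $B=A_n-1>0$, we have $\beta_n(A_n-1)\ge \ell_n(A_n-1)-1$.
\item By \eqref{eq:digit_shift}, valid since $n\ge3$, this equals $\ell_n(A_n)-1=\lfloor\log_n A_n\rfloor>\log A_n/\log n-1$.
\item Hence $\Delta_n< n-\log A_n/\log n$.
\item By \eqref{eq:center_exact}, this gives $\Delta_n<-(E(n)+a_n)/\log n$.
\item Using $a_n\ge1$ from \eqref{eq:anbound}, we get $\Delta_n<(-E(n)-1)/\log n<|E(n)|/\log n$.
\item Under RH, \cref{thm:Schoenfeld} applies because $n\ge2657>599$, so
\[
\Delta_n<\frac{\sqrt n\,\log n}{8\pi}.
\]
\end{itemize}

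\textbf{Step 2: lower bound for the threshold.} From $q_n=\lfloor\log_4(3n)\rfloor$ we have $4^{q_n}\le 3n<4^{q_n+1}$. This gives two estimates:
\begin{itemize}
\item $2^{q_n}>\sqrt{3n}/2$;
\item $q_n+1>\log(3n)/\log4$.
\end{itemize}
Multiplying them,
\[
(q_n+1)2^{q_n}>\frac{\sqrt3}{2\log4}\,\sqrt n\,\log(3n)>0.6\,\sqrt n\,\log n.
\]
This far exceeds $\sqrt n\log n/(8\pi)\approx0.04\sqrt n\log n$, which proves \eqref{eq:criterion}.

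\textbf{Main obstacle.} The only delicate point is the chain of floors in Step 1. One must use the lower bound $\beta\ge\ell-1$ from \cref{lem:beta}, not the upper one, together with the digit-shift identity \eqref{eq:digit_shift}, so that the offset count loses at most one digit. The sign of $a_n$ matters too: it is positive, so it only helps. The numerical comparison in Step 2 has a large margin, so the constant $2657$ is not needed here beyond ensuring $n>599$ and $n\ge3$.
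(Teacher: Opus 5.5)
Your proof is correct and follows essentially the same route as the paper. Both arguments use $\beta_n(A_n-1)\ge\ell_n(A_n)-1=\lfloor\log_n A_n\rfloor$ to get $\Delta_n<-(E(n)+a_n)/\log n$, then apply Schoenfeld to obtain $\Delta_n<\sqrt n\log n/(8\pi)$, and finally compare with the threshold using $4^{q_n}\le 3n<4^{q_n+1}$; the only cosmetic difference is that you lower-bound $(q_n+1)2^{q_n}$ directly, whereas the paper shows $\Delta_n/2^{q_n}<(q_n+1)\log 4/(4\pi)<q_n+1$ from the same two inequalities.
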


\begin{proof}
Suppose $\Delta_n>0$. By \cref{lem:beta,lem:center},
\[
\beta_n(A_n-1)\ge \ell_n(A_n)-1=\lfloor\log_n A_n\rfloor.
\]
Hence
\begin{align*}
\Delta_n
&\le n-1-\lfloor\log_n A_n\rfloor\\
&<n-\log_n A_n\\
&=-\frac{E(n)+a_n}{\log n}.
\end{align*}
Since $a_n>0$, \cref{thm:Schoenfeld} gives
\begin{equation}
\label{eq:deltaRH}
\Delta_n<\frac{\sqrt n\log n}{8\pi}.
\end{equation}
From $4^{q_n}\le3n<4^{q_n+1}$ we have $\sqrt n<2^{q_n+1}$ and $\log
n<(q_n+1)\log4$. Therefore
\[
\frac{\Delta_n}{2^{q_n}}
<\frac{(q_n+1)\log4}{4\pi}<q_n+1.
\]
Since the left side has integer floor, $\lfloor\Delta_n/2^{q_n}\rfloor\le
q_n$, equivalent to \eqref{eq:criterion}. The case $\Delta_n=0$ is immediate.
\end{proof}

\subsection{Small arguments without a prime table}

The explicit analytic estimate begins only after a finite cutoff. We could
simply hard-code or verify the smaller cases, but that would be
unnecessary for the present construction. Instead the
same factorial term already gives a crude bound strong enough to dispatch the
entire finite range without a prime table.

\begin{proposition}[Small arguments]
\label{prop:forwardsmall}
Equation \eqref{eq:criterion} holds for every integer $2\le n\le2656$, without
assuming RH and without using a table of prime values.
\end{proposition}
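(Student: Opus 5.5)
The plan is to replace Schoenfeld's bound by the trivial lower bound $A_n\ge n!$, that is, $\tht(n)\ge0$. This uses no information about primes. It then suffices to compare an explicit smooth upper bound for $\Delta_n$ with the threshold $(q_n+1)2^{q_n}$ on each block of constant $q_n$.

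\textbf{Upper bound for $\Delta_n$ (case $n\ge3$).} Assume $\Delta_n>0$ and argue as in the proof of \cref{prop:forwardlarge}. \Cref{lem:beta,lem:center} give
\[
\Delta_n\le n-1-\lfloor\log_n A_n\rfloor<n-\log_n A_n .
\]
Since $\tht(n)\ge0$, \eqref{eq:logA} and the definition of $a_n$ give $\log A_n\ge\log(n!)=n\log n-n+a_n$. Combined with $a_n\ge1$ from \eqref{eq:anbound}, this yields
\[
\Delta_n<\frac{n-a_n}{\log n}\le g(n):=\frac{n-1}{\log n}.
\]
The case $\Delta_n=0$ needs nothing, since the threshold is positive.

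\textbf{Case $n=2$.} \Cref{lem:center} only covers $n\ge3$, so I would check $n=2$ directly. Here $A_2=4$ and $B=3$. The iteration \eqref{eq:beta_iter} with $b=2$ runs $3\mapsto1\mapsto0$, so $\beta_2(3)=2$ and $\Delta_2=0$, which is below $(q_2+1)2^{q_2}=4$.

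\textbf{Reduction to finitely many comparisons.} By \cref{lem:clock}, $q_n=q$ exactly when $4^q\le3n<4^{q+1}$. Hence $[3,2656]$ splits into the blocks $q=1,\dots,6$, with right endpoints
\[
n=5,\ 21,\ 85,\ 341,\ 1365,\ 2656 .
\]
The derivative of $g$ is $\bigl(\log n-(n-1)/n\bigr)/\log^2 n$, which is positive for $n\ge2$, so $g$ is increasing. It therefore suffices to check $g(n_{\max})<(q+1)2^q$ at each right endpoint $n_{\max}$.

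\textbf{The six endpoint checks.} These are routine numerics:
\begin{itemize}
\item $q=1$: $g(5)\approx2.5<4$.
\item $q=2$: $g(21)\approx6.6<12$.
\item $q=3$: $g(85)\approx18.9<32$.
\item $q=4$: $g(341)\approx58.3<80$.
\item $q=5$: $g(1365)\approx189.0<192$.
\item $q=6$: $g(2656)\approx337<448$.
\end{itemize}

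\textbf{Main obstacle.} The difficulty is the block $q=5$, where the margin is only about $1.6\%$. There I would certify $1364/\log1365<192$ with a rigorous bound such as $\log1365>7.2$, which follows from $e^{7.2}<1340$. If this turned out too tight, I would fall back on the sharper bound $(n-a_n)/\log n$ together with the integrality of $\Delta_n$, or on the exact expression $n-1-\lfloor\log_n n!\rfloor$. The remaining blocks have comfortable slack.
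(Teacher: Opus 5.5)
Your proof is correct and follows the paper's route: drop $P(n)$ (i.e.\ use $\tht(n)\ge0$), bound $\Delta_n$ by an increasing function of $n$, check $n=2$ by hand, and compare at the right endpoint of each block of constant $q_n$; your $g(n)=(n-1)/\log n$ is marginally sharper than the paper's $n/\log n$. The only real difference is how the tight $q=5$ row is certified: the paper gets a rational lower bound for $\log1365$ from $\log2=2\operatorname{arctanh}(1/3)>56/81$ and $\log t>2(t-1)/(t+1)$, whereas your $e^{7.2}<1340$ is true ($e^{7.2}\approx1339.4$) but would itself need a rigorous certificate, and you only need the much weaker $\log1365>1364/192\approx7.104$.
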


\begin{proof}
For $n=2$, $A_2=4$, so $A_2-1=3$, $\beta_2(3)=2$, and $\Delta_2=0$.

Let $3\le n\le2656$ and suppose $\Delta_n>0$. Since $P(n)\ge1$,
\cref{lem:center,lem:beta} gives
\[
\Delta_n<n-\log_n(n!)<\frac{n}{\log n}.
\]
The function $x/\log x$ is increasing for $x\ge3$. For each possible value of
$q_n$ in this interval, let $N$ be the largest covered endpoint. The following
rational lower bounds $L<\log N$ satisfy
\[
\frac NL<(q+1)2^q.
\]
\begin{center}
\begin{tabular}{@{}rrlr@{}}\toprule
$q$ & $N$ & rational $L<\log N$ & $(q+1)2^q$\\\midrule
1&5&$4/3$&4\\
2&21&$8/3$&12\\
3&85&$4$&32\\
4&341&$16/3$&80\\
5&1365&$10(56/81)+682/2389$&192\\
6&2656&$22/3$&448\\\bottomrule
\end{tabular}
\end{center}
For every row except $q=5$, the bound follows from $\log2>2/3$ and a power of
two below $N$. For $q=5$, the positive expansion
\[
\log2=2\operatorname{arctanh}(1/3)>56/81
\]
and the elementary inequality $\log t>2(t-1)/(t+1)$ for $t>1$ give
\[
\log1365
=10\log2+\log(1365/1024)
>10(56/81)+682/2389.
\]
Thus $\Delta_n<(q_n+1)2^{q_n}$ throughout the small range.
\end{proof}

\subsection{The one-sided converse}

The reverse direction is the conceptual hinge of the construction. A one-sided
lower bound on the Chebyshev error is enough to force RH. The reason is a
Landau-type positivity principle for Mellin transforms, and we give that
argument in full so the machine is not resting on an unstated equivalence.

\begin{lemma}[Positivity singularity]
\label{lem:landau}
Let $g\ge0$ be locally integrable on $[X,\infty)$ and suppose
\[
H(s)=\int_X^\infty g(x)x^{-s-1}\,dx
\]
converges for some real $s$. If its finite real abscissa of convergence is
$\sigma_c$, then $H$ cannot extend holomorphically through the real point
$\sigma_c$.
\end{lemma}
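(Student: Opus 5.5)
This is Landau's theorem for Mellin transforms with nonnegative kernels, and I would prove it by contradiction through a Taylor expansion at a point just to the right of $\sigma_c$.

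\textbf{Setup.} Suppose $H$ extends holomorphically to a disc $|s-\sigma_c|<2\varepsilon$. On the half-plane $\operatorname{Re}s>\sigma_c$ the integral converges absolutely, because $g\ge0$ makes convergence at a real point equivalent to absolute convergence there. Hence $H$ is holomorphic on that half-plane, and it may be differentiated under the integral sign.

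\textbf{Taylor coefficients.} Fix the real point $a=\sigma_c+\varepsilon$. The derivatives there are
\[
H^{(k)}(a)=\int_X^\infty g(x)(-\log x)^k x^{-a-1}\,dx .
\]
Since $H$ is holomorphic on the union of the half-plane and the small disc, the Taylor series at $a$ has radius of convergence strictly greater than $\varepsilon$. Concretely, the distance from $a$ to the boundary of that union exceeds $\varepsilon$, since the disc of radius $2\varepsilon$ about $\sigma_c$ covers the part near the real axis. So the series converges at the real point $s=a-\varepsilon-\eta=\sigma_c-\eta$ for some small $\eta>0$. Writing $a-s=\varepsilon+\eta=:h>0$, this gives
\[
H(s)=\sum_{k\ge0}\frac{h^k}{k!}\int_X^\infty g(x)(\log x)^k x^{-a-1}\,dx .
\]
The sign works out because $(-1)^k(s-a)^k=h^k$.

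\textbf{Interchange.} Every term in this double series is nonnegative; here I would take $X\ge1$ so that $\log x\ge0$, and otherwise split off the compact piece $[X,1]$, which is harmless. Tonelli's theorem therefore lets me swap the sum and the integral:
\[
\sum_k \frac{h^k}{k!}(\log x)^k x^{-a-1}=x^{h}x^{-a-1}=x^{-s-1}.
\]
Consequently $\int_X^\infty g(x)x^{-(\sigma_c-\eta)-1}\,dx<\infty$, so the defining integral converges at a real point below $\sigma_c$. This contradicts the definition of $\sigma_c$ as the abscissa of convergence.

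\textbf{Main obstacle.} The delicate step is justifying that the Taylor radius at $a$ exceeds $\varepsilon$. This needs $H$ holomorphic on a full disc of radius slightly larger than $\varepsilon$ around $a$. That disc is covered by the half-plane together with the assumed neighbourhood of $\sigma_c$, provided the neighbourhood radius is chosen comparable to $\varepsilon$; one then takes $\eta$ small relative to $\varepsilon$. A secondary point is the sign of $\log x$ on $[X,1)$ when $X<1$. That piece contributes an entire function and converges for all $s$, so it can be discarded at the outset.
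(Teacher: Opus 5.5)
Your proof is correct and follows essentially the same route as the paper. You expand the continued $H$ in a Taylor series at a real point just right of $\sigma_c$, note that the coefficients are nonnegative logarithmic moments, and use the assumed continuation to push the radius of convergence past $\sigma_c$ (with your choices the radius is in fact at least $\sqrt5\,\varepsilon$). You then interchange sum and integral by Tonelli to get convergence of the defining integral to the left of $\sigma_c$. Beyond the different constants, your one addition is handling the sign of $\log x$ on $[X,1)$, which the paper leaves tacit because $X$ is large in its application.
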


\begin{proof}
Assume such a continuation exists in a disk of radius $\varepsilon$ about
$\sigma_c$. Put $\sigma_0=\sigma_c+\varepsilon/4$. Choose
$0<\delta<\varepsilon/4$. Since $\sigma_0-\delta>\sigma_c$, the defining
integral converges at $\sigma_0-\delta$. For each fixed $k$,
$(\log x)^k\ll_{k,\delta}x^\delta$, so the logarithmic moments below converge
absolutely and differentiation under the integral sign is justified. Thus the
Taylor series of $H(\sigma_0-z)$ converges for $|z|<\varepsilon/2$ and has
coefficients
\[
\frac1{k!}\int_X^\infty g(x)x^{-\sigma_0-1}(\log x)^k\,dx\ge0.
\]
At $z=3\varepsilon/8$, monotone convergence permits interchange of sum and
integral, so the finite Taylor value equals the defining integral at
$\sigma_0-z<\sigma_c$, contradicting the definition of $\sigma_c$.
\end{proof}

\begin{proposition}[One-sided Chebyshev converse]
\label{prop:onewall}
If
\begin{equation}
\label{eq:onewall}
\tht(x)-x\ge -C\sqrt{x}\log^2x
\end{equation}
for all sufficiently large $x$ and some $C>0$, then RH holds.
\end{proposition}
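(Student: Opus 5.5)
The plan is the classical Landau argument, with \cref{lem:landau} as the engine. Fix $C$ and $X\ge2$ such that \eqref{eq:onewall} holds for $x\ge X$, and set
\[
g(x)=\tht(x)-x+C\sqrt{x}\log^2x\ \ge0\qquad(x\ge X).
\]
The function $g$ is locally integrable. Since $g(x)=O(x)$ by Chebyshev's elementary bound, the integral $H(s)=\int_X^\infty g(x)x^{-s-1}\,dx$ converges for real $s>1$. So \cref{lem:landau} applies once we know its abscissa $\sigma_c$ is finite, and $\sigma_c\le1$.

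\textbf{Continuation of $H$.} I will split $H$ into three pieces.

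\emph{The $\tht$ piece.} Partial summation gives
\[
\int_1^\infty\tht(x)x^{-s-1}\,dx=\frac1s\sum_p(\log p)\,p^{-s}
\qquad(\operatorname{Re}s>1).
\]
Write $\sum_p(\log p)p^{-s}=-\zeta'(s)/\zeta(s)-h(s)$, where $h(s)=\sum_{k\ge2}\sum_p(\log p)p^{-ks}$. The series for $h$ converges absolutely and locally uniformly on $\operatorname{Re}s>1/2$, so $h$ is holomorphic there. Subtracting the finite integral over $[1,X]$ only adds an entire function.

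\emph{The $x$ piece.} We have $\int_X^\infty x\cdot x^{-s-1}\,dx=X^{1-s}/(s-1)$.

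\emph{The $C\sqrt{x}\log^2x$ piece.} This integral converges absolutely and is holomorphic for $\operatorname{Re}s>1/2$.

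Consequently, on $\operatorname{Re}s>1$,
\[
H(s)=-\frac{\zeta'(s)}{s\,\zeta(s)}-\frac{X^{1-s}}{s-1}+F(s),
\]
with $F$ holomorphic on $\operatorname{Re}s>1/2$. At $s=1$ the function $-\zeta'/\zeta$ has a simple pole with residue $1$, so both polar terms have residue $1$ and the pole cancels. Hence the right side is meromorphic on $\operatorname{Re}s>1/2$. Its only possible poles are at zeros $\rho$ of $\zeta$. At such a zero, $-\zeta'/(s\zeta)$ has residue $-m_\rho/\rho\ne0$, where $m_\rho$ is the multiplicity, so each zero genuinely produces a pole.

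\textbf{Pushing the abscissa down to $1/2$.} Suppose $\sigma_c>1/2$. The right side above is holomorphic at every real point of $(1/2,1]$, provided $\zeta$ has no real zeros in $(1/2,1)$. For that I will use
\[
(1-2^{1-\sigma})\zeta(\sigma)=\sum_{n\ge1}(-1)^{n+1}n^{-\sigma}>0
\qquad(0<\sigma<1),
\]
where positivity follows by pairing consecutive terms of the alternating series; since $1-2^{1-\sigma}<0$, this gives $\zeta(\sigma)<0$. At the real point $\sigma_c$ the meromorphic expression is therefore holomorphic in a small disk. It agrees with $H$ on the part of that disk with $\operatorname{Re}s>\sigma_c$, so it is a holomorphic continuation of $H$ through $\sigma_c$. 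This contradicts \cref{lem:landau}, so $\sigma_c\le1/2$.

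The defining integral is then holomorphic on all of $\operatorname{Re}s>1/2$ and equals the meromorphic expression there. Hence that expression has no poles, and $\zeta(s)\ne0$ for $\operatorname{Re}s>1/2$. The functional equation reflects any nontrivial zero $\rho$ to $1-\rho$. So there are no nontrivial zeros with $\operatorname{Re}\rho<1/2$ either, and RH follows.

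\textbf{Main obstacle.} The analytic step needing care is the bookkeeping of the continuation: checking the exact cancellation of the pole at $s=1$, confirming that $h$ and the $\sqrt{x}\log^2x$ transform are holomorphic on the full half-plane $\operatorname{Re}s>1/2$, and justifying the identity theorem on the connected region $\{\operatorname{Re}s>1/2\}\setminus\{\text{zeros of }\zeta\}$. Positivity of $g$ is used only through \cref{lem:landau}, and the real-axis nonvanishing of $\zeta$ is the one extra input it needs.
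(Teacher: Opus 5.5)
Your proposal is correct and follows essentially the same route as the paper: the Mellin transform of the nonnegative function $g$, continuation through $-\zeta'/\zeta$ with the prime-power correction holomorphic on $\Re s>1/2$, cancellation of the pole at $s=1$, real-axis nonvanishing of $\zeta$ via the alternating $\eta$-series, \cref{lem:landau} to push the abscissa down to $1/2$, and finally the nonzero residues $-m_\rho/\rho$ together with the functional equation. The differences are cosmetic. You integrate from $X$, so your polar term is $X^{1-s}/(s-1)$, and you only note that the $C\sqrt{x}\log^2x$ transform is holomorphic rather than computing it as $2C/(s-1/2)^3$ up to an entire function. One wording fix: instead of ``both polar terms have residue $1$'', say that $-\zeta'/(s\zeta)$ and $X^{1-s}/(s-1)$ each have residue $1$, so their difference is regular at $s=1$.
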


\begin{proof}
For $\Re s>1$, absolute convergence gives
\begin{equation}
\label{eq:mellinE}
\int_1^\infty (\tht(x)-x)x^{-s-1}\,dx
=\frac{\mathcal P(s)}s-\frac1{s-1},
\qquad
\mathcal P(s)=\sum_p\frac{\log p}{p^s}.
\end{equation}
The logarithmic derivative of the Euler product gives
\[
\mathcal P(s)=-\frac{\zeta'(s)}{\zeta(s)}-Q(s),
\qquad
Q(s)=\sum_p\sum_{k\ge2}\frac{\log p}{p^{ks}}.
\]
The series defining $Q$ is locally normally convergent for $\Re s>1/2$. The
pole at $s=1$ cancels in \eqref{eq:mellinE}. There are no real zeros of
$\zeta(s)$ for $s>1/2$: for $s>1$ use the Euler product, while for $1/2<s<1$
use
\[
\zeta(s)=\frac{\eta(s)}{1-2^{1-s}},
\]
where $\eta(s)>0$ by grouping the alternating Dirichlet series in consecutive
pairs.

Choose $X$ so that \eqref{eq:onewall} holds and define
\[
g(x)=\tht(x)-x+C\sqrt x\log^2x\ge0\qquad(x\ge X).
\]
Its Mellin transform agrees, up to an entire function contributed by the
bounded interval $[1,X]$, with
\begin{equation}
\label{eq:Hcont}
\frac{\mathcal P(s)}s-\frac1{s-1}+\frac{2C}{(s-1/2)^3}.
\end{equation}
This continuation has no real singularity for $s>1/2$. The integral initially
converges for $\Re s>1$, since trivially $\tht(x)=O(x\log x)$. By
\cref{lem:landau}, the real abscissa of convergence of the nonnegative
integral is therefore at most $1/2$ (or $-\infty$). Consequently the integral
itself is holomorphic throughout $\Re s>1/2$.

If $\rho$ were a zero of $\zeta$ of multiplicity $r$ with $\Re\rho>1/2$, then
$-\zeta'/\zeta$ would have residue $-r$ at $\rho$, so \eqref{eq:Hcont} would
have nonzero residue $-r/\rho$. This contradicts holomorphy of the integral.
Hence no nontrivial zero lies to the right of the critical line. The
functional equation then excludes zeros to the left, proving RH.
\end{proof}

\begin{theorem}[Integer criterion for RH]
\label{thm:criterion}
The Riemann hypothesis is equivalent to
\begin{equation}
\label{eq:criterion2}
\Delta_n<(q_n+1)2^{q_n}\quad\text{for every integer }n\ge2.
\end{equation}
Equivalently,
\[
\left\lfloor\frac{\Delta_n}{2^{q_n}}\right\rfloor\le q_n
\quad(n\ge2).
\]
\end{theorem}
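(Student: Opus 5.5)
The two directions are handled separately. The equivalence of the two displayed forms is immediate: for integers, $\Delta_n<(q_n+1)2^{q_n}$ holds iff $\lfloor\Delta_n/2^{q_n}\rfloor\le q_n$.

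\emph{Forward direction.} Assume RH. For $n\ge2657$, \cref{prop:forwardlarge} gives \eqref{eq:criterion2}. For $2\le n\le2656$, \cref{prop:forwardsmall} gives it unconditionally. Together these cover every $n\ge2$, so nothing further is needed here.

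\emph{Reverse direction.} Assume \eqref{eq:criterion2} holds for all $n\ge2$. The goal is the one-sided bound \eqref{eq:onewall} for some $C>0$, after which \cref{prop:onewall} yields RH. The argument has three steps.
\begin{enumerate}[label=(\arabic*)]
\item Convert the digit deficit into a lower bound on $E(n)$. By \cref{lem:beta,lem:center}, $\beta_n(A_n-1)\le\ell_n(A_n-1)=\ell_n(A_n)=1+\lfloor\log_nA_n\rfloor$. Hence
\[
\Delta_n\ge n-2-\log_nA_n=-2-\frac{E(n)+a_n}{\log n}
\]
by \eqref{eq:center_exact}. This rearranges to $E(n)\ge-(\Delta_n+2)\log n-a_n$.
\item Bound the deficit. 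Since $2^{q_n}\le\sqrt{3n}$ and $q_n+1\le1+\log_4(3n)$, the hypothesis gives $\Delta_n<(1+\log_4(3n))\sqrt{3n}=O(\sqrt n\log n)$. Combined with $a_n\le1+\log n$ from \eqref{eq:anbound}, we get $E(n)\ge-C_0\sqrt n\log^2 n$ for integers $n\ge3$ and some absolute $C_0$.
\item Pass to real $x$. For $x\ge3$ put $n=\lfloor x\rfloor$. Then $\tht(x)=\tht(n)$ and $x<n+1$, so
\[
\tht(x)-x>E(n)-1\ge-C_0\sqrt n\log^2n-1\ge-C\sqrt x\log^2x
\]
for a slightly larger constant $C$. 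This is \eqref{eq:onewall}, and \cref{prop:onewall} then gives RH.
\end{enumerate}

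The only delicate point is in step (1). One must use the upper bound $\beta_n\le\ell_n$ from \cref{lem:beta}, not the lower bound. One must also check that the truncation $\max\{\cdot,0\}$ in \eqref{eq:Deltaq} only helps: $\Delta_n$ is at least the untruncated quantity, so the inequality survives truncation. I expect this sign bookkeeping to be the main obstacle. Since no explicit constants are required in the converse, the rest is routine.
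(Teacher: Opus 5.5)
Your proof is correct and follows essentially the same route as the paper. The forward direction comes from \cref{prop:forwardlarge,prop:forwardsmall}; the converse derives $E(n)\ge-(\Delta_n+2)\log n-a_n=-O(\sqrt n\log^2 n)$, passes to real $x$ via $n=\lfloor x\rfloor$, and applies \cref{prop:onewall}. The paper differs only cosmetically: it routes through the ordinary-digit deficit $\delta_n\le\Delta_n$ and the bound $\ell_n(A_n-1)-1<\log_n A_n$. Your identity $\ell_n(A_n-1)=\ell_n(A_n)$ needs $n\ge3$, but the trivial inequality $\ell_n(A_n-1)\le\ell_n(A_n)$ is all you actually use.
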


\begin{proof}
The forward implication is \cref{prop:forwardlarge,prop:forwardsmall}.

Conversely, let
\[
D_n=\ell_n(A_n-1),
\qquad
\delta_n=\max\{n-1-D_n,0\}.
\]
By \cref{lem:beta}, $\delta_n\le\Delta_n$. From \eqref{eq:criterion2},
$2^{q_n}\le\sqrt{3n}$, and $q_n=O(\log n)$, we obtain
\[
\delta_n=O(\sqrt n\log n).
\]
Since $D_n\ge n-1-\delta_n$ and
\[
D_n-1\le\frac{\log(A_n-1)}{\log n}<\frac{\log A_n}{\log n},
\]
\cref{lem:center} gives
\[
E(n)\ge-(\delta_n+2)\log n-a_n
=-O(\sqrt n\log^2n).
\]
For real $x\ge3$ put $n=\lfloor x\rfloor$. Since $\tht(x)=\tht(n)$,
\[
\tht(x)-x=E(n)-(x-n)\ge -O(\sqrt x\log^2x).
\]
Now apply \cref{prop:onewall}.
\end{proof}

\section{Five-register implementation of the criterion}

We next implement the criterion by a five-register counter program. The
registers are denoted
\[
(m,j,x,f,v).
\]
Registers are reused across phases with disjoint lifetimes, reducing the
number of persistent counters required by the implementation. The counters are
stored on one binary tape by the fixed backend described in
\cref{sec:backend}. At the boundary of a call to the main stage, the invariant
is
\begin{equation}
\label{eq:mainentry}
(m,j,x,f,v)=(M,0,0,0,0),\qquad M\ge0.
\end{equation}
That call tests the endpoint
\[
n=M+2.
\]
A successful return has
\[
(m,j,x,f,v)=(n-1,0,0,0,0),
\]
so the backend's natural restart tests $n+1$ on the next call.

\subsection{Primitive counter macros}

The source in \cref{app:source} ultimately reduces all arithmetic to increment
and conditional decrement. Several higher-level operations share the same
counter loops. We first state the contracts of these shared macros.

\begin{lemma}[Transfer and preserved addition]
\label{lem:transfer}
The macro \texttt{move(a,b)} maps $(a,b)$ to $(0,a+b)$. If \texttt{tmp}=0, the
macro \texttt{addval(a,b,tmp)} maps $(a,b,0)$ to $(a,a+b,0)$.
\end{lemma}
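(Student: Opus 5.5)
The plan is to prove both contracts by induction on the loop counter, after fixing the standard counter-machine definitions of the two macros. I take \texttt{move(a,b)} to be the loop ``while $a>0$: decrement $a$; increment $b$'', which is a conditional decrement followed by an increment. I take \texttt{addval(a,b,tmp)} to be the composition of two loops. The first loop moves $a$ into both $b$ and \texttt{tmp}: each iteration decrements $a$ and increments $b$ and \texttt{tmp}. The second loop is \texttt{move(tmp,a)}, which restores $a$. Before arguing, I will check these readings against the source in \cref{app:source}, since the contract depends on the exact macro bodies.

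For \texttt{move}, I will use the loop invariant that $a+b$ is constant and $a\ge0$. Each iteration lowers $a$ by one and raises $b$ by one, so the loop terminates after exactly $a_0$ iterations. At exit $a=0$, so $b=a_0+b_0$, giving the map $(a,b)\mapsto(0,a+b)$. Termination is immediate because $a$ strictly decreases and the conditional decrement tests $a=0$ before decrementing.

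For \texttt{addval}, starting from $(a_0,b_0,0)$, the first loop preserves two quantities: $a+\texttt{tmp}=a_0$ and $b-\texttt{tmp}=b_0$. At exit $a=0$, so the state is $(0,a_0+b_0,a_0)$. The second loop is a \texttt{move} from \texttt{tmp} to $a$. By the first contract it produces $(a_0,a_0+b_0,0)$ and leaves $b$ untouched. This is where the hypothesis $\texttt{tmp}=0$ is used: a nonzero initial \texttt{tmp} would be added back into $a$.

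The only delicate point is confirming that the compiled macro bodies really touch only the named registers. In particular, no other register may be incremented, and in \texttt{addval} the second loop must not alias $b$. That is a syntactic check of \cref{app:source} rather than a mathematical difficulty, so I expect no real obstacle beyond stating the invariants precisely.
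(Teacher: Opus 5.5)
Your proof is correct and is the same argument the paper gives: \texttt{move} transfers one unit from $a$ to $b$ per iteration until $a=0$, and \texttt{addval} copies $a$ into both $b$ and \texttt{tmp} and then restores $a$ with \texttt{move(tmp,a)}. Your explicit invariants ($a+b$ constant for \texttt{move}; $a+\mathtt{tmp}=a_0$ and $b-\mathtt{tmp}=b_0$ for the first loop of \texttt{addval}) and your note on where $\texttt{tmp}=0$ is used only make the paper's one-line description more precise.
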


\begin{proof}
The first loop decrements $a$ once and increments $b$ once until $a=0$. The
second decrements $a$ while incrementing both $b$ and \texttt{tmp}, then
transfers \texttt{tmp} back into $a$.
\end{proof}

\begin{lemma}[Fold]
\label{lem:fold}
The macro \texttt{fold()} satisfies
\begin{equation}
\label{eq:fold}
(x,j,v)\longmapsto(x+j,j+v,0).
\end{equation}
\end{lemma}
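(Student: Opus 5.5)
Since the source of \texttt{fold()} sits in \cref{app:source} and is not reproduced here, the plan assumes its natural shape and proves it via loop invariants in the style of \cref{lem:transfer}. The intended implementation is a single loop that drains $j$ into both $x$ and a scratch copy, followed by a \texttt{move} that drains $v$ into $j$. Concretely, I would write \texttt{fold()} as two phases. First, while $j>0$, decrement $j$ and increment both $x$ and $v$. Second, apply \texttt{move(v,j)}. (If the source instead first moves $v$ into $j$ and then copies, the same argument applies with the phases reordered; I would check the actual listing and adapt.)

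For the first phase, I would use the invariant that $j+k=j_0$, $x=x_0+k$ and $v=v_0+k$ after $k$ iterations. Each iteration preserves it, because it lowers $j$ by one and raises $x$ and $v$ by one. The loop terminates because $j$ strictly decreases, and it stops when $j=0$, that is, at $k=j_0$. The state is then $(x_0+j_0,\,0,\,v_0+j_0)$.

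For the second phase, \cref{lem:transfer} applied to \texttt{move(v,j)} maps $(v,j)=(v_0+j_0,0)$ to $(0,v_0+j_0)$. Combining the two phases gives $(x,j,v)\mapsto(x_0+j_0,\,j_0+v_0,\,0)$, which is \eqref{eq:fold}. Finally, I would note that no other register ($m$, $f$) is touched, so the contract composes with the surrounding phases.

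The only real obstacle is matching this abstract loop to the exact macro text. One issue is the order in which $v$ is consumed relative to $j$. The other is whether the copy of $j$ passes through $v$ itself, which is what makes the contract $j\mapsto j+v$ rather than $j\mapsto v$. The point to verify is that using $v$ as the scratch register for the copy is exactly what produces the summed value $j+v$ in $j$. That also explains why the lemma is stated for arbitrary $v$ rather than $v=0$.
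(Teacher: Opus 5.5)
Your proposal is correct and matches the paper's proof. The actual macro is exactly the loop you assumed: conditionally decrement $j$, increment $v$ and $x$, continue. It is followed by \texttt{move(v,j)}, and the paper's proof states in words the same invariant you make explicit, so $x$ gains $j_0$, $j$ ends at $j_0+v_0$, and $v$ ends at $0$.
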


\begin{proof}
Its first loop consumes the old $j$, incrementing both $x$ and $v$. The final
transfer moves all of $v$ into $j$. Thus $x$ gains $j_0$, the new $j$ is
$j_0+v_0$, and $v$ is zero.
\end{proof}

\begin{lemma}[Transform]
\label{lem:transform}
Assume $m=0$. The macro \texttt{transform()} satisfies
\begin{equation}
\label{eq:transform}
(x,j,v,m)\longmapsto
\bigl(x(j+v+1)+j,\ j+v,\ 0,\ 0\bigr),
\end{equation}
where the variables on the right denote their input values.
\end{lemma}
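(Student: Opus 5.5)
We do not yet see the source of \texttt{transform()}, so the plan assumes it is built from the shared macros just as \texttt{fold()} is: an outer loop over a copy of $x$, where each pass adds the current $j+v+1$ into an accumulator, followed by one \texttt{fold()}. The proof then reduces to \cref{lem:transfer,lem:fold} and a loop invariant.

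\textbf{Step 1: stash $x$.} Since $m=0$ on entry, \texttt{move(x,m)} sends $(x,m)$ to $(0,x_0)$. Now $m$ serves as the outer loop counter and $x$ is a zero accumulator.

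\textbf{Step 2: the outer loop.} While $m>0$, decrement $m$ and perform an addition block that increases $x$ by $j+v+1$. This block should leave $j+v$ and the pair's combined role unchanged. The natural realization is one explicit increment of $x$ followed by a \texttt{fold()}-like pass that adds $j$ to $x$ and returns $j+v$ to the pair.

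\textbf{Step 3: the invariant.} We will prove that after $k$ passes,
\[
m=x_0-k,\qquad x=k(j_0+v_0+1),
\]
and the pair $(j,v)$ has the same total $j_0+v_0$, stored in a canonical split. Termination holds because $m$ strictly decreases. At exit, $k=x_0$ and $x=x_0(j_0+v_0+1)$.

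\textbf{Step 4: the closing fold.} One last application of \cref{lem:fold} gives
\[
x\mapsto x_0(j_0+v_0+1)+j_0,\qquad j\mapsto j_0+v_0,\qquad v\mapsto 0.
\]
Here the added $j$ must be the \emph{original} $j_0$, not $j_0+v_0$. With $m=0$, this yields \eqref{eq:transform}.

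\textbf{Main obstacle: the split of $(j,v)$ inside the loop.} \cref{lem:fold} adds $j$ to $x$ but also moves $v$ into $j$. Repeated folds would therefore add $j_0+v_0$ on later passes, while only $j_0$ should be added at the very end. So the key is to identify the exact split-restoration discipline: either the block alternates roles so that each pass adds exactly $j+v$, or the loop adds the sum $j+v$ via \texttt{addval} with a zero temporary.

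We will first attempt the invariant "each pass adds $j_0+v_0+1$, then the pair is restored to $(j_0,v_0)$." If the source instead keeps the pair in the form $(j_0+v_0,0)$ after the first pass, the remedy is to shift the bookkeeping. In that case the final answer must still come out as $x_0(j_0+v_0+1)+j_0$, and we would reverify the invariant against the actual loop order in \cref{app:source}, adjusting the invariant's constant term accordingly.
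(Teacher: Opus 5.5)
There is a genuine gap, at exactly the point you flag as the main obstacle, and the invariant you propose to try first is false. By \cref{lem:fold}, a fold sends $(j,v)$ to $(j+v,0)$, so it never restores a split $(j_0,v_0)$ with $v_0>0$. Under the ordering you guess (each pass is \texttt{x.inc} then a fold, followed by one closing fold), the first pass adds $1+j_0$ rather than $1+j_0+v_0$. Every later pass adds $1+j_0+v_0$. The closing fold then sees $(j,v)=(j_0+v_0,0)$ and adds $j_0+v_0$, not the original $j_0$. So your Step~3 invariant $x=k(j_0+v_0+1)$ and your Step~4 claim both fail whenever $v_0>0$ and $x_0\ge1$. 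The proposal then defers the repair (``adjusting the invariant's constant term accordingly'') rather than carrying it out. The \texttt{addval} alternative is not what the source does and is not needed.

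The missing idea is that only the first fold ever sees the original $v_0$, and that first fold is where the constant term $j_0$ comes from. In the source, \texttt{transform()} is \texttt{move(x,m)} followed by a post-tested loop: \texttt{fold()}, then \texttt{m.decnz}, and on a positive outcome \texttt{x.inc} and repeat. This gives $x_0+1$ folds and $x_0$ increments. The first fold turns $(x,j,v)=(0,j_0,v_0)$ into $(j_0,\,j_0+v_0,\,0)$. The pair $(j,v)=(j_0+v_0,0)$ is fixed by every later fold, so each consumed unit of $m$ contributes exactly $1+(j_0+v_0)$ to $x$. The correct invariant, at each test of $m$ after $k$ units have been consumed, is $(x,j,v,m)=(j_0+k(j_0+v_0+1),\,j_0+v_0,\,0,\,x_0-k)$. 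At exit, $k=x_0$ and $m=0$, which gives \eqref{eq:transform}. Increments of $x$ do not touch $j$ or $v$, so any interleaving of $x_0$ increments with $x_0+1$ folds yields the same total $x_0+j_0+x_0(j_0+v_0)$. That is why your guessed ordering would also reach the right answer once the bookkeeping is corrected.
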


\begin{proof}
Move the input $x_0$ into $m$, so $x=0$. The post-tested loop first applies
\cref{lem:fold}, giving $x=j_0$ and $j=j_0+v_0$. If $m>0$, one unit of $m$ is
consumed, $x$ is incremented, and the loop repeats. After the first fold,
every further fold sees $v=0$ and adds the constant $j_0+v_0$ to $x$; the
preceding increment contributes one more. There are exactly $x_0$ such
additional iterations. Hence
\[
x=j_0+x_0(j_0+v_0+1),
\]
with the remaining coordinates as stated.
\end{proof}

\begin{lemma}[Split division]
\label{lem:divide}
Assume $m=0$ and put $J=j+v+1$. The macro \texttt{divide()} maps
\begin{equation}
\label{eq:divide}
(x,j,v,m)\longmapsto
\left(\left\lfloor\frac{x}{J}\right\rfloor,
 x\bmod J,
 J-1-(x\bmod J),0\right).
\end{equation}
The input $v$ need not be zero.
\end{lemma}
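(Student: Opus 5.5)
The source of \texttt{divide()} is not reproduced at this point, so I will argue from the natural structure it must have, consistent with the other macros: first transfer $x$ into $m$ (by \cref{lem:transfer}, since $m=0$), leaving $x=0$; then run a loop over $m$ in which each unit of $m$ is consumed while one unit is moved from $v$ to $j$. When $v$ is found empty, the pair is ``rolled over'': $j$ is moved back into $v$ and $x$ is incremented. The statement should then follow from a loop invariant on the state $(x,j,v,m)$ rather than from any explicit computation.

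\textbf{Invariant.} Throughout the loop I would maintain, with $x_0$ the original value,
\[
m+xJ+j=x_0,\qquad j+v=J-1.
\]
Initially we have $x=0$ and $m=x_0$, and the original values satisfy $j_0+v_0=J-1$. So the second identity holds from the start, even though $j_0$ need not be zero. This is exactly why the input $v$ need not vanish: the pair $(j,v)$ stores $J-1$ split in an arbitrary way. For the first identity I would treat the start-up carefully. If the macro does not reset $j$, the natural form is $m+xJ+j-j_0=x_0$. I expect the macro to normalize first, via \texttt{move(j,v)}, so that the pair becomes $(0,J-1)$; the invariant then holds as written.

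\textbf{Preservation.} I would check the two kinds of step separately.
\begin{itemize}
\item In an ordinary step, $m$ decreases by $1$, $j$ increases by $1$ and $v$ decreases by $1$. Both sums are unchanged.
\item In a rollover, triggered when $v=0$ and hence $j=J-1$, we have $j\mapsto 0$, $v\mapsto J-1$ and $x\mapsto x+1$. The quantity $xJ+j$ changes by $+J-(J-1)=+1$. So the rollover must be coupled with consuming one unit of $m$: the unit that would have pushed $j$ to $J$ is precisely the one that carries.
\end{itemize}
I would order the test (``is $v=0$?'') before the decrement so that this accounting is exact.

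\textbf{Termination.} On termination $m=0$ and $0\le j\le J-1$, with $xJ+j=x_0$. Uniqueness of Euclidean division then gives $x=\lfloor x_0/J\rfloor$ and $j=x_0\bmod J$. From $j+v=J-1$ we get $v=J-1-(x_0\bmod J)$, which is the claimed map \eqref{eq:divide}. Termination itself is clear, because $m$ strictly decreases on every pass.

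\textbf{Main obstacle.} The hard part is matching this invariant to the precise control flow of the actual macro, in particular its boundary behaviour. I would check three boundary cases:
\begin{itemize}
\item whether the carry happens when $v$ hits $0$ before or after a decrement;
\item the case $J=1$, where every unit carries;
\item the case $x_0=0$.
\end{itemize}
I would verify these by tracing the macro on tiny inputs, for example $J=1,2$ and $x_0=0,1,J-1,J$, to pin down the off-by-one placement. I would then adjust the invariant, e.g.\ to $m+xJ+j+[\text{pending carry}]=x_0$, if the carry is deferred.
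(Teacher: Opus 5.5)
Your proposal is correct and is essentially the paper's argument. The macro in the appendix has exactly the structure you guessed: \texttt{move(j,v)}, then \texttt{move(x,m)}, then a loop on $m$ whose body either moves one unit from $v$ to $j$ or, when $v=0$, moves $j$ back into $v$ and increments $x$. Your invariant $m+xJ+j=x_0$, $j+v=J-1$, closed by Euclidean uniqueness, is a formalized version of the paper's ``each complete block of $J$ consumed units increments $x$ once'' count. The carry happens in the same iteration that consumes the unit of $m$, so no pending-carry correction is needed, and the boundary cases $J=1$ and $x_0=0$ follow directly.
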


\begin{proof}
The first transfer combines $j$ into $v$, leaving $(j,v)=(0,J-1)$. The
dividend $x$ is moved into $m$, so $x=0$. Each consumed unit of $m$ advances
one position through a cycle of length $J$: while $v>0$, one unit moves from
$v$ to $j$; when $v=0$, the next consumed unit transfers $j=J-1$ back to $v$
and increments the quotient $x$. Therefore each complete block of $J$ dividend
units increments $x$ once, and the final incomplete block of length $r$ leaves
$(j,v)=(r,J-1-r)$. This is exactly \eqref{eq:divide}.
\end{proof}

The following compatibility is central to the implementation.

\begin{corollary}[Division/reconstruction compatibility]
\label{cor:reconstruct}
If \texttt{divide()} is applied to $B$ with represented divisor $J$, and its
output is fed to \texttt{transform()}, then the result is $(B,J-1,0,0)$.
\end{corollary}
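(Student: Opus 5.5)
This is a direct composition of \cref{lem:divide} with \cref{lem:transform}. The plan is to track the register tuple $(x,j,v,m)$ through both macros and check that the hypotheses of the second lemma hold at the interface.

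First, interpret the hypothesis. ``Applied to $B$ with represented divisor $J$'' means the input to \texttt{divide()} is $(x,j,v,m)=(B,j_0,v_0,0)$ with $j_0+v_0+1=J$. The split between $j_0$ and $v_0$ is arbitrary, which \cref{lem:divide} permits. Writing $B=Jk+r$ with $k=\lfloor B/J\rfloor$ and $0\le r<J$, that lemma gives the output
\[
(x,j,v,m)=(k,\ r,\ J-1-r,\ 0).
\]

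Next, check the precondition of \cref{lem:transform}. It requires only $m=0$, which holds in the output above. Substituting $x=k$, $j=r$, $v=J-1-r$ into \eqref{eq:transform} gives the new $j$-coordinate
\[
j+v=r+(J-1-r)=J-1,
\]
and the new $x$-coordinate
\[
x(j+v+1)+j=k\cdot J+r=B.
\]
The coordinates $v$ and $m$ become $0$. The result is therefore $(B,J-1,0,0)$, as claimed.

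The only step needing care is bookkeeping, not mathematics. One must confirm that the ``represented divisor'' convention is the same in both lemmas: $J=j+v+1$ in \cref{lem:divide}, and the multiplier $j+v+1$ in \cref{lem:transform}. One must also confirm that no other register is disturbed between the two calls, in particular that $m=0$ is exactly what \texttt{divide()} leaves behind. Both facts are read directly off the stated contracts. Finally, I would remark that the divisor is returned in the canonical split $(j,v)=(J-1,0)$ rather than the original split $(j_0,v_0)$. This is why the statement records $J-1$ in $j$, and it is harmless because every later use depends only on $j+v$.
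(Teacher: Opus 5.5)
Your proposal is correct and matches the paper's proof: write $B=qJ+r$, take the divider output $(q,r,J-1-r,0)$ from \cref{lem:divide}, and substitute it into \eqref{eq:transform} to get $qJ+r=B$ with $j=J-1$ and $v=0$. One caveat on your closing aside: later uses do not depend only on $j+v$, since \texttt{transform()} also adds $j$ itself and the multiplication step in \cref{lem:productstep} relies on the canonical split $(J-1,0)$; that remark plays no role in proving the corollary, though.
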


\begin{proof}
Write $B=qJ+r$. The divider returns $(q,r,J-1-r)$. Then \eqref{eq:transform}
gives
\[
q(r+J-1-r+1)+r=qJ+r=B,
\]
and restores $j=J-1$, $v=0$.
\end{proof}

When $v=0$, \eqref{eq:transform} has another interpretation: if $x=B=A-1$ and
$j=J-1$, then
\[
x'=(A-1)J+(J-1)=AJ-1.
\]
Thus one call multiplies the represented positive value $A=x+1$ by $J$.

\subsection{One exact product step}

The self-sieving recurrence can now be implemented directly. Division
simultaneously tests divisibility and retains sufficient quotient/remainder
information to reconstruct the represented product without a preserved copy.

\begin{lemma}[Product-step invariant]
\label{lem:productstep}
Suppose $j=J-2$, $x=A_{J-1}-1$, and $m=v=0$. After \texttt{product\_step()},
\[
j=J-1,
\qquad
x=A_J-1,
\qquad
m=v=0.
\]
\end{lemma}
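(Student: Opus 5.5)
The plan is to follow the control flow of \texttt{product\_step()} as given in \cref{app:source}. We reduce each phase to one of the macro contracts already proved, and then use \cref{lem:selfsieve} to match the branch taken with the recurrence. We assume the step does three things. It increments $j$, so that $j=J-1$ and the represented divisor $j+v+1$ equals $J$. It calls \texttt{divide()} on $x=B=A_{J-1}-1$. It then branches on whether the remainder register is zero or not, and in either case rebuilds and multiplies through \texttt{transform()}.

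\textbf{Step 1 (divisibility test on the offset value).} After the increment, $m=v=0$ and $j=J-1$. \Cref{lem:divide} then yields $(x,j,v)=(\lfloor B/J\rfloor,\,B\bmod J,\,J-1-(B\bmod J))$. The machine stores $B=A-1$, not $A$, so the test must be translated. We have $J\mid A_{J-1}$ iff $B\equiv J-1\pmod J$, that is, iff $r=J-1$, that is, iff $v=0$ after the division. We will check that the branch in the source tests exactly this register, or equivalently $v$ or $j$ with the appropriate offset. By \cref{lem:selfsieve}, this branch is taken exactly when $J$ is composite, and for $J=4$ when $J\mid A_3$.

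\textbf{Step 2 (reconstruction and multiplication).} \Cref{cor:reconstruct} shows that the first \texttt{transform()} after the division restores $(x,j,v,m)=(B,J-1,0,0)$. A further \texttt{transform()} with $v=0$ sends $B=A-1$ to $AJ-1$, as in the remark following \cref{cor:reconstruct}.

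We expect the source may fuse these two calls. Applying \cref{lem:transform} once to the split state $(q,r,J-1-r)$ gives $q(J)+r=B$, so if the source fuses the calls, the restoration in that first call is the multiplication count we need to track. We will therefore write out the exact sequence of calls: two calls on the prime branch, giving $J^2A_{J-1}-1$, and one extra call on the divisible branch, giving $JA_{J-1}-1$. Each call must start with $m=0$, and this is guaranteed because every macro returns $m=0$. After the final call, $j=J-1$ and $v=0$, so the output is $x=A_J-1$ by \eqref{eq:selfsieve}, with $j=J-1$ and $m=v=0$.

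\textbf{Main obstacle.} The hard part is the exact bookkeeping of the source. We must confirm which register the branch inspects and that it corresponds to $r=J-1$ in the offset coordinate. We must also confirm how many \texttt{transform()} calls each branch executes, including the reconstruction call that consumes the split divisor instead of restoring it separately. To do this, we will first tabulate the register state after every macro call in both branches. We then check the hypotheses ($m=0$, and $v=0$ where required) of \cref{lem:transform,lem:divide} at each call.
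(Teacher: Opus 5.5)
Your mathematical skeleton matches the paper's proof. After \texttt{j.inc} the represented divisor is $J$, and \cref{lem:divide} returns $(q,r,J-1-r)$. Then $v=0$ iff $r=J-1$ iff $J\mid A_{J-1}$, and \cref{lem:selfsieve} identifies $v>0$ with $J$ prime. The first \texttt{transform()} after the division reconstructs $B=A_{J-1}-1$ by \cref{cor:reconstruct}, and every later one, entered with $v=0$, multiplies the represented value by $J$.

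The gap is the part specific to this lemma: what \texttt{product\_step()} actually executes. You call it the ``main obstacle'' and instead stipulate two multiplying transforms when $J$ is prime and one when $J$ is composite. You chose those counts because \eqref{eq:selfsieve} requires them, so as written the argument is circular. Your tentative bookkeeping is also muddled. The ``fusion'' scenario does not arise, since a transform applied to the split divider output contributes no factor of $J$. And the phrase ``one extra call on the divisible branch'' suggests the conditional call sits on the composite branch, which is backwards.

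Reading the source in \cref{app:source} closes the gap. The routine is \texttt{j.inc}, \texttt{divide()}, \texttt{if\_nonzero(v, transform())}, \texttt{twice()}, where \texttt{if\_nonzero(v, body)} is \texttt{if\_decnz(v, [v.inc, body])}. If $v=0$, nothing is executed. If $v>0$, the decrement is immediately undone, so the guarded \texttt{transform()} sees $(q,r,J-1-r)$ intact and reconstructs $B$.

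Hence the extra transform lies on the \emph{prime} branch. A prime $J$ gets three transforms (reconstruct, then $\times J$ twice), giving $J^2A_{J-1}-1$. A composite $J$ skips the guarded call: the first call of \texttt{twice()} reconstructs and the second multiplies once, giving $JA_{J-1}-1$. Every macro enters and exits with $m=0$, and the last transform leaves $(j,v)=(J-1,0)$, so \eqref{eq:selfsieve} gives the claim. With this step supplied, your argument coincides with the paper's.
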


\begin{proof}
The routine first increments $j$, so the represented divisor is $J$. Apply
\texttt{divide()}. If $J\mid A_{J-1}$, then $A_{J-1}-1\equiv J-1\pmod J$, so
the complementary remainder $v$ is zero. If $J\nmid A_{J-1}$, then the
remainder of $A_{J-1}-1$ is less than $J-1$, hence $v>0$. By
\cref{lem:selfsieve}, the latter case is exactly the prime case.

For a composite, the conditional transform is skipped. The first unconditional
transform reconstructs $A_{J-1}-1$ by \cref{cor:reconstruct}; the second
multiplies the represented positive value by $J$. For a prime, the conditional
transform performs the reconstruction and the two unconditional transforms
multiply by $J^2$. In both cases \cref{lem:selfsieve} gives $A_J-1$.
\end{proof}

\subsection{Stage construction and the offset digit deficit}

A complete stage consists of three phases: advance the self-sieving recurrence
to the current endpoint, compute the offset digit deficit, and test that
deficit against the RH-equivalent allowance. Registers whose values are no
longer required are reused in subsequent phases.

Starting from \eqref{eq:mainentry}, the program moves $M$ into $f$. It then
uses a post-tested loop consisting of one product step followed by a
conditional decrement of $f$.

\begin{lemma}[End of product phase]
\label{lem:productphase}
The product loop executes exactly $M+1=n-1$ product steps and terminates with
\[
(m,j,x,f,v)=(0,n-1,A_n-1,0,0).
\]
\end{lemma}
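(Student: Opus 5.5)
The plan is induction on the number of completed product steps, using \cref{lem:productstep} as the inductive step and \cref{lem:transfer} to set up the loop counter. Starting from \eqref{eq:mainentry}, the initial \texttt{move(m,f)} produces $(m,j,x,f,v)=(0,0,0,M,0)$ by \cref{lem:transfer}. In the paper's offset coordinates this is exactly the hypothesis of \cref{lem:productstep} with $J=2$: indeed $j=0=J-2$ and $x=0=A_1-1$, since $A_1=1$ and $B_1=0$ by \eqref{eq:Bdef}, while $m=v=0$. The register $f$ is not touched by \texttt{product\_step()}. I would check this against the source in \cref{app:source}, since \texttt{divide()} and \texttt{transform()} use only $x,j,v,m$.

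The invariant I would carry is the following. After $k\ge1$ passes through the loop body (product step followed by the conditional decrement test on $f$), the state is
\[
(m,j,x,f,v)=(0,\,k,\,A_{k+1}-1,\,\max\{M-(k-1),0\}\text{ or the appropriate post-decrement value},\,0).
\]
I would pin this down precisely as follows. Because the loop is post-tested, the body runs once before any test. After each product step, the loop tests $f$: if $f>0$ it decrements and repeats, and if $f=0$ it exits. So with $f$ initially $M$, the exit occurs after the test that finds $f=0$. That happens on the $(M+1)$-st test, after $M$ successful decrements. Hence exactly $M+1$ product steps run.

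By \cref{lem:productstep}, step number $k$ takes $J=k+1$ from $(j,x)=(k-1,A_k-1)$ to $(k,A_{k+1}-1)$. After $M+1=n-1$ steps we therefore have $j=n-1$ and $x=A_n-1$, with $m=v=0$ and $f=0$, as claimed.

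The main obstacle is not mathematical but semantic. The claim depends on the exact control flow of the compiled loop: that it is post-tested, and that the conditional decrement exits on zero without side effects. It also depends on the fact that no macro inside \texttt{product\_step()} uses $f$ as scratch. I would settle this by reading the loop and macro definitions in \cref{app:source} directly. If the decrement instead precedes the test, the count would shift by one, and the entry convention $n=M+2$ would need to absorb it. The stated count $M+1=n-1$ matches the post-tested reading, so I expect it to go through.
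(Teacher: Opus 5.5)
Your proposal is correct and follows the paper's own argument: one product step runs before the first test, and each of the $M$ successful decrements of $f$ triggers one more, giving $M+1=n-1$ steps; applying \cref{lem:productstep} inductively from $A_1$ then yields $(0,n-1,A_n-1,0,0)$. You also supply two details the paper leaves implicit, namely the initial \texttt{move} via \cref{lem:transfer} and the fact that \texttt{product\_step()} never touches $f$; however, you should replace the hedged invariant line (``or the appropriate post-decrement value'') with the precise count you give immediately afterwards.
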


\begin{proof}
There is one product step before the first test. Each successful decrement of
$f$ causes another step. Starting from $f=M$, exactly $M$ such decrements
succeed, hence $M+1$ steps occur. The first step advances from $A_1$ to $A_2$,
so after $n-1$ steps the endpoint is $A_n$. Apply \cref{lem:productstep}
inductively.
\end{proof}

The 120-state source then invokes \texttt{seed\_deficit()} rather than
preserving a copy of the divisor.

\begin{lemma}[Split-divisor deficit seed]
\label{lem:seed}
From $(j,f,v)=(n-1,0,0)$, the seed loop terminates at
\[
(j,f,v)=(0,n-1,n-1).
\]
In particular, the represented positive divisor $j+v+1$ remains $n$.
\end{lemma}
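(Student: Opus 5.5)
We do not see the source of \texttt{seed\_deficit()} here, so the plan is to establish the stated contract from the natural implementation: a pre-tested loop that, while $j>0$, decrements $j$ and increments both $f$ and $v$. This is the same doubling-transfer pattern as the first loop of \texttt{fold()} in \cref{lem:fold} and of \texttt{addval} in \cref{lem:transfer}. The proof will be a loop-invariant argument of exactly that kind. The only extra claim is that the represented divisor $j+v+1$ is preserved.

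First I fix the invariant. Writing $k$ for the number of completed iterations, I claim that after $k$ iterations
\[
(j,f,v)=(n-1-k,\;k,\;k),\qquad 0\le k\le n-1 .
\]
At $k=0$ this is the entry state $(n-1,0,0)$ given by \cref{lem:productphase} (with $f=0$ and $v=0$ there). Each iteration decrements $j$ by one and increments $f$ and $v$ by one each, so the invariant is preserved. The loop test is $j>0$, so iterations continue exactly while $k<n-1$. Since $j$ strictly decreases, the loop terminates, and it does so after exactly $n-1$ iterations with $(j,f,v)=(0,n-1,n-1)$. The edge case $n=2$, where $j$ starts at $1$, fits the same pattern; no special case is needed unless the loop is post-tested, and I comment on that below.

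For the second sentence I observe that under the invariant $j+v+1=(n-1-k)+k+1=n$ for every $k$. In particular this holds at termination, so the divisor $n$ is still represented in split form $(j,v)=(0,n-1)$. This is precisely the input form accepted by \texttt{divide()}, since \cref{lem:divide} only requires $m=0$ and allows $v\neq0$. This is why no preserved copy of the divisor is needed, and $f=n-1$ supplies the seed $n-1$ for the deficit $n-1-\beta_n(\cdot)$.

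The main obstacle is matching the lemma to the actual control flow in \cref{app:source}, which I cannot see. Two points need checking there. First, whether the loop is pre- or post-tested; since $j=n-1\ge1$ at entry, a post-tested form with a conditional decrement of $j$ would also give exactly $n-1$ effective iterations, provided the increments are guarded by the successful decrement. Second, whether the registers $m$ and $x$ are untouched. I would verify both directly against the source and adapt the iteration count accordingly, exactly as was done for the post-tested loop in \cref{lem:productphase}.
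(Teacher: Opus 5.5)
Your argument is correct and is the same as the paper's, whose entire proof is the invariant $(j,f,v)=(n-1-t,t,t)$ after $t$ iterations, evaluated at $t=n-1$. The source in \cref{app:source} is \texttt{while\_decnz(self.j,[self.v.inc,self.f.inc])}: the loop is pre-tested with the decrement of $j$ inside the test, and its body touches only $v$ and $f$, so both points you flagged come out as you assumed.
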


\begin{proof}
After $t$ iterations the triple is $(n-1-t,t,t)$. Put $t=n-1$.
\end{proof}

The digit loop, written in Python, is
\begin{lstlisting}
while x > 0:
    x -= 1
    divide()
    f = max(f-1,0)
move(v,j)
\end{lstlisting}
Because \texttt{divide()} accepts split divisors, there is no need to restore
$j$ between iterations.

\begin{lemma}[Digit phase]
\label{lem:digits}
At the end of the digit phase,
\[
(m,j,x,f,v)=
\bigl(0,n-1,0,\Delta_n,0\bigr).
\]
\end{lemma}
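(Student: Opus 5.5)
We prove the claim with a loop invariant for the digit loop, using the split-division contract of \cref{lem:divide}. The phase begins with the state produced by \cref{lem:productphase,lem:seed}, namely
\[
(m,j,x,f,v)=(0,0,A_n-1,n-1,n-1).
\]
The represented divisor is therefore $J=j+v+1=n$. The invariant to maintain at the top of each loop test is:
\begin{itemize}[label={}]
\item \emph{Invariant.} After $t$ completed iterations, $m=0$ and $j+v+1=n$; $x$ is the $t$-th iterate $B^{(t)}$ of the map \eqref{eq:beta_iter} with $b=n$, started from $B^{(0)}=A_n-1$; and $f=\max\{n-1-t,0\}$.
\end{itemize}

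\textbf{Preservation step.} Assume the invariant holds and $x=B^{(t)}>0$. The decrement gives $x=B^{(t)}-1$ with $m=0$. \Cref{lem:divide} applies to arbitrary split $(j,v)$ with $j+v+1=n$. It therefore yields $x=\lfloor (B^{(t)}-1)/n\rfloor=B^{(t+1)}$, $j=r$ and $v=n-1-r$, where $r$ is the remainder. Hence $j+v+1=n$ again and $m=0$. This is the point where the earlier remark applies: no restoration of $j$ is needed, because the divisor is encoded only through the sum $j+v$. The conditional decrement gives $f=\max\{n-2-t,0\}=\max\{n-1-(t+1),0\}$, so the invariant holds at $t+1$.

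\textbf{Termination and count.} The loop exits exactly when $x=0$. By the definition of $\beta_n$, this happens after $t=\beta_n(A_n-1)$ iterations, which is finite because each iterate strictly decreases a positive integer. At exit,
\[
f=\max\{n-1-\beta_n(A_n-1),0\}=\Delta_n
\]
by \eqref{eq:Deltaq}, and $x=0$, $m=0$. The final \texttt{move(v,j)} is handled by \cref{lem:transfer}: it sends $(v,j)$ to $(0,j+v)=(0,n-1)$. This gives the stated tuple $(0,n-1,0,\Delta_n,0)$.

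\textbf{Main obstacle.} The only delicate point is the register bookkeeping. One must check that \texttt{divide()} really leaves $m=0$ on exit, so that its hypothesis holds on the next iteration, and that $f$ is not touched by \texttt{divide()}. I would also check the degenerate case where the loop body never runs; this does not occur, since $A_n-1\ge3$. These facts follow by reading the contract \eqref{eq:divide} together with the source in \cref{app:source}. The arithmetic content is entirely contained in the definition of $\beta_n$.
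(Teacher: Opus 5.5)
Your proof is correct and follows the same route as the paper. The seed lemma fixes the represented divisor $j+v+1=n$, and the split-division contract shows that each iteration performs one step of \eqref{eq:beta_iter} while preserving that sum. Hence the loop runs $\beta_n(A_n-1)$ times, the saturating decrements leave $f=\Delta_n$, and the final \texttt{move(v,j)} restores $(j,v)=(n-1,0)$. Your explicit loop invariant just spells out the induction that the paper leaves implicit.
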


\begin{proof}
By \cref{lem:seed}, each division uses divisor $n$. The loop condition
consumes one unit of $x$ before division, so the sequence of $x$ values is
exactly \eqref{eq:beta_iter}; therefore the loop runs $\beta_n(A_n-1)$ times.
The saturating decrement leaves
\[
f=\max\{n-1-\beta_n(A_n-1),0\}=\Delta_n.
\]
At each divider exit the two divisor pieces still sum to $n-1$; the final
transfer $v\to j$ restores $(j,v)=(n-1,0)$.
\end{proof}

\subsection{Scaling and comparison}

The final phase tests the analytic threshold without constructing
$(q_n+1)2^{q_n}$. It simultaneously scales the endpoint and the deficit,
leaving two residual counters that can be compared directly.

The halving macro applied to $j$ implements
\begin{equation}
\label{eq:halve_j}
(j,m)\longmapsto
\left(\left\lfloor\frac j2\right\rfloor,
 m+\left\lceil\frac j2\right\rceil\right),
\end{equation}
while the same macro applied to $f$ maps $f\mapsto\lfloor f/2\rfloor$ without
changing $m$. Both use $v$ as zero scratch.

\begin{lemma}[Scaling phase]
\label{lem:width}
Starting from
\[
(m,j,x,f,v)=(0,n-1,0,\Delta_n,0),
\]
the width loop terminates with
\[
(m,j,x,f,v)=
\left(n-1,0,q_n,
\left\lfloor\frac{\Delta_n}{2^{q_n}}\right\rfloor,0\right).
\]
\end{lemma}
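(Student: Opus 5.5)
We prove the lemma by reducing the width loop to the quartering clock of \cref{lem:clock}, run on $j$ with $m$ as auxiliary storage. The natural reading of the source is a loop guarded by $j>0$. Its body consumes one unit of $j$ and increments $x$. It then applies the halving macro \eqref{eq:halve_j} twice to $j$ and once to $f$.

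The first halving moves $\lceil\cdot/2\rceil$ of $j$ into $m$. The second halving moves $\lceil\cdot/4\rceil$-type residues into $m$. So one iteration maps $j\mapsto\lfloor (j-1)/4\rfloor$, since two successive floor-halvings compose to a floor-quartering. The decrement and the two ceilings each add to $m$, so together they account for exactly the units removed from $j$.

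The key steps, in order, are as follows.

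\textbf{(i) Invariant.} Before each test, $m+j$ equals $n-1$ minus nothing lost. More precisely, $m+j=n-1$ throughout, where the decremented unit of $j$ is credited to $m$. If the source instead decrements $j$ without crediting $m$, the invariant becomes $m+j+x=n-1$, and the final values must be read accordingly.

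\textbf{(ii) Iteration count.} The sequence of $j$ values is exactly the $z$-sequence of \cref{lem:clock} started at $z=n-1$. Hence the loop runs $q_n$ times, and $x$, which starts at $0$, ends at $q_n$.

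\textbf{(iii) The deficit register.} The halving of $f$ leaves $m$ unchanged and uses $v$ as zero scratch. By the second clause of \cref{lem:clock}, $f$ ends at $\lfloor\Delta_n/2^{q_n}\rfloor$.

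\textbf{(iv) Termination.} At exit $j=0$, so the invariant gives $m=n-1$. The register $v$ is zero because every macro returns its scratch to zero.

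The main obstacle is step (i), the exact bookkeeping of $m$. Equation \eqref{eq:halve_j} only conserves $j+m$ per halving. We must check against \cref{app:source} how the one-unit decrement of $j$ is realized.

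The claimed final value $m=n-1$ forces that unit to be transferred into $m$ rather than discarded. The most likely implementation is that the loop's conditional decrement of $j$ is paired with an increment of $m$. This is the shared ``increment-and-continue'' fragment mentioned in the introduction, whose increment acts on $m$ while $x$ is incremented separately.

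I would first verify this against the source. I would then state the invariant $m+j=n-1$ at the loop head and prove it by induction over iterations. This also gives the restart condition \eqref{eq:mainentry} with $M=n-1$ needed afterward.
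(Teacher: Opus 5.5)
Your proposal is correct and is essentially the paper's proof. In the source the loop body is \texttt{j.decnz}, then \texttt{m.inc}, two halvings of $j$, one halving of $f$, and \texttt{x.inc}; so $m+j=n-1$ is preserved, $j\mapsto\lfloor(j-1)/4\rfloor$, $f$ is halved and $x$ counts iterations, and \cref{lem:clock} finishes exactly as you describe. One harmless misidentification: the shared increment-and-continue fragment is \texttt{x.inc} followed by the continue jump (the same code as in \texttt{transform}), while \texttt{m.inc} is simply the first operation of the prefix, so the invariant $m+j=n-1$ holds and your hedge about $m+j+x=n-1$ is unnecessary.
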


\begin{proof}
At each loop test one unit is transferred from $j$ to $m$. Two applications of
\eqref{eq:halve_j} then replace the remaining $j$ by $\lfloor(j-1)/4\rfloor$
while preserving $m+j$. One application of the $f$-halving macro divides $f$
by two, and $x$ is incremented. Thus after $t$ iterations,
\[
m+j=n-1,
\qquad
f=\left\lfloor\frac{\Delta_n}{2^t}\right\rfloor,
\qquad
x=t.
\]
By \cref{lem:clock}, the loop ends at $t=q_n$ and $j=0$, proving the claim.
\end{proof}

The final comparison is implemented by consuming $x=q_n$ while
saturating-decrementing $f$, and halting if and only if $f$ remains positive.
Thus it halts exactly when
\[
\left\lfloor\frac{\Delta_n}{2^{q_n}}\right\rfloor>q_n,
\]
which is precisely failure of \eqref{eq:criterion2}.

\begin{theorem}[Source-program semantics]
\label{thm:source}
Starting from an all-zero register file, repeated normal calls of the source
program test the endpoints $n=2,3,4,\ldots$ in order. It halts exactly when
\eqref{eq:criterion2} fails. Hence it halts if and only if RH is false.
\end{theorem}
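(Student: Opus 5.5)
The proof is an induction over calls of the main stage, built on the register-level lemmas already established. The invariant is that at the start of the $k$-th call ($k\ge1$) the register file is $(m,j,x,f,v)=(k-1,0,0,0,0)$. By \eqref{eq:mainentry} this call then tests $n=k+1$. The base case is the all-zero register file, which is exactly \eqref{eq:mainentry} with $M=0$, so the first call tests $n=2$. For the inductive step, I would chain the phase contracts in order.

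\begin{enumerate}[label=(\arabic*)]
\item \cref{lem:productphase} takes $(M,0,0,0,0)$ to $(0,n-1,A_n-1,0,0)$.
\item \cref{lem:seed} produces the split divisor with $j+v+1=n$ and $f=n-1$.
\item \cref{lem:digits} yields $(0,n-1,0,\Delta_n,0)$.
\item \cref{lem:width} yields $(n-1,0,q_n,\lfloor\Delta_n/2^{q_n}\rfloor,0)$.
\item In the comparison phase, $x=q_n$ is consumed while $f$ is saturating-decremented, so afterwards $x=0$ and $f=\max\{\lfloor\Delta_n/2^{q_n}\rfloor-q_n,0\}$.
\end{enumerate}

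If this value of $f$ is positive, the program halts. Otherwise the registers are $(n-1,0,0,0,0)$, which is the successful-return form, and the next call tests $n+1$. That closes the induction.

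One point needs care: each lemma's hypotheses must match the previous lemma's output exactly, including the scratch registers. For example, \cref{lem:width} needs $m=0$ and $v=0$, and \cref{lem:divide} needs $m=0$ on every call inside the digit loop. I would check these at each splice. I would also record that every loop terminates. The product loop is bounded by the counter $f$. The digit loop strictly decreases $x$. The width loop strictly decreases $j$ by \cref{lem:clock}. The comparison loop consumes $x$. Termination matters because a non-halting run must consist of infinitely many completed calls, not an internal infinite loop. Otherwise the machine could fail to halt for a spurious reason, and the claimed correspondence would break.

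Given this, the program halts during the call for endpoint $n$ if and only if $\lfloor\Delta_n/2^{q_n}\rfloor>q_n$. It never halts if and only if $\lfloor\Delta_n/2^{q_n}\rfloor\le q_n$ for every $n\ge2$. This is the equivalent form of \eqref{eq:criterion2}, since $\lfloor\Delta/2^q\rfloor\le q$ iff $\Delta<(q+1)2^q$. \cref{thm:criterion} then gives that the program halts if and only if RH is false.

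The main obstacle is not mathematical but conformance: the composed macro sequence in \cref{app:source} must literally be the concatenation assumed above. In particular, the comparison must halt only on a positive residual $f$ and must otherwise return with $f=0$ and $x=0$, so that the next call starts at $(n-1,0,0,0,0)$. I would discharge this by reading off the short comparison fragment directly and verifying its postcondition in both branches.
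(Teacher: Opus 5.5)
Your proposal is correct and follows the paper's own argument. You induct over calls of \texttt{main}, chain the product, seed, digit, and scaling lemmas to reduce the final comparison to $\lfloor\Delta_n/2^{q_n}\rfloor\le q_n$, note that a passing stage returns $(n-1,0,0,0,0)$ so the next call tests $n+1$, and finish with \cref{thm:criterion}. Your explicit scratch-register splice checks and per-loop termination argument spell out what the paper leaves implicit in the phase lemmas' ``terminates with'' conclusions.
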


\begin{proof}
The first call has $M=0$, so $n=2$. By
\cref{lem:productphase,lem:digits,lem:width}, the final comparison is exactly
the criterion at this $n$. On a passing comparison $x,f,j,v$ are zero and
$m=n-1$. Under repeated normal calls of \texttt{main}, the next call therefore
has endpoint $(n-1)+2=n+1$. Induction gives all integers in order. The
equivalence with RH follows from \cref{thm:criterion}.
\end{proof}

\subsection{Sharing the scaling-loop continuation}
\label{sec:tailsharing}

The final source uses the same arithmetic operations as the preceding
121-state construction. The additional reduction arises from resolving a
loop continuation after regrouping a sequence of operations that all return
normally. Write $P$ for the sequence
\[
\texttt{m.inc},\quad\texttt{halve(j)},\quad
\texttt{halve(j)},\quad\texttt{halve(f)},
\]
and let $I$ denote \texttt{x.inc}. The previous source used
\texttt{while\_decnz(j,[P,I])}, with the four operations of $P$ expanded in
that list. The revised source groups the continuation in Python as
follows:
\begin{lstlisting}[language=Python]
prefix = [m.inc, halve(j), halve(j), halve(f)]
loop = label('width', [j.decnz,
                      [prefix, [x.inc, 'continue_width']]])
\end{lstlisting}
The label resolver gives the final continuation the same code as the
increment-and-continue fragment already used by the transform routine.
The compiler can therefore share that fragment.

\begin{proposition}[Scaling-loop control equivalence]
\label{prop:tailsharing}
For every initial counter tuple satisfying the scaling-phase invariant,
the original and revised loops execute the same sequence of primitive
register operations, with the same termination behavior and final tuple.
\end{proposition}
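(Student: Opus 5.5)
The plan is to compare the two loops as small control-flow programs over the primitive operations, assuming the label resolver's semantics. First I fix the operational meaning of each construct. For the original, \texttt{while\_decnz(j,[P,I])} tests $j$. If $j>0$ it decrements $j$, runs $P$ and then $I$, and jumps back to the test. If $j=0$ it exits normally. For the revised loop, \texttt{label('width', body)} opens a block whose body begins with the conditional \texttt{j.decnz}. On success the body runs \texttt{prefix} and then the group \texttt{[x.inc, 'continue\_width']}, and \texttt{'continue\_width'} jumps to the label entry, which is the $j$ test again. On failure (that is, $j=0$) the body falls out of the label. Written as flowcharts, both loops have the same shape: test $j$, then the path $\texttt{m.inc}\to\texttt{halve(j)}\to\texttt{halve(j)}\to\texttt{halve(f)}\to\texttt{x.inc}\to$ test, with the exit edge leaving at the test.

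Next I use the hypothesis stated in \cref{sec:tailsharing}: every operation of $P$ returns normally. \Cref{lem:width} and \eqref{eq:halve_j} show that the halving macros always terminate and return to their continuation. Because of this, nesting $P$ inside a sublist introduces no new exit or branch. Control reaches \texttt{x.inc} on exactly those paths where it follows $P$ in the original flat list. Flattening the revised list therefore yields the same sequence $[\texttt{j.decnz}; P; I; \text{back-edge}]$.

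The next step concerns the shared fragment. The label resolver maps the final \texttt{[x.inc, 'continue\_width']} to the same compiled code as the increment-and-continue fragment in \texttt{transform()}. I would argue that this code is parametric in its continuation target: it performs one increment of $x$ and then jumps to whichever loop head is bound to the enclosing label. Within the width loop, that head is the $j$ test, so sharing the code changes only the identity of the states and not the operation performed. This is the step I expect to be the main obstacle. Two things need to be checked. First, the sharing must not import the continuation target of \texttt{transform()}. Second, the scaling loop must never be entered from inside \texttt{transform()}. I would settle this by inspecting the compiled fragment in \cref{app:source}. If the fragment is instead bound to a fixed target, I would fall back on the reachability invariant mentioned in the introduction, which shows that the two uses carry distinguishable control contexts.

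Finally, I would argue by induction on the number of iterations $t$ from any tuple satisfying the scaling-phase invariant. The claim is that after $t$ passes both loops are at the $j$ test with identical register tuples, namely $(m+t\text{-updates},\ldots)$ as computed in \cref{lem:width}. Each loop has also emitted the same $t$ blocks of primitive operations. The exit test depends only on $j$, so both loops terminate at the same $t=q_n$ with the same final tuple. This gives equality of the operation sequences, of the termination behaviour, and of the final state.
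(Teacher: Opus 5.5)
Your proposal follows the paper's proof. Both loops test $j$ and exit on the zero outcome; on the positive outcome they run $P$ then \texttt{x.inc} and return to the test. Because every operation of $P$ returns normally, the regrouping changes no branch, and induction over iterations gives the same operation trace, termination and final tuple. The obstacle you flag is settled as you guess, and as the paper's closing sentence asserts: after the regrouping, \texttt{label} resolves \texttt{continue\_width} into the relative jump \texttt{\_continue.3}, which only erases branch bits of the live control word, so the shared node (\texttt{x.inc}, \texttt{\_continue.3}) is context-independent and literally the one from \texttt{transform()}; no fallback is needed, and the window invariant you propose could not have supplied one anyway.
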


\begin{proof}
Both loops first conditionally decrement $j$. A zero outcome exits the loop.
A positive outcome executes the four operations of $P$, increments $x$, and
returns to the same loop test. Each operation of $P$ terminates normally by
\cref{lem:transfer} and the unit-halving construction. Thus the new
grouping does not change any conditional branch or arithmetic operation.
Induction over loop iterations proves the assertion. The continuation is
resolved after the grouping is chosen, so its return target is preserved.
\end{proof}

As a separate finite check, the supplied control-transducer certificate
relates 84 pairs of operation boundaries and verifies 125 outcome edges.
Increment leaves have one outcome and conditional-decrement leaves have both
possible outcomes. Closure of this relation proves equality of all primitive
operation traces, including halting, without placing a bound on counter values
or on the number of stages. Its generator and checker are included below.

The resulting decision DAG has 63 states, compared with 64 in the preceding
source. Including the framework gives 122 reachable named states. Ordinary
bisimulation reduces this table to a 121-state predecessor, and the local
reachability projection in the next section reduces it to 120 states. The
121-state predecessor in this revision is generated by the revised source;
the previously delivered 121-state machine is retained separately for the
reproduction and control-equivalence checks.

\section{Tape implementation and dispatcher specialization}
\label{sec:backend}

The five-register program is compiled by the finite register backend included
verbatim in \cref{app:builder}.  Because the main theorem concerns the literal
transition table, it is useful to separate two assertions that are sometimes
conflated in descriptions of compiler-generated machines:

\begin{enumerate}[label=(\alph*)]
\item the register backend implements the source-level increment, conditional
decrement, sequencing, and loop operations; and
\item the program-specific dispatcher used in the final construction is
equivalent, on every post-bootstrap normal-form configuration, to the generic
dispatcher generated by the backend.
\end{enumerate}

Both assertions are established below.  The only purely finite initialization
fact is isolated as a bootstrap certificate.

\subsection{Normal form and bootstrap}

For a nonnegative integer $r$, write
\[
U(r)=1^{r+1}0.
\]
At a normal operation boundary the tape has the following schematic form:
\begin{equation}
\label{eq:tapenormal}
\cdots 0\;
C\;0^g\;10\;
U(m)U(j)U(x)U(f)U(v)\;Z\;
0\cdots .
\end{equation}
Here $C$ is the finite program-counter word, whose leftmost symbol is the
leftmost nonblank symbol of the control region; $g$ is the zero gap left after
reserving space for the current control word and its preparation marker;
the block $10$ is the auxiliary ``$-1$ register'' used by the register
microcode; the next five unary blocks encode the logical registers in the
order $(m,j,x,f,v)$; and $Z$ is a finite suffix of unused zero-valued register
blocks followed by the terminal delimiter. Its length may increase during
register operations.  The unused blocks are never
addressed by the present program.  The program-counter word begins with
\texttt{110} during normal execution of \texttt{main}. The initial auxiliary
marker is at coordinate $3$, while the control region starts at $-3821$.
The control words have length at most 18. The auxiliary marker is restored at
its fixed coordinate after each operation, so every reachable operation has
space for its control word, marker, and separating zero.

The initial tape is blank, so the backend first executes the fixed
\texttt{boot1}/\texttt{boot2} subsystem.  This subsystem is independent of the
arithmetic source program except for the identity of register~0.

\begin{lemma}[Bootstrap certificate]
\label{lem:bootstrap}
For the named predecessor of the delivered 121-state table, direct execution
of the fixed transition table from the blank tape reaches the first
\texttt{main} decision node after $89{,}775{,}610$ transitions.  With the
initial head position taken as coordinate $0$, the machine is then in state
\texttt{N54} with head position $-3818$, and the set of tape cells containing
one is exactly
\begin{equation}
\label{eq:boot-support}
\{-3821,-3820\}\;\cup\;\{3,5,7,\ldots,3821\}.
\end{equation}
Consequently the control prefix is \texttt{110}, the first five register
blocks encode zero, and the register file contains 1909 zero-valued register
slots in addition to the auxiliary marker.  Thus the configuration satisfies
\eqref{eq:tapenormal} with all five logical registers equal to zero.
\end{lemma}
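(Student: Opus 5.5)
This lemma is a finite certificate, so the proof will be a direct simulation rather than an argument. The plan is to run the fixed transition table of the named predecessor on the blank tape and record, at each step, the state, the head position relative to the initial coordinate $0$, and the set of cells containing a one.

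\textbf{Step 1: the simulator.} I will use a simulator kept deliberately independent of the generator in \cref{app:builder}. It parses the literal table and treats $H$ as the only halting target. To keep the $\approx 9\times10^7$ steps cheap, it stores the tape as a dictionary or a dense array offset by a few thousand cells. I will first check that no halt occurs and that the head never leaves the interval $[-3821,3822]$ before the claimed step; this bounds the array size.

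\textbf{Step 2: the stopping condition.} The claimed boundary should be \emph{defined} as the first entry into state \texttt{N54}, the named first decision node of \texttt{main}, rather than a hand-picked step count. I will then confirm that this first entry occurs at transition $89{,}775{,}610$, with head at $-3818$ and support exactly \eqref{eq:boot-support}. For the support, it suffices to compare the final array against the predicted pattern.

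\textbf{Step 3: the normal-form reading.} This is a pure decoding of \eqref{eq:boot-support} against \eqref{eq:tapenormal}.
\begin{itemize}[label=--]
\item The cells $-3821,-3820$ followed by the zero at $-3819$ give the control prefix \texttt{110}, starting at the claimed control-region origin $-3821$.
\item The cells $-3818,\dots,2$ are blank; they form the gap $0^g$.
\item On the right, the one at $3$ followed by the zero at $4$ is the auxiliary block $10$. The pairs $(2k+1,2k+2)$ for $k\ge2$ are blocks $U(0)=10$, so there are $(3821-5)/2+1=1909$ register slots.
\item The first five of these encode $(m,j,x,f,v)=(0,0,0,0,0)$, and the rest make up $Z$ together with the terminal delimiter.
\end{itemize}

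The main obstacle is not the computation but making the certificate trustworthy as a statement about the literal table. The difficulty is that \texttt{N54} is a name in the \emph{predecessor}, not in $M_{120}$. I would therefore run the simulation on the named 121-state predecessor exactly as generated, and include a hash of its table. The link to $M_{120}$ is deferred to the later reachability-projection argument. A second possible pitfall is that \texttt{N54} might be visited transiently inside the boot code. I would check that \texttt{N54} is reachable only from the boot exit and from \texttt{main}'s own return path, so that its first visit is indeed the boundary.
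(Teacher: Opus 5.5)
Your proposal is essentially the paper's proof: a literal replay of the dispatcher-specialized named table from the blank tape, stopped at the first entry into \texttt{N54}, checking the step count, state, head coordinate and complete support, and then decoding that support into the normal form \eqref{eq:tapenormal} exactly as you do. (The paper's checker stops at the first state whose name begins with \texttt{N}; this is the same event, since only \texttt{5.root.2} and decision states have transitions into decision states.) One incidental correction: drop the precondition that the head stays in $[-3821,3822]$. It is false, because just before \texttt{N54} is entered the four-zero scanner must overshoot the leftmost control one at $-3821$ and reach $-3825$. Use a sparse tape, as the paper's checker does, or a generous buffer instead.
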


\begin{proof}
This is a finite computation, not an asymptotic or semantic assumption.  The
appendix contains a short literal transition-table replay checker.  It starts
from the all-zero tape, performs the transition function of the generated
named table, and stops only upon entry to the first decision-DAG state.  It
checks the transition count, state, head coordinate, and the complete finite
support \eqref{eq:boot-support}, rather than merely counting nonzero cells.
The set on the right of \eqref{eq:boot-support} contains 1912 one-symbols:
two in the \texttt{110} control prefix, one in the auxiliary marker, and one
in each of the 1909 zero-valued unary register blocks.  The first five such
blocks are the registers used by the program.  A separate tape runner gives
the same transition count and configuration.
\end{proof}

The bootstrap is the only place where the proof uses a long fixed finite
execution.  All subsequent statements are uniform in the unbounded register
values.

\subsection{Primitive register operations}

The backend implements a register operation in several phases.  A marker is
first placed at the end of the program-counter word; the auxiliary register is
then extended across the zero gap; the states
\texttt{2b.reg.$i$.inc} or \texttt{2b.reg.$i$.dec} scan the unary blocks to the
selected register; the target block is modified; and the return/cleanup
states encode the branch result in the control word and restore
\eqref{eq:tapenormal}.  The relevant transition families are listed verbatim
in \cref{app:builder}.

\begin{lemma}[Primitive register semantics]
\label{lem:backendprimitive}
Assume a normal-form configuration \eqref{eq:tapenormal} and let $r_i$ be one
of the first five register values.
\begin{enumerate}[label=(\roman*)]
\item Invoking \texttt{r.inc} terminates in normal form with
$r_i$ replaced by $r_i+1$, every other logical register unchanged, and the
deterministic successor control outcome selected.
\item Invoking \texttt{r.decnz} terminates in normal form with two possible
control outcomes.  If $r_i=0$, the zero outcome is selected and all logical
registers are unchanged.  If $r_i>0$, the positive outcome is selected and
$r_i$ is replaced by $r_i-1$.
\end{enumerate}
No transition in either operation changes an unselected logical register.
\end{lemma}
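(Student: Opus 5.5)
The proof will be a phase-by-phase symbolic trace of the backend transition families listed in \cref{app:builder}. At each phase we state a tape invariant that is parametric in the unbounded quantities: the register values $r_0,\ldots,r_4$, the gap length $g$, and the length of the suffix $Z$. We then check that the finitely many states of that phase carry the invariant forward.

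The key structural fact is that the scanning states traverse blocks of the form $U(r)=1^{r+1}0$. A state that reads $1$ moves right and remains in place. A state that reads $0$ advances a block counter that is encoded in the finite control, namely the index $i$ in \texttt{2b.reg.$i$.inc}/\texttt{.dec}. Each such traversal is therefore a loop of length $r+1$ that leaves the block unchanged. This lets us prove by induction on $r$ that crossing $U(r)$ has no effect on the tape, which disposes of every unselected register uniformly.

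The steps are as follows.
\begin{enumerate}[label=(\arabic*)]
\item \emph{Marker phase.} Placing the marker at the end of the control word $C$ terminates. It only touches cells of $C$ and the adjacent zero, and this is safe because $|C|\le18$ and $g$ is large enough to leave room.
\item \emph{Gap phase.} Extending the auxiliary $10$ block across $0^g$ is a simple induction on $g$.
\item \emph{Register scan.} Using the block-traversal lemma $i$ times, the head reaches the first cell of $U(r_i)$ with all earlier blocks intact.
\item \emph{Modification.} For \texttt{inc}, the block $U(r_i)$ must become $U(r_i+1)$. Because unary blocks are contiguous, this requires shifting the later blocks and $Z$ one cell to the right; this is why the length of $Z$ may grow. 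We prove this shift by induction on the total length of the remaining blocks, and the invariant is that the shifted tape equals the original with one extra $1$ inserted. For \texttt{decnz}, the scanner reads the second symbol of the block. If it is $0$, then $r_i=0$, so the zero outcome is selected and nothing is written. Otherwise the block is shortened and the tail is shifted left (or the freed cell is absorbed), again by a uniform induction.
\item \emph{Return and cleanup.} Each case returns leftwards across the blocks, again using the traversal lemma. It writes the outcome bit into the control word, retracts the auxiliary extension, and restores the marker at its fixed coordinate. The resulting configuration matches \eqref{eq:tapenormal}.
\end{enumerate}

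The final clause, that no transition changes an unselected logical register, follows because every write outside the selected block occurs either in the control region, in the gap, or in a shift that preserves the block contents as words.

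The main obstacle will be step (4), and specifically the shift of the tail. The shift transitions must carry a $1$ or $0$ across a tape of unbounded length. We need an exact loop invariant stating that the tape to the left of the head is the target word and the tape to the right is the source word, offset by one cell, with the carried symbol held in the state. The base case is reaching the terminal delimiter of $Z$, and this is the delicate point: we must show that the delimiter, and not some zero-valued block $U(0)=10$, is what triggers termination. We would try first to identify the delimiter by a $00$ pattern, which never occurs inside the register file in normal form, and verify this against the listed transitions.
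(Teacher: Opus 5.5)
Your plan is essentially the paper's proof: a direct trace of the backend transition families, parameterized by the unbounded block and gap lengths. Your treatment of step (4) is more explicit than the paper's wording about moving only the terminating zero: the whole tail of blocks, including the unused ones, is shifted right for \texttt{inc} and left for a successful \texttt{decnz}, and each shift loop stops when a block-start cell reads $0$. The one phase your outline skips is the two-visit protocol. After \texttt{1b.reg.prep\_2} reaches the auxiliary $1$, the machine returns through \texttt{6.continue.0}, the dispatcher and the root, and re-reads the unchanged control word to the same leaf. Only on this second visit, now reading the marker $1$, does it erase the marker and enter \texttt{2b.reg.$i$}. That round trip belongs between your steps (2) and (3), together with the check that \texttt{1b.reg.prep\_1} reads $0$ on the first visit, since its read-$1$ transition is completed as a halt.
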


\begin{proof}
The proof is a direct finite-state trace parameterized only by the length of
the unary block $U(r_i)$.

On the first visit to \texttt{3.r.inc} or \texttt{3.r.decnz}, the marker
cell contains zero; the state writes one and enters the preparation states.
States \texttt{1b.reg.prep\_1} and \texttt{1b.reg.prep\_2} extend the
auxiliary marker to the register file and return through the dispatcher.  The
same register leaf is then visited a second time, now with a one in the marker
cell; it removes that marker and enters the corresponding \texttt{2b} scan.
At the sole partial framework state \texttt{1b.reg.prep\_1}, the head is one
cell to the right of the newly written marker and therefore reads a zero in
normal form.  Thus the canonical table's explicit read-one-to-halt completion
of that otherwise missing transition is unreachable in every simulated source
step.  For register index $i$, the family \texttt{2b.reg.$i$.*} moves right
across exactly $i$ unary delimiters; all crossed symbols are rewritten
identically.

For increment, \texttt{2b.reg.-2.inc} together with
\texttt{2b.reg.inc.shift\_1} and \texttt{2b.reg.inc.shift\_2} shifts the
terminating zero of the selected block one cell to the right.  Hence
$U(r_i)=1^{r_i+1}0$ becomes $1^{r_i+2}0=U(r_i+1)$.

For conditional decrement, the selected block is first tested at its left
edge.  If it is $U(0)=10$, the failure-return branch restores the inspected
marker and leaves the block unchanged.  If $r_i>0$, the states
\texttt{2b.reg.dec.check}, \texttt{dec.scan\_*}, and
\texttt{dec.shift\_*} shift the terminating delimiter one cell to the left,
so $1^{r_i+1}0$ becomes $1^{r_i}0=U(r_i-1)$.  The two return families
\texttt{2c.reg.return\_0\_*} and \texttt{2c.reg.return\_1\_*} write the
corresponding branch bit into the program counter.  Finally
\texttt{2e.reg.cleanup\_*} contracts the auxiliary register back to its
single-one normal form.  Every scan over a unary block or zero gap is
monotone, so each phase terminates for every finite register value and gap.
Inspection of the displayed transition families shows that no other unary
block is modified.
\end{proof}

\subsection{Control-DAG semantics}

The source language used in \cref{app:source} contains only finite sequential
composition, calls to the two primitive register leaves, and the
\texttt{if\_decnz}/\texttt{while\_decnz} control patterns displayed in the
builder.  The compiler first represents a sequence as a finite DAG and then
replaces every node of out-degree greater than two by a binary cons tree.
The routine \texttt{label} resolves \texttt{break} and \texttt{continue}
references by recording the number of branch bits that must be removed before
dispatch resumes.  The routine \texttt{generate}
then assigns one Turing state to each binary DAG node and emits the finite
families \texttt{6.break.$k$} and \texttt{6.continue.$k$} implementing those
resolved jumps.

\begin{lemma}[Control-DAG compilation]
\label{lem:controldag}
For the source constructs used in \cref{app:source}, the binary control DAG
generated by the included builder has the same small-step control semantics as
the source program: sequential nodes execute left-to-right,
\texttt{if\_decnz} selects its body exactly on the positive outcome of
\texttt{decnz}, and \texttt{while\_decnz} repeats its body exactly while that
outcome is positive.
\end{lemma}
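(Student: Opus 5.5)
The proof will be a structural induction over the source syntax, carried out on a small-step semantics for the compiled DAG. First I will fix the semantics. A configuration of the compiled program is a pair consisting of the current DAG node and the stack of branch bits recorded in the program-counter word $C$ of \eqref{eq:tapenormal}. Dispatch reads these bits from the left to locate the current node. A primitive leaf (\texttt{r.inc} or \texttt{r.decnz}) appends one outcome bit by \cref{lem:backendprimitive}; \texttt{inc} has a single outcome and \texttt{decnz} has two. The source semantics is the usual one for sequences, conditionals and loops over the two primitives. The claim then becomes that, for each source construct, the sequence of primitive operations and branch outcomes executed from the node's entry agrees with the source trace, and that exit happens at the correct return point.

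I will establish four cases in order.

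\emph{(a) Sequencing and cons trees.} I will show that replacing an $n$-ary sequence node by a binary cons tree preserves left-to-right order. By induction on the tree, a binary node runs its left child to normal return and then its right child, so flattening the cons tree recovers the original list.

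\emph{(b) \texttt{if\_decnz}.} The compiled node is a \texttt{decnz} leaf whose positive outcome bit dispatches into the body and whose zero outcome bit dispatches to the successor. This follows by reading the generated transition for each bit value.

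\emph{(c) \texttt{while\_decnz}.} The loop is a labelled DAG whose tail ends in a \texttt{continue} reference to the loop head. I will show that the family \texttt{6.continue.$k$} erases exactly the $k$ branch bits accumulated since the label, where $k$ is the depth recorded by \texttt{label}. After erasure the control word equals the one present at loop entry, so dispatch resumes at the head test. The zero outcome, or a \texttt{break} through \texttt{6.break.$k$}, likewise restores the word to the label's depth and then takes the exit bit.

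\emph{(d) Framing.} Whenever a subterm returns normally, the control word has the same prefix it had at entry, with exactly the outcome bit that selects the successor appended. This invariant is what lets the inductive cases compose.

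The main obstacle is the bit-count bookkeeping in (c). I must verify that the $k$ computed by \texttt{label} equals the number of bits actually pushed along \emph{every} path from the label to the jump site. Nested loops, shared fragments such as the increment-and-continue tail of \cref{sec:tailsharing}, and cons-tree nesting could each give different depths. My first approach is an invariant stated on the DAG itself: every node carries a static depth, meaning the length of the control word whenever that node is reached, and every edge increments this depth by exactly one. Since the builder assigns depth by the path structure, all paths to a node agree, and \texttt{label} subtracts the depths of the jump site and the target. Sharing a fragment is then legitimate only when the shared copies have equal depth and equal resolved target. I will check this condition for the fragment shared in \cref{sec:tailsharing}, and back it with the finite control-transducer certificate already described there.

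Once these cases hold, induction on source syntax gives equality of the primitive traces, including non-termination of loops whose outcome stays positive forever. This yields the lemma.
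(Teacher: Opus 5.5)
Your case split (cons trees, \texttt{if\_decnz}, \texttt{while\_decnz} through the jump families, framing) is the same structural induction the paper uses. The device you propose for the step you rightly call the crux does not hold for this builder, however.

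There is no static depth. \texttt{add} interns sequences by content and \texttt{subroutine} memoizes, so a single DAG node is reached with many different control-word lengths. For example, \texttt{move(v,j)} and \texttt{divide()} are used in several phases. The node (\texttt{3.x.inc}, \texttt{\_continue.3}) is shared by the loops of \texttt{fold}, \texttt{transform} and \texttt{width}, which lie at different nesting depths. Your criterion that shared copies must have equal depth would therefore reject exactly the sharing of \cref{sec:tailsharing}, and also the fold/transform sharing already present in the 121-state source.

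Nor does \texttt{label} subtract depths. It records separately the number of $0$-branches (\texttt{zeros}, used for \texttt{break}) and of $1$-branches (\texttt{ones}, used for \texttt{continue}) between the labelled node and the jump site. The machine families match this: \texttt{6.break.$k$} erases $1$-bits without counting them and counts $0$-bits, while \texttt{6.continue.$k$} does the opposite. In \texttt{divide()}, the path from the \texttt{tick} node to the \texttt{break\_tick} site has branch bits $0,1,1$, yet it resolves to \texttt{\_break.1}; a depth count would overshoot.

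Finally, the control-transducer certificate compares the compiled 121-state and 120-state controls with each other. It therefore cannot certify that either one agrees with the source semantics.

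The invariant you need is relative to the labelled node $L$. While control is anywhere inside $L$, the bits above $L$'s own bit are those present on entry to $L$. The bits from $L$'s bit downward are exactly the branch choices along the path from $L$ to the current node. It also helps to treat the control word as this root-to-node path, rather than as a stack to which outcome bits are appended. A successful primitive return then acts through \texttt{6.break.0} as a binary increment (erase trailing $1$s, set the last $0$ to $1$), and a failed \texttt{decnz} acts through \texttt{6.break.1}.

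The resolved count is the number of $0$s (respectively $1$s) in this relative string, with the one-bit adjustment for the site's own bit made in \texttt{generate}. One then checks two things. First, \texttt{6.break.$j$} zeroes everything from the site up through $L$'s bit and arrives at $L$'s parent bit in state \texttt{6.break.0}, which is exactly a normal return of $L$. Second, \texttt{6.continue.$j$} zeroes $L$'s bit and everything below it, so dispatch re-descends along $0$-branches to $L$'s first leaf.

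Interning happens after resolution, and the resolved content records the counts. Hence every occurrence of a shared node carries the correct relative count, whatever its absolute depth, and sharing is harmless.
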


\begin{proof}
This is a structural induction on the source tree produced by the builder.
The function \texttt{add} removes empty nodes and interns finite sequences;
\texttt{binary} replaces a sequence $(a_1,\ldots,a_k)$, $k>2$, by
$(a_1,(a_2,\ldots,a_k))$, which preserves left-to-right execution.
For \texttt{if\_decnz}, the generated two-branch node has the zero outcome on
its first edge and the positive outcome followed by the body on its second.
For \texttt{while\_decnz}, the zero outcome exits the labelled node and the
positive outcome executes the body followed by the resolved
\texttt{continue} edge.  The \texttt{label} routine changes only the finite
jump target, replacing a syntactic \texttt{break} or \texttt{continue} by its
explicit branch-distance count.  The generated \texttt{6.break.$k$} and
\texttt{6.continue.$k$} states erase exactly the corresponding suffix bits and
return to dispatch.  These are precisely the cases used by the final source;
no recursion or unlisted control construct occurs.
\end{proof}

\begin{theorem}[Backend simulation]
\label{thm:backend}
Suppose the generic backend is in a normal-form configuration representing a
source control node and register tuple $(m,j,x,f,v)$.  Then execution reaches
the next normal-form source boundary in finitely many Turing steps, with the
register tuple and source control node equal to those obtained by one
source-level small step.  Consequently every finite source execution is
simulated exactly by the generic tape backend.
\end{theorem}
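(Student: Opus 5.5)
The plan is to obtain \cref{thm:backend} by composing the two preceding lemmas: \cref{lem:controldag} fixes the source small step in terms of the generic binary control DAG, and \cref{lem:backendprimitive} shows that each primitive leaf of that DAG acts on the tape as intended. A source small step is one of three kinds: a primitive \texttt{r.inc} or \texttt{r.decnz}, a sequential or cons-node descent, or a resolved \texttt{break}/\texttt{continue} jump. I treat each kind separately. In every case I show that, starting from a configuration in normal form \eqref{eq:tapenormal} whose control word $C$ encodes the current DAG node, the machine returns to normal form after finitely many steps, with $C$ now encoding the source successor and the register blocks updated correctly.

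\textbf{Pure control nodes.} For sequence, cons and branch nodes, I would trace the dispatcher. It reads $C$ from its fixed left edge, walks the bit path to the addressed DAG state, and that state appends or rewrites branch bits in $C$. No register block is touched, and the auxiliary marker remains at its fixed coordinate. The \texttt{6.break.$k$}/\texttt{6.continue.$k$} families erase exactly $k$ suffix bits, which by \cref{lem:controldag} is the resolved jump distance. Termination follows because every scan in this case covers only the control word, which has length at most $18$, and the zero gap.

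\textbf{Primitive leaves.} Here I invoke \cref{lem:backendprimitive}. It gives termination, the correct change to $r_i$, no change to the other blocks, and the correct outcome bit written into $C$. I then compose with the dispatcher step, as \cref{lem:controldag} does, so that the bit selects the correct DAG successor (body or exit for \texttt{if\_decnz} and \texttt{while\_decnz}).

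\textbf{Main obstacle: preserving the normal form itself.} Two things must stay true. First, $C$ must fit: the control word and its preparation marker must always fit to the left of the auxiliary marker without colliding with the register file. I would argue this from the bound of $18$ on control-word length together with the fixed gap reserved at bootstrap (\cref{lem:bootstrap}), since the auxiliary marker is restored to coordinate $3$ after every operation. Second, the suffix $Z$ must stay harmless: only the first five blocks may be addressed. This holds because the source uses only register indices $0$ through $4$. Growth of a register block only shifts material to the right, where $Z$ extends into blank tape.

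\textbf{Conclusion.} With these invariants in place, induction on the number of source small steps gives exact simulation of every finite source execution. The bootstrap lemma supplies the base case: it provides the initial normal-form configuration at the first \texttt{main} node with all registers equal to zero.
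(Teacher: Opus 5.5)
Your proposal is correct and is essentially the paper's own argument. Primitive leaves go through \cref{lem:backendprimitive}, control successors and resolved jumps through \cref{lem:controldag}, termination through the finite monotone scans, and the conclusion follows by induction on source steps; your extra normal-form bookkeeping (the control word fitting in the gap, the unused suffix $Z$ never being addressed) is what the paper absorbs into the description of \eqref{eq:tapenormal} and the ``terminates in normal form'' clause of \cref{lem:backendprimitive}.

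The one thing you do not need here is \cref{lem:bootstrap}. The theorem is conditional on starting in normal form, so that hypothesis is the induction's base case, and the paper brings in the bootstrap only afterwards, when combining this theorem with \cref{lem:restart}. That lemma is in any case stated for the dispatcher-specialized named table rather than the generic backend.
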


\begin{proof}
At a decision-DAG leaf, \cref{lem:backendprimitive} gives the exact semantics
of the invoked primitive register operation and restores normal form.
Between leaves, \cref{lem:controldag} gives the source-equivalent control
successor.  Induction on the number of source steps proves the statement.
Termination of each simulated step follows from the monotone finite scans in
\cref{lem:backendprimitive}.
\end{proof}

\begin{lemma}[Natural restart of \texttt{main}]
\label{lem:restart}
After a normal return from \texttt{main}, the generic backend invokes
\texttt{main} again without changing the logical register file.
\end{lemma}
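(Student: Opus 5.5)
The plan is to trace the control word and the tape around the return from \texttt{main}, and to treat this as a finite structural fact about the generic dispatcher, in the same way \cref{lem:controldag} is proved. The lemma is about control only. By \cref{lem:backendprimitive}, the register blocks change only inside a \texttt{2b.reg.*} leaf, so it suffices to show that between the last leaf of \texttt{main} and the first leaf of the next call no register leaf is dispatched. Then the unary blocks $U(m)U(j)U(x)U(f)U(v)$ in \eqref{eq:tapenormal} are untouched.

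First I would identify what ``normal return'' means in the builder. The body of \texttt{main} is a finite binary DAG whose root corresponds to the control prefix \texttt{110}. Once its last child completes, the dispatcher pops branch bits off the program-counter word $C$ until the word is back at the top-level prefix. The key step is to read this off the \texttt{generate} routine and the return states in \cref{app:builder}. After the rightmost leaf of the root node returns, the bit-erasure states shorten $C$ back to \texttt{110}, with no bit recording ``completed''. The top-level dispatch state then reads \texttt{110} and re-enters the root decision node, which is state \texttt{N54} by \cref{lem:bootstrap}, exactly as it did after bootstrap. These dispatch and erasure states move the head only within the control region, the gap $0^g$ and the marker $10$. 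They write only to cells of $C$ and to the gap, and none of them is a \texttt{2b.reg} state.

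Second, I would check that normal form \eqref{eq:tapenormal} is restored at this re-entry. The gap $g$ grows as $C$ shrinks, the auxiliary marker sits at its fixed coordinate $3$ by the cleanup property in \cref{lem:backendprimitive}, and the suffix $Z$ is unchanged. The configuration then matches the post-bootstrap one except for the register contents, which is precisely the hypothesis of \cref{thm:backend} at the first node of \texttt{main}. Combined with \cref{thm:source}, the register tuple on entry is $(n-1,0,0,0,0)$, which agrees with \eqref{eq:mainentry}.

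The main obstacle is the first step: certifying from the builder code that the top-level return really loops back to \texttt{main} rather than falling through to a halting or bootstrap state. I would handle this the same way as the bootstrap, with a finite check on the named table. I would list the finitely many states reachable from the final leaf's return with control word \texttt{110}$w$, and confirm that each of them either erases a suffix bit of $w$ or dispatches on \texttt{110}. This trace depends on $|w|\le 18$ and not on register values, so it is a finite verification. I would supplement it with a direct replay through the first \texttt{main} return at $n=2$, checking that the second entry to \texttt{N54} has registers $(1,0,0,0,0)$.
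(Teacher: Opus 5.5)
Your overall plan is sound: trace only control states between the last leaf of one call of \texttt{main} and the first leaf of the next, and observe that no \texttt{2b.reg} state runs. The problem is your key step, which describes a mechanism the table does not have. The jump states never ``shorten $C$ back to \texttt{110}''. State \texttt{6.break.0} overwrites trailing ones with zeros, and it can exit only by writing a $1$ into the first $0$ it meets.

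When the final leaf of \texttt{main} returns (the failing \texttt{f.decnz} in \texttt{compare}, via \texttt{6.break.1} and then \texttt{6.break.0}), every path bit inside \texttt{main} is cleared. The first zero then reached is the third root bit, which \texttt{5.root.2} set to $0$ on entry. The prefix therefore becomes \texttt{111}. Completion is recorded, contrary to your claim that no bit records it. The finite check you propose, that each state ``either erases a suffix bit of $w$ or dispatches on \texttt{110}'', would fail at exactly this transition. The paper's own \texttt{stage\_check.cpp} in fact detects a return from \texttt{main} as \texttt{5.root.2} reading a one.

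The missing idea is the root protocol that turns this completion bit into a restart:
\begin{itemize}
\item After cleanup and the dispatcher's leftward pass, \texttt{4.dispatch.scan} hands the leftmost one to \texttt{5.root.1}.
\item \texttt{5.root.1} needs the second bit to be $1$; on $0$ it would enter \texttt{0b.boot2.0}.
\item \texttt{5.root.2} then reads the third bit, writes $0$ whichever value it reads, and enters the root node \texttt{N54}. This restores \texttt{110}, and since all later path bits are zero, control descends to the first leaf of \texttt{main}.
\end{itemize}
The read-one transition of \texttt{5.root.2} is what makes the return loop back instead of halting or falling into the bootstrap. With it stated, your observation that none of the jump, cleanup, dispatch, or root transitions enters the register file finishes the argument, and this is exactly the paper's proof.
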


\begin{proof}
The root protocol distinguishes the prefixes \texttt{110} and \texttt{111}.
Prefix \texttt{110} enters the root of \texttt{main}.  Completion changes the
third root bit to one, giving \texttt{111}.  State \texttt{5.root.2} reads
that bit, writes zero, and enters the same \texttt{main} root, thereby
restoring \texttt{110}.  None of these root transitions enters the register
file.
\end{proof}

Combining \cref{lem:bootstrap,thm:backend,lem:restart} with the source-level
analysis of \cref{thm:source} proves correctness of the machine produced by
the generic dispatcher.

\subsection{Program-specific dispatcher}

The generic dispatcher moves left far enough to pass the program-counter word
under every control path admitted by the compiler, then scans right to its
leftmost one.  The final construction replaces this generic count with a
shorter run-of-zeros scanner.

The control DAG is finite and acyclic after binary lowering.  The included
routine \texttt{control\_words} enumerates every root-to-leaf path, prefixing
the live control word by \texttt{110}.  For the final 121-state source there
are 121 terminal paths, the maximum word length is 18, and the maximum run of
consecutive zeros is three.  The specialized dispatcher uses four states:
while moving left, a one resets the zero-run count and four consecutive zeros
terminate the leftward phase; the inherited state
\texttt{4.dispatch.scan} then scans right to the first one.

\begin{proposition}[Dispatcher specialization]
\label{prop:dispatcher}
After the bootstrap of \cref{lem:bootstrap}, replacing the generic dispatcher
by the four-zero scanner preserves every normal-form source boundary and hence
the entire subsequent tape computation.
\end{proposition}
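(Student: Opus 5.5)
We will prove that, from any post-bootstrap normal-form configuration \eqref{eq:tapenormal}, the four-zero scanner and the generic dispatcher leave the head at the same cell, namely the leftmost one of the control word $C$, in the same inherited state \texttt{4.dispatch.scan}, with an identical tape. Once that holds, both machines run the same subsequent transitions until the next dispatch, and induction on the number of dispatches, together with \cref{thm:backend} and \cref{lem:restart}, gives identical normal-form source boundaries and identical tape computations.

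\textbf{Step 1: what the scanner sees.} Take a dispatch configuration. By the control-word enumeration from \texttt{control\_words}, the live word is a prefix of one of the 121 root-to-leaf words, each beginning with \texttt{110}, of length at most $18$, and with at most three consecutive zeros. We must also include the transient forms seen during dispatch: the word with a pending branch bit, the word with the preparation marker, and the words after the \texttt{6.break.$k$}/\texttt{6.continue.$k$} suffix erasures. For each of these we check that it is again a prefix of an enumerated path, possibly followed by a marker one, so it still has no run of four zeros. The word begins with $1$. To its left, by \cref{lem:bootstrap}, there are only blanks, since cell $-3821$ is the leftmost nonblank cell and the control region never extends further left.

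\textbf{Step 2: where the scanner stops.} The dispatcher starts at or to the right of the word's right end, at a position fixed by the invoking state. While it moves left across the word, every zero-run has length at most $3$, so the counter resets before reaching $4$ and the leftward phase cannot end inside $C$. This requires the gap between the head's starting cell and the word's right end to be under four zeros; the marker placement in \eqref{eq:tapenormal} supplies this and we will verify it. After passing the leading $1$, the scanner reads four blanks and stops at most four cells left of $C$, strictly left of it. It writes nothing, so the tape is unchanged. The generic dispatcher also stops strictly left of $C$, at a bounded distance inside the blank region. From either position, \texttt{4.dispatch.scan} moves right over blanks to the same first one, which is the leftmost symbol of $C$. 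The two executions therefore merge in the same state, at the same head position, on the same tape.

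\textbf{Main obstacle.} The hard part is Step 1: confirming that \emph{every} word present at a dispatch entry has no run of four zeros. This covers intermediate words during break/continue erasure and words carrying the marker bit, not only the 121 terminal paths. We would handle it by enumerating the finitely many dispatch-entry word shapes directly from the generated table, by invoking state and word, and checking the run-length bound mechanically. We would also check that no dispatch is entered before bootstrap completes, which is why the proposition is restricted to post-bootstrap configurations.
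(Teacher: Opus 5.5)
Your core argument is the paper's. The scanner cannot stop inside the suffix between its entry cell and $C$, nor inside the live word, which is a prefix of one of the 121 certified terminal words and so contains no \texttt{0000}. It therefore stops only in the blank region left of $C$. From there \texttt{4.dispatch.scan} reaches the leftmost one of $C$ exactly as the generic dispatcher does, and induction over returns finishes. Two of your supporting claims are wrong, although neither breaks this core.

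\textbf{The gap bound.} Normal form does not bound the zeros between the scanner's start and $C$: the gap $0^g$ in \eqref{eq:tapenormal} is thousands of cells wide. The bound holds for a different reason. The scanner is entered only from \texttt{6.continue.0} on reading a one, and the unchanged \texttt{6.continue.0} loop crosses any zero gap itself.
\begin{itemize}
\item After a leaf's preparation phase, \texttt{1b.reg.prep\_2} has filled the gap with ones. The scanner therefore sees $1\,0\,1^k$ (marker, one zero, auxiliary extension) before reaching $C$.
\item After register returns, breaks and continues, the scanner starts immediately to the left of a one of $C$.
\end{itemize}
This is the content of the paper's ``fixed return suffix'' remark.

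\textbf{The bootstrap check.} Your proposed check that no dispatch is entered before the bootstrap completes would fail. \texttt{0a.boot1.C} calls \texttt{3.m.inc}, whose preparation exits through \texttt{6.continue.0}. In addition, \texttt{0b.boot2.0} enters \texttt{6.continue.0} directly, and \texttt{5.root.1} reading zero routes back into boot2. So the four-zero scanner is used repeatedly during the bootstrap. The restriction to post-bootstrap configurations is justified instead by \cref{lem:bootstrap}: it is a literal replay of the already-specialized named table. Hence no comparison with the generic dispatcher is needed before the first \texttt{main} node.
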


\begin{proof}
By the normal-form invariant, all cells strictly to the left of the
program-counter word are zero and the backend never writes there.
On entry from \texttt{6.continue.0}, the scanner moves left from the return
suffix toward the control word.  The fixed return suffix shown in
\cref{app:builder} contains no run of four zeros.  The delivered dispatcher
certificate exhaustively lists all 121 live control words and verifies that
none contains four consecutive zeros.  Therefore the scanner cannot terminate
inside the return suffix or inside the live control word.

After passing the leftmost one of the control word, the scanner is in the
unchanged blank region, so it necessarily encounters four consecutive zeros.
Its final transition moves one cell right and enters
\texttt{4.dispatch.scan}; that state moves right across zeros to the first
one, which is exactly the leftmost one of the control word.  From that point
the root protocol is identical to the generic dispatcher.  Thus both
dispatchers reach the same root state with the same tape and logical register
file.  Induction over normal returns proves equality of all subsequent
normal-form boundaries.

The finite bootstrap itself is covered separately by
\cref{lem:bootstrap}, which is performed on the already specialized named
table.  Thus no unproved assumption about pre-main dispatcher behavior is
needed.
\end{proof}

\begin{remark}[Finite verification boundary]
The infinite-state part of the source-to-tape argument is contained in
\cref{lem:backendprimitive,lem:controldag,thm:backend,prop:dispatcher}.
The remaining construction facts are finite: the bootstrap computation, the
generated control-word list, the named transition table, and the state-count
quotients.  The appendices include the exact code and certificates that check
those facts.  The proof is therefore computer-assisted, but the numerical
register semantics are not inferred from finite test ranges.
\end{remark}

\subsection{All-tapes minimization to 121 states}

After specializing the dispatcher, we apply all state identifications that are
valid for arbitrary tape contents.  The backend's single omitted framework
transition, \texttt{1b.reg.prep\_1} on read one, is interpreted as immediate
halting and is explicitly completed that way in the canonical table.  The
generated named table is then minimized by ordinary strong bisimulation. Two states are identified only if, for both read
symbols, they write the same symbol, move in the same direction, and their
targets lie in the same equivalence classes. This quotient preserves every
tape trajectory, not merely the blank-tape one. The canonicalized result has
121 working states and 242 transitions.

\begin{corollary}[Semantics of the 121-state table]
\label{cor:121semantics}
Starting on the blank tape, the delivered 121-state table simulates the
five-register source of \cref{app:source}.  It therefore halts exactly on the
first failure of \eqref{eq:criterion2}.
\end{corollary}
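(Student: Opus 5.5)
The proof composes results already in the paper, together with the observation that strong-bisimulation quotients preserve every tape trajectory. We fix three tables: the generic-dispatcher machine produced by the builder, the named table obtained from it by the four-zero dispatcher specialization (with \texttt{1b.reg.prep\_1} on read one completed as halting), and the delivered 121-state canonical quotient of the named table.

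First, the generic-dispatcher machine simulates the source. By \cref{thm:backend}, from any normal-form configuration it reaches the next normal-form source boundary in finitely many steps, with registers and control node updated by exactly one source small step. By \cref{lem:restart}, a normal return from \texttt{main} re-enters \texttt{main} with the register file unchanged. So every source execution from a normal-form configuration with all registers zero and prefix \texttt{110} is simulated step for step.

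Second, we transfer this to the named table run from the blank tape. \Cref{lem:bootstrap} is a statement about the named table itself. It shows that after $89{,}775{,}610$ transitions the blank-tape run reaches \texttt{N54}, the first \texttt{main} decision node, in normal form \eqref{eq:tapenormal} with all five logical registers zero. From there, \cref{prop:dispatcher} says the specialized dispatcher reaches the same root state with the same tape at every normal return. The two machines therefore pass through the same sequence of normal-form source boundaries. The halting completion at \texttt{1b.reg.prep\_1} is never triggered, since \cref{lem:backendprimitive} shows that the head reads zero there in every simulated step. Hence the named table simulates the source, and by \cref{thm:source} it halts exactly at the first $n$ where \eqref{eq:criterion2} fails.

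Third, we pass to the 121-state table. Let $\pi$ map each named state to its bisimulation class. By the definition of the quotient, for each read symbol a state $s$ and its class $\pi(s)$ write the same symbol, move the same way, and go to targets related by $\pi$ (halt maps to halt). Induction on the number of steps then shows that the quotient run from the blank tape has at every step the same tape and head position as the named run, with state $\pi$ of the named state. Canonical renumbering is a bijection and does not affect this. So the delivered table halts if and only if the named table halts, which is exactly on the first failure of \eqref{eq:criterion2}.

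The main obstacle is not this assembly but making sure the hypotheses of \cref{prop:dispatcher} and \cref{thm:backend} hold at the configuration produced by \cref{lem:bootstrap}: the bootstrap must land in genuine normal form, including the blank region to the left of the control word and enough gap for the longest control word. I would check this directly against the support \eqref{eq:boot-support}. The control region starts at $-3821$, everything to its left is blank, and the marker sits at $3$, which leaves far more than the 18 cells needed for control words.
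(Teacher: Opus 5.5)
Your proposal is correct and follows essentially the same route as the paper. \Cref{lem:bootstrap,thm:backend,lem:restart,prop:dispatcher} show that the specialized named predecessor simulates the source from the blank tape; the all-tapes strong-bisimulation quotient carries that blank-tape trajectory over to the 121-state table up to state names; and \cref{thm:source} then gives halting exactly at the first failure of \eqref{eq:criterion2}. Your additional points (the explicit step-by-step induction for the quotient, the unreachable read-one completion at \texttt{1b.reg.prep\_1}, and the normal-form check against the bootstrap support) are details the paper places in the cited lemmas rather than repeating them in the corollary's proof.
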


\begin{proof}
The specialized named predecessor simulates the source by
\cref{lem:bootstrap,thm:backend,lem:restart,prop:dispatcher}.  The final
121-state table is an all-tapes strong-bisimulation quotient of that
predecessor, so it has the same blank-tape trajectory up to state names.
Apply \cref{thm:source}.
\end{proof}

\section{Reachability-restricted reduction from 121 to 120 states}

The final state reduction is not justified by ordinary bisimulation, because
the two candidate states disagree on one read symbol. Instead, we prove that
the disagreeing state/read pair is unreachable from the blank initial
configuration. This reachability fact permits a projection from 121 to 120
states while preserving the complete blank-tape trajectory.

\subsection{Five-cell window invariant}

For a working state $q$ of the 121-state machine, let $S_q$ be a set of
five-bit words. A word records the tape cells at offsets $-2,-1,0,1,2$ from
the head, with the read cell at the center.

The initial pair is state $0$ with window $00000$. Given an admitted pair
$(q,w)$, execute the actual transition on the center symbol, shift the
five-cell window according to the head move, and admit both possibilities $0$
and $1$ for the newly exposed edge cell. A halting target has no successor
obligation. Repeating this operation reaches a finite fixed point because
there are only $121\cdot32$ possible pairs.

The delivered certificate consists of one 32-bit membership mask per old
state. It contains 1,621 admitted state-window pairs. The standalone checker
verifies the initial pair and 3,218 successor obligations. Crucially, an
unseen exterior cell is always allowed to be either symbol; no blankness
assumption is propagated beyond the observed window.

\begin{lemma}[Window invariant]
\label{lem:window}
Every state/window pair occurring at any finite time in the actual blank-tape
run of the 121-state machine belongs to the delivered certificate.
\end{lemma}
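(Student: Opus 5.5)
The plan is an induction on the number of transitions of the actual blank-tape run, with the certificate serving as an inductive invariant. Write the run as a sequence of configurations $(q_t,h_t,T_t)$: the state, the head position and the tape. Let $w_t$ be the five-bit word $T_t(h_t-2)\cdots T_t(h_t+2)$. The claim is that $(q_t,w_t)\in S_{q_t}$ for every $t$ at which $q_t$ is a working state.

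\textbf{Base case.} At $t=0$ the tape is blank and the state is $0$, so $(q_0,w_0)=(0,00000)$. This is exactly the initial pair whose membership the standalone checker verifies.

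\textbf{Inductive step.} Suppose $(q_t,w_t)$ is admitted and $q_t$ is a working state. The transition $\delta(q_t,c)$, with $c$ the centre bit of $w_t$, writes a symbol $a$, moves in direction $d$ and enters $q_{t+1}$.
\begin{itemize}
\item If $q_{t+1}=H$, the run stops and nothing remains to prove.
\item Otherwise the new window is obtained from $w_t$ as follows: replace the centre bit by $a$, shift by one cell in direction $d$, and append one newly exposed edge cell, namely $T_{t+1}(h_{t+1}\pm2)$.
\end{itemize}
The exposed cell was outside the old window, so its value is unknown in general. However, the successor obligation attached to $(q_t,w_t)$ admits \emph{both} values of that bit. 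So whatever the actual tape holds there, $(q_{t+1},w_{t+1})$ is one of the two successor pairs that the checker confirmed lies in $S_{q_{t+1}}$. This completes the induction.

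\textbf{Soundness of the update rule.} The substantive point is that the abstract successor correctly reflects the concrete step. It depends only on three facts:
\begin{itemize}
\item a transition alters only the scanned cell;
\item the head moves by exactly one cell, since there are no stay-put moves;
\item the four surviving window cells keep their values, apart from the written centre.
\end{itemize}
I would state this as a small lemma about the shift map on words, treating the cases $d=L$ and $d=R$ separately.

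\textbf{Reduction to a finite check.} The remaining obligation is that the checker faithfully performs the closure test. For every admitted pair $(q,w)$ in the 1,621-element mask set, it must:
\begin{itemize}
\item apply the literal 121-state table entry $\delta(q,w_{\mathrm{centre}})$;
\item form both completions of the exposed bit;
\item test membership in the mask of the target state, skipping halting targets.
\end{itemize}
The 3,218 obligations are the $2\times1{,}621$ successors minus those pairs whose transition halts. With this, the lemma reduces to auditing a short checker against the transition table. I regard the main obstacle as this faithfulness: the bit-ordering convention of the 32-bit masks (which bit is offset $-2$), and the direction of the shift on $L$ versus $R$. I would guard against an error there by independently replaying the blank-tape run for a large number of steps and confirming that every visited pair lies in the certificate.
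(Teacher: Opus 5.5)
Your proposal is correct and follows essentially the same route as the paper: induction on time, starting from the admitted blank pair $(0,00000)$, with the closure check admitting both values of the newly exposed edge cell, so the actual successor window is always admitted. Your shift-map soundness lemma and your audit of the checker's bit-ordering convention spell out what the paper leaves implicit in its finite checker, and the replay sanity check is an optional addition; none of these changes the argument.
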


\begin{proof}
Induct on time. The blank initial configuration supplies the admitted pair
$(0,00000)$. If the actual current pair is admitted, the machine transition is
determined by the state and center bit. The next five-cell window is one of
the two successors checked by the certificate, according to the actual newly
exposed bit. Both possibilities were required to be admitted, so the actual
successor is admitted.
\end{proof}

The certificate excludes exactly two state/read pairs: $(37,0)$ and $(97,0)$.
State 97 is the canonical image of named state \texttt{6.continue.2}; state 32
is the canonical image of control state \texttt{N24}. Their read-one actions
agree. Their read-zero actions differ, but $(97,0)$ is absent from the
invariant.

\subsection{Projection theorem}

\begin{theorem}[Reachability-restricted state projection]
\label{thm:projection}
Let $M$ be a deterministic binary TM and $M'$ another such TM. Let $\pi$ map
working states of $M$ to working states of $M'$ and map halt to halt. Assume
that $\pi$ maps the initial state of $M$ to the initial state of $M'$. Suppose
a local-window invariant for $M$ contains every pair in the blank-tape run.
Assume that for every state/read pair admitted by that invariant, the
transition of $M'$ from $(\pi(q),s)$ has the same write and move as the
transition of $M$ from $(q,s)$, and its target is the $\pi$-image of the
target of $M$. Then the two blank-tape runs have identical tapes and head
positions at every time, with states related by $\pi$. They therefore halt at
exactly the same time or both run forever.
\end{theorem}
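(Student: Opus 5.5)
The plan is a simulation argument by induction on time $t$. Write the blank-tape configurations as $c_t=(q_t,T_t,h_t)$ for $M$ and $c'_t=(q'_t,T'_t,h'_t)$ for $M'$, where $T$ is the tape and $h$ the head position. The inductive hypothesis at time $t$ has two parts. First, if $M$ has not halted by time $t$, then neither has $M'$, and $T_t=T'_t$, $h_t=h'_t$, and $q'_t=\pi(q_t)$. Second, if $M$ has halted at some time $s\le t$, then $M'$ halted at exactly time $s$ with the same tape and head position.

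The base case $t=0$ holds by assumption. Both tapes are blank, both heads are at coordinate $0$, and $\pi$ sends the initial state of $M$ to the initial state of $M'$.

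For the inductive step, suppose neither machine has halted at time $t$. Let $s=T_t(h_t)$ be the common read symbol. By hypothesis (the window invariant, \cref{lem:window} in our application), the pair $(q_t,w_t)$ lies in the invariant, where $w_t$ is the actual five-cell window around the head. In particular the state/read pair $(q_t,s)$ is admitted.

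This is the one point needing care: the agreement hypothesis is stated for \emph{admitted state/read pairs}. So I will note that being in the run places the pair in the invariant, and hence its center symbol gives an admitted state/read pair. This needs nothing beyond projecting a state/window pair onto its state and center bit.

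The agreement hypothesis now says $M'$ at $(\pi(q_t),s)$ writes the same symbol, moves in the same direction, and targets $\pi$ of the target of $M$. Hence $T_{t+1}=T'_{t+1}$ and $h_{t+1}=h'_{t+1}$. Because $\pi$ maps halt to halt and (implicitly) working states to working states, $M'$ enters halt exactly when $M$ does. If the target is a working state $r$, then $q'_{t+1}=\pi(r)$, which continues the induction. If the target is $H$, both machines halt at time $t+1$ with identical tapes and heads.

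Since $\pi$ sends working states into working states, neither machine can halt strictly before the other. The conclusion follows: either both runs are infinite and agree at every time, or both halt at the same step.

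The main obstacle is only bookkeeping. The invariant must be applied to the \emph{actual} run of $M$, not to that of $M'$, and the induction must be arranged so that agreement up to time $t$ is established before it is used. Nothing about the states of $M'$ outside $\pi$'s image is ever needed.
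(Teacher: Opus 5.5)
Your proposal is correct and is essentially the paper's proof. Both argue by induction on time. The window invariant places the actual state/read pair of $M$'s run in the verified domain, the projection hypothesis then matches the write, the move, and the $\pi$-image of the target, and halting is handled because $\pi$ sends halt to halt and working states to working states. Your extra bookkeeping, applying the invariant to $M$'s run rather than $M'$'s and extracting the admitted state/read pair from the state/window pair, makes explicit what the paper's short argument leaves implicit.
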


\begin{proof}
Induct on time. Initially the tapes, head positions, and projected states
agree. By \cref{lem:window}, the actually used state/read pair is in the
verified domain. The projection hypothesis therefore gives the same write and
head move and a projected target state. Hence the relation is preserved. The
halting case is included by mapping halt to halt.
\end{proof}

The delivered projection merges old states 32 and 97 and leaves the remaining
classes distinct. The standalone verifier checks the projection equations on
all 240 admitted state/read pairs. The quotient has exactly 120 states. The
arbitrary read-zero action chosen for the merged class is irrelevant to the
blank-tape run because the only old member on which it would disagree cannot
read zero under the invariant.

\begin{corollary}
\label{cor:120eq121}
The 120-state table and the 121-state table have exactly the same blank-tape
halting behavior.
\end{corollary}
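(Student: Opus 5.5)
This is a direct instance of \cref{thm:projection}. Take $M$ to be the delivered 121-state table and $M'$ the 120-state table. Let $\pi$ be the delivered projection: it sends old states 32 and 97 to a single merged class, sends every other old state to its own distinct new state, and sends halt to halt. The proof will check the hypotheses of that theorem and then read off the conclusion.

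\textbf{Hypotheses.} We check three things in order.
\begin{enumerate}[label=(\roman*)]
\item $\pi$ maps state $0$ to the initial state $0$ of $M'$. This follows from the canonical numbering of the quotient, since neither merged state is $0$.
\item The window invariant of \cref{lem:window} contains every state/window pair of the blank-tape run of $M$. In particular, every state/read pair that actually occurs is among the admitted ones. That lemma already proves this by induction on time.
\item For every admitted pair $(q,s)$, the transition of $M'$ at $(\pi(q),s)$ has the same write and move as $M$ at $(q,s)$, and its target is $\pi$ of the target of $M$. This is the finite check over the 240 admitted pairs performed by the standalone verifier.
\end{enumerate}

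\textbf{Why the one disagreement is harmless.} Conceptually, every unmerged state $q$ satisfies (iii) because its transitions are copied with targets relabelled by $\pi$. For the merged class, the read-one actions of states 32 and 97 agree, so they are compatible under $\pi$. On read zero, $M'$ uses the action of state 32. The conflicting pair $(97,0)$ is one of the two pairs excluded by the certificate, so it never occurs in the run and imposes no obligation.

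\textbf{Conclusion.} \Cref{thm:projection} gives identical tapes and head positions at every time, with states related by $\pi$. Hence $M$ reaches halt at step $t$ exactly when $M'$ does, and otherwise both run forever. So the two blank-tape halting behaviors coincide. Combined with \cref{cor:121semantics}, this is the required equivalence.

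\textbf{Main obstacle.} The main obstacle is not logical but the soundness of the finite certificate. We must be sure that the successor closure really allows both values for each newly exposed edge cell, so that no unjustified blankness assumption slips in. We must also be sure that the target relabelling under $\pi$ is correct for transitions *into* 32 or 97 from other states. I would first re-check these two points against the checker code. In particular, I would confirm that every transition of $M$ targeting 97 is redirected in $M'$ to the merged class.
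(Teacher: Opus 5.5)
Your proposal is correct and follows the paper's route: the corollary is a direct application of \cref{thm:projection}. The blank-tape coverage comes from \cref{lem:window}, the verifier checks the projection equations on the 240 admitted state/read pairs, and the certificate excludes the one disagreeing pair $(97,0)$. One small correction: $\pi(0)=0$ holds because the quotient is renumbered breadth-first from the class of old state $0$ (and the verifier checks \texttt{mapping[0]==0}), not because state $0$ is left unmerged.
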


\section{Proof of the main theorem}

\begin{proof}[Proof of \cref{thm:main}]
By \cref{thm:criterion}, RH holds exactly when the integer inequality
\eqref{eq:criterion2} holds for every $n\ge2$. By \cref{thm:source}, the
five-register source halts exactly on the first failure of that inequality.
By \cref{prop:tailsharing}, the revised scaling-loop grouping preserves these
source semantics. By \cref{cor:121semantics}, the delivered 121-state table has exactly this
blank-tape source semantics. By \cref{cor:120eq121}, the 120-state projection
has exactly the same blank-tape halting behavior as the 121-state table.
Therefore $M_{120}$ halts exactly when the RH-equivalent universal inequality
fails, i.e. exactly when RH is false.
\end{proof}

\section{Computational verification}

The supplied reproduction driver checks the finite construction independently
of numerical evidence for the Riemann hypothesis.
\begin{itemize}
\item The preceding 121-state machine is regenerated exactly and compared
with the archived transition table before the new specialization is applied.
\item The revised source produces 122 reachable named states, consisting of
59 framework states and 63 decision states. Its ordinary-bisimulation quotient
has 121 working states and 242 transitions.
\item The control-equivalence certificate contains 84 operation-boundary
pairs and 125 outcome edges, including a halting pair. Every possible
conditional-decrement outcome is checked.
\item The dispatcher certificate lists all 121 terminal control paths,
with maximum word length 18 and maximum consecutive-zero run three.
\item The five-cell invariant contains 1,621 admitted state-window pairs
and 3,218 closure obligations. The 121-to-120 projection is verified on all
240 admitted state/read pairs. Deliberately corrupted certificates and
reachable transition writes are rejected.
\item The final table has 120 working states and 240 transitions. Its
reversible transition integer has 2,140 bits; decoding reconstructs every
transition.
\item The literal bootstrap checker reaches the asserted initial main
configuration after 89,775,610 transitions. A literal tape runner then checks
completed endpoints $2,3,4$ at 92,233,600, 113,387,256, and 208,951,810
transitions, respectively, with the exact expected live-register tuples.
\item Exact host-integer checks cover 255 complete stages through $n=256$,
32,640 intermediate product identities, 100,000 scaling-clock inputs, and
10,000 unit-conserving scaling cases. These use the proved counter contracts
and are distinguished from literal tape executions.
\end{itemize}

These computations verify finite implementation claims. The infinite-run
statement follows from the analytic criterion, the uniform register and
control semantics, and the inductive state-projection theorem.

\section{Separate NQL implementation}
\label{sec:nqlcheck}

For the preceding 121-state version, a cross-check using a distinct implementation path was performed: the same counterexample search was
written in ordinary NQL and compiled with an unmodified snapshot of the public
NQL compiler \cite{NQL}. This verification implementation intentionally omits
the program-specific dispatcher, specialized register microcode, and
reachability-restricted projection used to obtain the 120-state count.

The NQL program computes the same quantities $A_n-1$, $\beta_n(A_n-1)$,
$\Delta_n$, and the same simultaneous-scaling test of \cref{eq:criterion}. The
unmodified compiler produces 1,102 raw reachable states and 601 states after
its own compression, with 1,202 transitions. A further ordinary
strong-bisimulation pass does not reduce the 601-state count. Thus the
separate implementation is intentionally much larger:
\[
601\text{ states (plain NQL)}\qquad\text{versus}\qquad120\text{ states (specialized construction)}.
\]
The 601-state implementation is not intended as a state-count competitor. Its
purpose is to provide a separate implementation of the same mathematical
criterion using a conventional compiler, thereby distinguishing correctness of
the arithmetic search from correctness of the specialized 120-state
implementation.

The NQL source was parsed with the actual NQL grammar and executed with a
small generic natural-number AST interpreter. Against an independent oracle
computing $A_n=n!P(n)$ from a sieve rather than the self-sieve, it passed all
stages $n=2,\ldots,256$, comprising 255 complete stage checks and 32,640
intermediate product identities. Separately, assertion-heavy NQL helper
programs were compiled to binary blank-tape Turing machines and executed
successfully for $n=2,3,4,5$; the $n=4$ case covers the exceptional composite
and $n=5$ covers the following prime update.

This is a separate implementation check of the mathematical search, not a
proof that the 120-state table and the 601-state table are step-for-step
equivalent. The specialized 120-state path is instead justified by the source
invariants, dispatcher certificate, and reachability projection proved in this
paper. The two verification routes are complementary.

\section{Reproducibility}

The complete source required to regenerate and verify the principal finite
artifacts is printed in the appendices and included in the accompanying
directory. The principal shell commands are
\begin{lstlisting}[language=bash]
python reproduce.py --test
python verify_certificate.py
g++ -O3 -std=c++17 bootstrap_check.cpp -o bootstrap_check
./bootstrap_check machines/RH_121_raw_named.tm
g++ -O3 -std=c++17 stage_check.cpp -o stage_check
./stage_check machines/RH_121_raw_named.tm
\end{lstlisting}
The exact transition table in \cref{app:table} can be inspected independently
of the generator. The accompanying directory contains the revised generator, transition tables,
source-control certificate, projection certificates, and literal tape checkers.
The separate NQL results in \cref{sec:nqlcheck} concern the unchanged arithmetic
criterion and are reported from the preceding verification work.

\section{Discussion and limitations}

The present construction reduces the public 744-state RH benchmark to an
explicit 120-state machine in the same binary blank-tape model. The reduction
is obtained by combining an RH-equivalent arithmetic reformulation with a
compact five-register implementation and two program-specific tape reductions.
In particular, $A_n$ combines factorial growth with prime information, the
self-sieving recurrence eliminates a separate primality routine, the offset
digit count replaces explicit logarithmic evaluation, and the scaling
procedure implements the required error allowance by repeated halving. The
program-specific dispatcher and the reachability-restricted projection then
reduce the tape-level control overhead.

The separate 601-state NQL implementation provides a useful comparison: the
same mathematical search can be executed through a conventional compiler
without the specialized state reductions. The substantial difference between
601 and 120 states therefore reflects the effect of representation and
implementation choices rather than a different stopping criterion.

No lower bound is proved, and the construction does not imply that 120 states
is close to optimal. Nor does it prove RH. The theorem establishes an explicit
equivalence: the 120-state machine halts exactly when the RH-equivalent
universal integer criterion in \cref{thm:criterion} fails. The public sources
checked on 22 September 2026 identify 744 states as the previously documented
benchmark; accordingly, 120 is a candidate new state-count record and, subject
to independent verification, the smallest publicly documented explicit
construction found in the sources examined here.

The analytic forward implication uses Schoenfeld's RH-conditional estimate in
\cref{thm:Schoenfeld}; the converse uses analytic continuation and the
functional equation of the Riemann zeta function. The finite source-to-tape
machinery is included in full but is not formalized in a proof assistant.
Independent audit of the compiler path, dispatcher specialization, and
reachability certificate is therefore appropriate before treating the
state-count claim as certified.

\appendix
\clearpage
\section{Final high-level source}
\label{app:source}

The appendices provide the complete implementation artifacts underlying the
finite construction. The final source consists of the baseline five-register
program, a subclass that modifies the deficit-seeding phase, and a final
subclass that shares the scaling-loop continuation. All three are reproduced verbatim to permit line-by-line comparison with the preceding
proofs.

\subsection{Baseline source}

The following Python source is \texttt{baseline\_source.py}.
\begin{lstlisting}[language=Python,basicstyle=\ttfamily\scriptsize,breaklines=true,columns=fullflexible,keepspaces=true]
"""Five-register RH counterexample search with an offset-division digit count.

No uncounted arithmetic operations, tape input or infinite background are used.
At main entry m=M and all other registers are zero. A passing stage tests n=M+2
and returns with m=n-1, all other registers zero. The inherited root restarts main.
"""
from pathlib import Path
import sys
sys.path.insert(0, str(Path(__file__).resolve().parent/'vendor'))
from builder import Builder, subroutine
from dispatch import specialize_dispatch

class RHBuilder(Builder):
    def __init__(self, fast=True):
        super().__init__()
        self.fast = fast
        self.dispatch_certificate = None
        for name in ('m','j','x','f','v'):
            setattr(self,name,self.reg(name))

    def label(self,name,seq,zeros=0,ones=0):
        while isinstance(seq,(tuple,list)) and len(seq)==1:
            seq=seq[0]
        if isinstance(seq,int):
            if name not in self.seqs[seq].unresolved:
                return seq
            seq=self.seqs[seq].seq
        return super().label(name,seq,zeros,ones)

    @subroutine
    def move(self,a,b):
        """b += a, a = 0, for distinct registers."""
        return [self.while_decnz(a,b.inc)]

    @subroutine
    def addval(self,a,b,tmp):
        """b += a, preserving a; distinct registers; tmp starts/ends zero."""
        return [self.while_decnz(a,[b.inc,tmp.inc]),self.move(tmp,a)]

    @subroutine
    def fold(self):
        """(x,j,v) <- (x+j,j+v,0). The two increments commute."""
        loop=self.label('fold',[self.j.decnz,
            [self.v.inc,[self.x.inc,'continue_fold']]])
        return [loop,self.move(self.v,self.j)]

    @subroutine
    def transform(self):
        """(x,j,v) <- (x*(j+v+1)+j,j+v,0); m starts/ends zero.

        Fold first, then repeat 'x += 1; fold' once for each saved input unit.
        The post-tested loop is algebraically the earlier prelude-plus-loop.
        """
        loop=self.label('transform',[self.fold(),
            [self.m.decnz,[self.x.inc,'continue_transform']]])
        return [self.move(self.x,self.m),loop]

    @subroutine
    def twice(self):
        return [self.transform(),self.transform()]

    @subroutine
    def divide(self):
        """For m=0, divide B=x by J=j+v+1, even when v is initially nonzero.

        Output (x,j,v)=(B//J,B%J,J-1-B%J); m is again zero.
        """
        tick=self.label('tick',[[self.v.decnz,[self.j.inc,'break_tick']],
            [self.move(self.j,self.v),self.x.inc]])
        return [self.move(self.j,self.v),self.move(self.x,self.m),
                self.while_decnz(self.m,tick)]

    @subroutine
    def if_nonzero(self,a,body):
        return [self.if_decnz(a,[a.inc,body])]

    @subroutine
    def product_step(self):
        """Advance J and compute A_J-1, where A_J=J! product_{p<=J}p."""
        return [self.j.inc,self.divide(),
                self.if_nonzero(self.v,self.transform()),self.twice()]

    @subroutine
    def dec_f(self):
        """Saturating decrement, returning normally on either result."""
        return [self.label('sat',[self.f.decnz,'break_sat'])]

    @subroutine
    def digits(self):
        """Subtract b_j+1(x) from f, saturating, where b is defined by
        x <- (x-1)//(j+1) while x>0. Consume x and restore the divisor.

        Unlike ordinary digit counting, the loop condition's decrement is
        intentional. Remainder and complement remain split between divisions.
        """
        loop=self.label('digits',[self.x.decnz,
            [self.divide(),[self.dec_f(),'continue_digits']]])
        return [loop,self.move(self.v,self.j)]

    @subroutine
    def halve(self,a):
        """a=j or f: a <- floor(a/2), v=0 scratch. For j, collect removed
        units in m. Register x is untouched and can count scaling iterations.
        """
        prefix=[self.m.inc] if a==self.j else []
        return [self.while_decnz(a,[prefix,self.if_decnz(a,self.v.inc)]),
                self.move(self.v,a)]

    @subroutine
    def compare(self):
        return [self.while_decnz(self.x,self.dec_f()),self.if_decnz(self.f,'halt')]

    @subroutine
    def width(self):
        """At entry j=n-1, f=delta, other registers zero. Test
        delta//2**q <= q, q=floor(log_4(3n)), and recover n-1 in m.
        """
        return [self.while_decnz(self.j,[self.m.inc,self.halve(self.j),
            self.halve(self.j),self.halve(self.f),self.x.inc]),self.compare()]

    def stage_code(self):
        prod=self.label('product',[self.product_step(),[self.f.decnz,'continue_product']])
        return [self.move(self.m,self.f),prod,
                self.addval(self.j,self.f,self.v),self.digits(),self.width()]

    @subroutine
    def main(self):
        return self.stage_code()

    def generate(self):
        table=super().generate()
        if self.fast:
            table,self.dispatch_certificate=specialize_dispatch(self,table)
        return table
\end{lstlisting}

\subsection{Split-divisor source specialization}

The following Python source is \texttt{rh\_source.py}.
\begin{lstlisting}[language=Python,basicstyle=\ttfamily\scriptsize,breaklines=true,columns=fullflexible,keepspaces=true]
"""RH source with an unrecombined split divisor at the digit-phase entrance.

The mathematical criterion and order of endpoints are unchanged from RH_123.
The inherited source/backend and dispatcher are in baseline_source.py and vendor/.
"""
from baseline_source import RHBuilder as Baseline, subroutine

class RHBuilder(Baseline):
    @subroutine
    def seed_deficit(self):
        """For f=v=0: (j,f,v)=(J,0,0) -> (0,J,J).

        The divisor j+v+1 is unchanged. The next divide accepts this split form.
        """
        return [self.while_decnz(self.j,[self.v.inc,self.f.inc])]

    def stage_code(self):
        prod=self.label('product',[self.product_step(),[self.f.decnz,'continue_product']])
        return [self.move(self.m,self.f),prod,self.seed_deficit(),
                self.digits(),self.width()]
\end{lstlisting}

\subsection{Scaling-loop specialization}

The following final subclass changes only the binary grouping of the scaling
loop. The complete generated control graph has one fewer decision state.
The preceding split-divisor subclass is the Python file
\texttt{rh\_source.py}; this final subclass is the Python file
\texttt{rh120\_source.py}.

\begin{lstlisting}[language=Python,basicstyle=\ttfamily\scriptsize]
"""120-state RH source: share the scaling loop's increment/continue tail.

All counter operations and the tested RH criterion are unchanged from RH_121.
Only the binary grouping is changed before the local jump is resolved.
"""
from rh_source import RHBuilder as Previous, subroutine

class RHBuilder(Previous):
    @subroutine
    def width(self):
        prefix=[self.m.inc,self.halve(self.j),self.halve(self.j),self.halve(self.f)]
        loop=self.label('width',[self.j.decnz,[prefix,[self.x.inc,'continue_width']]])
        return [loop,self.compare()]
\end{lstlisting}

\section{Program-specific dispatcher}
\label{app:dispatch}

The following Python source is \texttt{dispatch.py}.
\begin{lstlisting}[language=Python,basicstyle=\ttfamily\scriptsize,breaklines=true,columns=fullflexible,keepspaces=true]
"""A checked, program-specific replacement of the read-only dispatcher scan.

This is not an all-tapes bisimulation: its premise is the live control-word
language of the compiled source. See README for the invariant and bootstrap check.
"""
def control_words(builder):
    rows=[]
    def visit(node,prefix,active):
        if isinstance(node,int):
            if node in active:
                raise ValueError('This certificate requires an acyclic control DAG')
            seq=builder.seqs[node].seq
            if len(seq)!=2:raise ValueError('Source has not been binarized')
            for bit,child in enumerate(seq):
                visit(child,prefix+str(bit),active|{node})
        else:
            rows.append((prefix,node))
    visit(builder.root,'110',frozenset())
    return rows

def specialize_dispatch(builder,table):
    words=control_words(builder)
    longest=max(max(map(len,w.split('1'))) for w,_ in words)
    # Boot2 is alternating control with a two-one suffix. Normal operation
    # adds a phase marker 1 and at most one intervening zero before the sentinel.
    k=max(longest,2)+1
    out=dict(table)
    for i in range(k):
        out[f'4.run.{i}',1]=(1,'L','4.run.0')
        out[f'4.run.{i}',0]=(0,'L',f'4.run.{i+1}') if i+1<k else (0,'R','4.dispatch.scan')
    old_entry=out['6.continue.0',1]
    out['6.continue.0',1]=(1,'L','4.run.0')
    reached=set();todo=['0a.boot1.A']
    while todo:
        q=todo.pop()
        if q in reached:continue
        reached.add(q)
        for bit in (0,1):
            if (q,bit) in out:todo.append(out[q,bit][2])
    out={key:v for key,v in out.items() if key[0] in reached}
    old_states={q for q,_ in table};new_states={q for q,_ in out}
    cert={'live_control_prefix':'110','terminal_paths':len(words),
        'maximum_word_length':max(len(w) for w,_ in words),
        'maximum_consecutive_zeros':longest,'scanner_states':k,
        'old_entry_transition':old_entry,'new_entry_transition':out['6.continue.0',1],
        'removed_states':sorted(old_states-new_states),'added_states':sorted(new_states-old_states),
        'control_words':[{'bits':w,'terminal':v} for w,v in words],
        'scope':'Language/invariant-based scanner replacement, not an all-tapes state merge.'}
    return out,cert
\end{lstlisting}

\section{Register-machine backend}
\label{app:builder}

The following Python source, \texttt{builder.py}, is the complete
operational builder used for the selected source. It is a research
adaptation of Andrew J. Wade's MIT-0 TMBuilder; the provenance notice is
retained in the file.
\begin{lstlisting}[language=Python,basicstyle=\ttfamily\tiny,breaklines=true,columns=fullflexible,keepspaces=true]
"""Research adaptation of Andrew J. Wade's MIT-0 TMBuilder.
Source: https://raw.githubusercontent.com/LegionMammal978/turing_machine_explorer/main/TMBuilder.py
The naming/debugger layer is replaced; operational framework and IR follow Wade.
"""
from functools import lru_cache
from dataclasses import dataclass

class Reg(str):
    @property
    def inc(self): return '3.'+self+'.inc'
    @property
    def decnz(self): return '3.'+self+'.decnz'

@dataclass
class Seq:
    seq: tuple
    name: str | None = None
    unresolved: frozenset = frozenset()

def subroutine(f):
    def g(self,*args):
        args=tuple(self.add(v) for v in args)
        key=(f.__name__,*args)
        if key in self.memo: return self.memo[key]
        rv=tuple(v for v in (self.add(v2) for v2 in f(self,*args)) if v!=())
        self.memo[key]=rv
        if rv:
            k=self.add(rv)
            if isinstance(k,int): self.seqs[k].name=f.__name__
        return rv
    return g

class Builder:
    def __init__(self):
        self.regs={};self.seqs=[];self.lookup={};self.memo={};self.root=None
    def reg(self,name):
        if name not in self.regs: self.regs[name]=Reg(name)
        return self.regs[name]
    def add(self,tree):
        while isinstance(tree,(tuple,list)) and len(tree)==1: tree=tree[0]
        if not isinstance(tree,(tuple,list)): return tree
        if not tree: return ()
        key=tuple(v for v in (self.add(x) for x in tree) if v!=())
        if not key: return ()
        if len(key)==1: return key[0]
        if key not in self.lookup:
            unresolved=set()
            for v in key:
                if isinstance(v,int): unresolved.update(self.seqs[v].unresolved)
                elif v.startswith(('break_','continue_')): unresolved.add(v.split('_',1)[1])
            self.lookup[key]=len(self.seqs)
            self.seqs.append(Seq(key,None,frozenset(unresolved)))
        return self.lookup[key]
    def label(self,name,seq,zeros=0,ones=0):
        cf=('break_'+name,'continue_'+name)
        def contains(s):
            for e in s:
                if e in cf: return True
                if isinstance(e,int) and name in self.seqs[e].unresolved: return True
                if isinstance(e,(tuple,list)) and contains(e): return True
            return False
        def resolve(z,o,v):
            if v in cf: return '_break.'+str(z) if v==cf[0] else '_continue.'+str(o)
            if isinstance(v,int) and name in self.seqs[v].unresolved:
                return self.add(self.label(name,self.seqs[v].seq,z,o))
            if isinstance(v,(tuple,list)): return self.label(name,v,z,o)
            return v
        if not contains(seq): return seq
        assert len(seq)==2,(name,seq)
        return (resolve(zeros+1,ones,seq[0]),resolve(zeros,ones+1,seq[1]))
    def reachable(self):
        grey={self.root};black=set()
        while grey:
            i=grey.pop()
            if i not in black:
                yield i;black.add(i)
                grey.update(v for v in self.seqs[i].seq if isinstance(v,int))
    def breakout(self):
        length=max(len(s.seq) for s in self.seqs)
        while length>1:
            seen=set();match=None;reach=list(self.reachable())
            for i in reach:
                s=self.seqs[i].seq
                for off in range(len(s)+1-length):
                    p=s[off:off+length]
                    if p in seen: match=p;break
                    seen.add(p)
                if match: break
            if match:
                mi=self.add(match)
                for i in reach:
                    while True:
                        s=self.seqs[i].seq
                        if len(s)<=length: break
                        for off in range(len(s)+1-length):
                            if s[off:off+length]==match:
                                self.seqs[i].seq=s[:off]+(mi,)+s[off+length:];break
                        else: break
            else: length-=1
    def binary(self):
        i=0
        while i<len(self.seqs):
            s=self.seqs[i].seq
            if len(s)>2: self.seqs[i].seq=(s[0],self.add(s[1:]))
            i+=1
    @subroutine
    def while_decnz(self,var,body):
        return [self.label('loop',[var.decnz,'continue_loop' if body==() else [body,'continue_loop']])]
    @subroutine
    def if_decnz(self,var,body): return [[var.decnz,body]]
    @subroutine
    def if_not_decnz(self,var,body): return [self.label('fn',[[var.decnz,'break_fn'],body])]
    @subroutine
    def pair(self,out,a,b):
        return [self.label('loop',[self.while_decnz(a,[b.inc,out.inc]),[b.decnz,[[out.inc,self.while_decnz(b,a.inc)],'continue_loop']]])]
    @subroutine
    def unpair(self,a,b,n):
        return [self.label('loop',[n.decnz,[a.inc,[[b.decnz,'continue_loop'],[self.while_decnz(a,b.inc),'continue_loop']]]])]
    @subroutine
    def if_eq(self,a,b,body):
        return [self.label('fn',[[self.label('loop',[a.decnz,[[b.decnz,'continue_loop'],[self.while_decnz(a,()),'break_fn']]]),[b.decnz,[self.while_decnz(b,()),'break_fn']]],body])]
    def debug_assert_eq_val(self,*args): return ()
    def build(self,optimize=True):
        self.root=self.add(self.main())
        if optimize: self.breakout()
        self.binary()
        return self.generate()
    def generate(self):
        table={}
        def emit(q,s,w,d,n): table[q,int(s)]=(int(w),d,n)
        # Compact transcription of the operational framework, same labels as source.
        rows='''
0a.boot1.A 0 1 R 0a.boot1.B
0a.boot1.A 1 1 L 0a.boot1.C
0a.boot1.B 0 0 L 0a.boot1.A
0a.boot1.B 1 0 L 0a.boot1.D
0a.boot1.C 0 1 L 0a.boot1.A
0a.boot1.D 0 1 L 0a.boot1.B
0a.boot1.D 1 1 R 0a.boot1.E
0a.boot1.E 0 0 R 0a.boot1.D
0a.boot1.E 1 0 R 0a.boot1.B
0b.boot2.0 0 1 L 6.continue.0
0b.boot2.0 1 1 R 0b.boot2.1
0b.boot2.1 0 0 R 0b.boot2.0
0b.boot2.1 1 1 R 0b.boot2.2
1b.reg.prep_1 0 0 R 1b.reg.prep_2
1b.reg.prep_2 0 1 R 1b.reg.prep_2
1b.reg.prep_2 1 1 L 6.continue.0
2b.reg.-1.dec 0 0 R 2b.reg.-2.dec
2b.reg.-1.dec 1 1 R 2b.reg.-1.dec
2b.reg.-1.inc 0 0 R 2b.reg.-2.inc
2b.reg.-1.inc 1 1 R 2b.reg.-1.inc
2b.reg.-2.dec 0 1 L 2c.reg.return_1_1
2b.reg.-2.dec 1 0 R 2b.reg.dec.check
2b.reg.-2.inc 0 1 R 2b.reg.inc.shift_1
2b.reg.-2.inc 1 1 R 2b.reg.-2.inc
2b.reg.dec.check 0 0 L 2b.reg.-2.dec
2b.reg.dec.check 1 1 R 2b.reg.dec.scan_1
2b.reg.dec.scan_1 0 0 R 2b.reg.dec.scan_2
2b.reg.dec.scan_1 1 1 R 2b.reg.dec.scan_1
2b.reg.dec.scan_2 0 0 L 2b.reg.dec.shift_1
2b.reg.dec.scan_2 1 1 R 2b.reg.dec.scan_1
2b.reg.dec.shift_1 0 1 L 2b.reg.dec.shift_2
2b.reg.dec.shift_1 1 1 L 2b.reg.dec.shift_1
2b.reg.dec.shift_2 0 0 L 2c.reg.return_0_1
2b.reg.dec.shift_2 1 0 L 2b.reg.dec.shift_1
2b.reg.inc.shift_1 0 0 L 2c.reg.return_0_1
2b.reg.inc.shift_1 1 0 R 2b.reg.inc.shift_2
2b.reg.inc.shift_2 0 1 R 2b.reg.inc.shift_1
2b.reg.inc.shift_2 1 1 R 2b.reg.inc.shift_2
2c.reg.return_0_1 0 0 L 2c.reg.return_0_2
2c.reg.return_0_1 1 1 L 2c.reg.return_0_1
2c.reg.return_0_2 0 0 L 6.break.0
2c.reg.return_0_2 1 1 L 2c.reg.return_0_1
2c.reg.return_1_1 0 0 L 2c.reg.return_1_2
2c.reg.return_1_1 1 1 L 2c.reg.return_1_1
2c.reg.return_1_2 0 0 L 6.break.1
2c.reg.return_1_2 1 1 L 2c.reg.return_1_1
2e.reg.cleanup_1 0 0 R 2e.reg.cleanup_1
2e.reg.cleanup_1 1 0 R 2e.reg.cleanup_2
2e.reg.cleanup_2 0 0 L 0b.boot2.0
2e.reg.cleanup_2 1 0 R 2e.reg.cleanup_2
4.dispatch.0 0 0 L 4.dispatch.scan
4.dispatch.0 1 1 L 4.dispatch.0
4.dispatch.scan 0 0 R 4.dispatch.scan
4.dispatch.scan 1 1 R 5.root.1
5.root.1 0 0 R 0b.boot2.0
5.root.1 1 1 R 5.root.2
6.break.0 0 1 R 2e.reg.cleanup_1
6.break.0 1 0 L 6.break.0
6.break.1 0 0 L 6.break.0
6.break.1 1 0 L 6.break.1
6.continue.0 0 0 L 6.continue.0
'''
        for line in rows.strip().splitlines(): emit(*line.split())
        for j,r in enumerate(self.regs.values()):
            for op in ('inc','dec'):
                emit(f'2b.reg.{j}.{op}',0,0,'R',f'2b.reg.{j-1}.{op}')
                emit(f'2b.reg.{j}.{op}',1,1,'R',f'2b.reg.{j}.{op}')
            for state,op in ((r.inc,'inc'),(r.decnz,'dec')):
                emit(state,0,1,'R','1b.reg.prep_1')
                emit(state,1,0,'R',f'2b.reg.{j}.{op}')
        r=next(iter(self.regs.values()))
        emit('0a.boot1.C',1,1,'R',r.inc)
        emit('0b.boot2.2',0,0,'R',r.inc)
        emit('0b.boot2.2',1,1,'R',r.decnz)
        @lru_cache(None)
        def zeros(v):
            if not isinstance(v,int): return 0
            a,b=self.seqs[v].seq
            return max(1+zeros(a),zeros(b))
        mz=max(2+zeros(self.root),2)
        for j in range(1,mz+1):
            emit(f'4.dispatch.{j}',0,0,'L',f'4.dispatch.{j-1}')
            emit(f'4.dispatch.{j}',1,1,'L',f'4.dispatch.{j}')
        emit('6.continue.0',1,1,'L',f'4.dispatch.{mz}')
        for s in (0,1): emit('5.root.2',s,0,'R',f'N{self.root}')
        def jump(kind,j):
            for k in range(1,j+1):
                if kind=='break':
                    emit(f'6.break.{k}',0,0,'L',f'6.break.{k-1}')
                    emit(f'6.break.{k}',1,0,'L',f'6.break.{k}')
                else:
                    emit(f'6.continue.{k}',0,0,'L',f'6.continue.{k}')
                    emit(f'6.continue.{k}',1,0,'L',f'6.continue.{k-1}')
        for i in self.reachable():
            for s,v in enumerate(self.seqs[i].seq):
                q=f'N{i}'
                if isinstance(v,int): emit(q,s,s,'R',f'N{v}')
                elif v.startswith('_break.'):
                    level=int(v.split('.')[1])
                    if s==0 and level==0: emit(q,s,1,'L','6.continue.0')
                    else:
                        j=level-(1-s); assert j>=0
                        jump('break',j);emit(q,s,0,'L',f'6.break.{j}')
                elif v.startswith('_continue.'):
                    j=int(v.split('.')[1])-s;assert j>=0
                    jump('continue',j);emit(q,s,0,'L',f'6.continue.{j}')
                else: emit(q,s,s,'R',v)
        reach=set();stack=['0a.boot1.A']
        while stack:
            q=stack.pop()
            if q in reach: continue
            reach.add(q)
            stack += [table[q,s][2] for s in (0,1) if (q,s) in table]
        return {k:v for k,v in table.items() if k[0] in reach}

def count(table):
    states={q for q,s in table}
    return {'framework':sum(q[0].isdigit() for q in states),'decision_DAG':sum(q[0]=='N' for q in states),'total':len(states)}

def save(table,path):
    with open(path,'w') as f:
        f.write('#! start 0a.boot1.A\n')
        for (q,s),(w,d,n) in sorted(table.items()): f.write(f'{q} {s} {w} {d} {n}\n')
\end{lstlisting}

\section{All-tapes minimizer and table utilities}

\subsection{Strong-bisimulation minimizer}

The following Python source is \texttt{minimize.py}.
\begin{lstlisting}[language=Python,basicstyle=\ttfamily\scriptsize,breaklines=true,columns=fullflexible,keepspaces=true]
"""Strong bisimulation minimization: preserves every tape trajectory, up to states.
No transition deletion, heuristic halting assumptions, or finite-run equivalence.
"""
def minimize(t,start='0a.boot1.A'):
    states=sorted({q for q,s in t})
    # All names without outgoing instructions are halts.
    parts={q:0 for q in states}
    rounds=0
    while True:
        keys={};nparts={}
        for q in states:
            key=tuple(None if (q,s) not in t else (t[q,s][0],t[q,s][1],parts.get(t[q,s][2],-1)) for s in (0,1))
            nparts[q]=keys.setdefault(key,len(keys))
        rounds+=1
        if nparts==parts:break
        parts=nparts
    reps={}
    for q in states:reps.setdefault(parts[q],q)
    reps[parts[start]]=start
    mapping={q:reps[parts[q]] for q in states}
    out={}
    for (q,s),(w,d,n) in t.items():
        k=mapping[q],s;value=(w,d,mapping.get(n,'halt'))
        assert k not in out or out[k]==value
        out[k]=value
    return out,mapping,rounds
\end{lstlisting}
\subsection{Canonical table utilities}

The following Python source is \texttt{tables.py}.
\begin{lstlisting}[language=Python,basicstyle=\ttfamily\scriptsize,breaklines=true,columns=fullflexible,keepspaces=true]
from collections import deque
from pathlib import Path
import hashlib
def canonical(table,start='0a.boot1.A'):
    states={q for q,s in table}
    assert {n for w,d,n in table.values()}-states <= {'halt'}
    labels={start:0};queue=deque([start]);out={};missing=[]
    while queue:
        q=queue.popleft()
        for s in (0,1):
            if (q,s) not in table:
                assert (q,s)==('1b.reg.prep_1',1)
                w,d,n=1,'R','halt';missing.append((q,s))
            else:w,d,n=table[q,s]
            assert w in (0,1) and d in ('L','R')
            if n=='halt':dest=-1
            else:
                if n not in labels:labels[n]=len(labels);queue.append(n)
                dest=labels[n]
            out[labels[q],s]=(w,1 if d=='R' else -1,dest)
    assert len(labels)==len(states)
    return out,labels,missing

def write(table,path):
    text='# start 0; blank 0; halt H; one tape; two symbols; L/R only\n'
    for (q,s),(w,d,n) in sorted(table.items()):
        text+=f'{q} {s} {w} {"R" if d==1 else "L"} {"H" if n==-1 else n}\n'
    path.write_text(text)
    return hashlib.sha256(text.encode()).hexdigest()


def read(path):
    t={}
    for line in Path(path).read_text().splitlines():
        if not line or line.startswith('#'):continue
        q,s,w,d,n=line.split();key=(int(q),int(s))
        assert key not in t
        t[key]=(int(w),1 if d=='R' else -1,-1 if n=='H' else int(n))
    n=len(t)//2
    assert len(t)==2*n and set(t)=={(q,s) for q in range(n) for s in (0,1)}
    assert all(w in (0,1) and d in (-1,1) and -1<=z<n for w,d,z in t.values())
    return t
\end{lstlisting}

\section{Window certificate and reachability-restricted quotient}

\subsection{Window invariant generator and verifier}

The following Python source is \texttt{window\_certificate.py}.
\begin{lstlisting}[language=Python,basicstyle=\ttfamily\scriptsize,breaklines=true,columns=fullflexible,keepspaces=true]
"""Proof-producing local-window reachability for deterministic binary TMs.

A window contains 2*r+1 consecutive cells, with its most-significant bit leftmost.
The unknown newly exposed cell is ALWAYS allowed to be either symbol. This is an
inductive overapproximation, not a finite execution sample or a nonhalting test.
"""
from collections import deque

def successors(table,q,w,r):
    symbol=(w>>r)&1
    write,move,target=table[q,symbol]
    if target<0:return ()
    changed=(w&~(1<<r))|(write<<r)
    mask=(1<<(2*r+1))-1
    if move==1:return ((target,((changed<<1)&mask)|b) for b in (0,1))
    return ((target,(changed>>1)|(b<<(2*r))) for b in (0,1))

def generate(table,radius=2):
    if not isinstance(radius,int) or not 1<=radius<=8:raise ValueError('radius')
    n=len(table)//2;mask=[0]*n;mask[0]=1;todo=deque([(0,0)])
    while todo:
        q,w=todo.popleft()
        for a,v in successors(table,q,w,radius):
            if not(mask[a]>>v)&1:
                mask[a]|=1<<v;todo.append((a,v))
    return {'radius':radius,'window_width':2*radius+1,'masks':mask,
        'interpretation':'bit w of masks[q] permits the state/window pair (q,w); left cell is highest bit',
        'initial_state':0,'initial_tape':'all zero',
        'unknown_exterior':'both symbols, independently at every abstract step'}

def verify(table,cert):
    """Standalone finite invariant check, independent of the fixed-point finder."""
    r=cert['radius'];W=2*r+1;n=len(table)//2;m=cert['masks']
    if not isinstance(r,int) or not 1<=r<=8 or cert['window_width']!=W:raise ValueError('bad radius')
    if len(m)!=n or any(not isinstance(x,int) or x<0 or x>=(1<<(1<<W)) for x in m):raise ValueError('bad masks')
    if not m[0]&1:raise ValueError('blank initial configuration omitted')
    total=0;edges=set();obligations=0
    for q,bits in enumerate(m):
        for w in range(1<<W):
            if not (bits>>w)&1:continue
            total+=1;s=(w>>r)&1;edges.add((q,s))
            write,move,target=table[q,s]
            if target<0:continue
            # Spell out the successor calculation rather than call successors().
            cells=[(w>>i)&1 for i in reversed(range(W))]
            cells[r]=write
            for exterior in (0,1):
                shifted=cells[1:]+[exterior] if move==1 else [exterior]+cells[:-1]
                v=0
                for cell in shifted:v=2*v+cell
                if not (m[target]>>v)&1:raise ValueError(('invariant not closed',q,w,target,v))
                obligations+=1
    return {'state_window_pairs':total,'reachable_read_overapproximation':sorted(edges),
            'closure_obligations':obligations,
            'excluded_reads':sorted(set(table)-edges)}

def verify_projection(old,new,cert,mapping):
    evidence=verify(old,cert)
    if len(mapping)!=len(old)//2 or mapping[0]!=0:raise ValueError('bad mapping')
    if any(not isinstance(q,int) or not 0<=q<len(new)//2 for q in mapping):raise ValueError('bad state target')
    for q,s in evidence['reachable_read_overapproximation']:
        w,d,n=old[q,s]
        expected=(w,d,-1 if n<0 else mapping[n])
        if new[mapping[q],s]!=expected:raise ValueError(('projection action mismatch',q,s))
    return evidence
\end{lstlisting}
\subsection{Compatible merge search and quotient}

The following Python source is \texttt{compatible\_merge.py}.
\begin{lstlisting}[language=Python,basicstyle=\ttfamily\scriptsize,breaklines=true,columns=fullflexible,keepspaces=true]
"""Greedy compatible state merging using a supplied proven reachable-read set.
Selection is heuristic. Correctness of the selected mapping is checked separately.
"""
from collections import deque
def unionmerge(tab,edges,groups,pairs):
    n=len(tab)//2;p=list(range(n));dd=[{} for _ in range(n)]
    for q,s in edges:dd[q][s]=tab[q,s]
    def find(q):
        while p[q]!=q:p[q]=p[p[q]];q=p[q]
        return q
    todo=list(pairs)+[(q,r) for q,r in enumerate(groups) if q!=r]
    while todo:
        a,b=todo.pop()
        if a<0 or b<0:
            if a!=b:return None
            continue
        a,b=find(a),find(b)
        if a==b:continue
        if a>b:a,b=b,a
        for s,t in dd[b].items():
            if s in dd[a]:
                z=dd[a][s]
                if z[:2]!=t[:2]:return None
                todo.append((z[2],t[2]))
            else:dd[a][s]=t
        p[b]=a;dd[b]={}
    return [find(q) for q in range(n)]

def optimize(tab,edges):
    n=len(tab)//2;g=list(range(n));record=[]
    while True:
        roots=sorted(set(g));best=None
        for i,a in enumerate(roots):
            for b in roots[i+1:]:
                new=unionmerge(tab,edges,g,[(a,b)])
                if new is not None and (best is None or len(set(new))<len(set(best[0]))):best=(new,a,b)
        if best is None:break
        g,a,b=best
        record.append({'a':a,'b':b,'states':len(set(g))})
        print(record[-1],flush=True)
    return g,record

def quot(tab,edges,g):
    small={}
    for q,s in edges:
        w,d,n=tab[q,s];a=(w,d,g[n] if n>=0 else -1)
        key=(g[q],s)
        assert key not in small or small[key]==a
        small[key]=a
    # Arbitrary unreachable reads are completed, never used on actual runs.
    for old in range(len(g)):
        for s in (0,1):
            w,d,n=tab[old,s]
            small.setdefault((g[old],s),(w,d,g[n] if n>=0 else -1))
    start=g[0];ren={start:0};todo=deque([start]);out={}
    while todo:
        q=todo.popleft()
        for s in (0,1):
            w,d,n=small[q,s]
            if n>=0 and n not in ren:ren[n]=len(ren);todo.append(n)
            out[ren[q],s]=(w,d,ren[n] if n>=0 else -1)
    assert len(ren)==len(set(g))
    return out,[ren[r] for r in g]

\end{lstlisting}
\subsection{Standalone certificate checker}

The following Python source is \texttt{verify\_certificate.py}.
\begin{lstlisting}[language=Python,basicstyle=\ttfamily\scriptsize,breaklines=true,columns=fullflexible,keepspaces=true]
"""Check only delivered tables/certificates, without importing the machine compiler."""
from pathlib import Path
from copy import deepcopy
import json,hashlib
from tables import read
from window_certificate import verify_projection
ROOT=Path(__file__).resolve().parent

def run():
    old=read(ROOT/'machines/RH_121_predecessor.tm');new=read(ROOT/'machines/RH_120.tm')
    cert=json.loads((ROOT/'machines/window_certificate.json').read_text())
    mp=json.loads((ROOT/'machines/RH_120.projection.json').read_text())
    info=verify_projection(old,new,cert,mp);rejected=[]
    def must_fail(label,changed_new,changed_cert,changed_map):
        try:verify_projection(old,changed_new,changed_cert,changed_map)
        except (ValueError,KeyError,IndexError):rejected.append(label);return
        raise RuntimeError('Invalid certificate accepted: '+label)
    c=deepcopy(cert);c['masks'][0]&=~1
    must_fail('remove blank initial window',new,c,mp)
    q=next(i for i,m in enumerate(cert['masks']) if i!=0 and m)
    c=deepcopy(cert);c['masks'][q]=0
    must_fail('remove reachable successor state windows',new,c,mp)
    bad=dict(new);w,d,n=bad[0,0];bad[0,0]=(1-w,d,n)
    must_fail('flip a reachable write',bad,cert,mp)
    q,s=info['excluded_reads'][-1];c=deepcopy(cert);c['masks'][q]|=1<<(s<<cert['radius'])
    must_fail('admit a formerly impossible local window',new,c,mp)
    badmap=list(mp);badmap[0]=1
    must_fail('wrong start-state projection',new,cert,badmap)
    out={'old_states':len(old)//2,'new_states':len(new)//2,'state_window_pairs':info['state_window_pairs'],
         'closure_obligations':info['closure_obligations'],'projected_reachable_reads':len(info['reachable_read_overapproximation']),
         'excluded_reads':info['excluded_reads'],'negative_controls_rejected':rejected,
         'scope':'Inductive finite certificate; unknown neighboring cells always include both 0 and 1. Does not assume RH, source-loop invariants, or nonhalting.'}
    (ROOT/'reports/certificate_checks.json').write_text(json.dumps(out,indent=2)+'\n')
    return out
if __name__=='__main__':print(json.dumps(run(),indent=2))
\end{lstlisting}

\section{Control-transducer bisimulation}
\label{app:controlbisim}

The following Python source, \texttt{verify\_control.py}, interprets the
generated decision, dispatcher, and jump transitions literally. Register
leaves are replaced by their proved primitive contracts, and the cleanup
phase is replaced by its proved control effect.
Every possible conditional-decrement outcome is explored. The resulting finite
relation is checked for its initial pair, matching operation labels, and all
successor obligations.

\begin{lstlisting}[language=Python,basicstyle=\ttfamily\tiny]
"""Finite bisimulation of the two source-control transducers.

The generated dispatch/DAG/jump transitions are interpreted literally.
At register leaves, the existing register primitive contract supplies the
success/failure outcome. Every possible decrement outcome is explored;
increment always succeeds. No concrete counter range is assumed.
"""
from collections import deque
import json
from pathlib import Path

def encode(q,h,ones):return (q,h,tuple(sorted(ones)))
def seek(tab,c):
    q,h,positions=c;ones=set(positions);seen=set()
    while not q.startswith('3.') and q!='halt':
        key=encode(q,h,ones)
        if key in seen:raise ValueError('Control-only infinite loop')
        seen.add(key)
        if q=='2e.reg.cleanup_1':
            h-=1;q='6.continue.0'
        else:
            s=int(h in ones);w,d,q=tab[q,s]
            if w:ones.add(h)
            else:ones.discard(h)
            h+=1 if d=='R' else -1
        if not -5<=h<=100:raise ValueError('Control extent exceeded')
    return encode(q,h,ones)

def successor(tab,c,success):
    q,h,ones=c
    if not q.startswith('3.'):raise ValueError('Not an operation')
    if q.endswith('.inc') and not success:raise ValueError('Increment cannot fail')
    return seek(tab,('6.break.0' if success else '6.break.1',h-1,ones))

def generate(old,oldroot,new,newroot):
    start=(seek(old,('N'+str(oldroot),3,(0,1))),seek(new,('N'+str(newroot),3,(0,1))))
    ids={start:0};todo=deque([start]);rows=[]
    while todo:
        a,b=todo.popleft()
        if a[0]!=b[0]:raise ValueError(('Different operations',a,b))
        outputs=[]
        if a[0]!='halt':
            for s in ((1,) if a[0].endswith('.inc') else (0,1)):
                pair=successor(old,a,s),successor(new,b,s)
                if pair not in ids:ids[pair]=len(ids);todo.append(pair)
                outputs.append([s,ids[pair]])
        rows.append({'old':a,'new':b,'successors':outputs})
    return {'initial_pair':0,'pairs':rows,'semantics':'Register primitive outcomes; all possible decrement outcomes, increments always succeed.'}

def check(old,oldroot,new,newroot,cert):
    def c(v):return (v[0],v[1],tuple(v[2]))
    rows=cert['pairs'];assert cert['initial_pair']==0
    assert c(rows[0]['old'])==seek(old,('N'+str(oldroot),3,(0,1)))
    assert c(rows[0]['new'])==seek(new,('N'+str(newroot),3,(0,1)))
    edges=0
    for row in rows:
        a,b=c(row['old']),c(row['new']);assert a[0]==b[0]
        expected=[] if a[0]=='halt' else [1] if a[0].endswith('.inc') else [0,1]
        assert [s for s,_ in row['successors']]==expected
        for s,k in row['successors']:
            assert 0<=k<len(rows)
            assert successor(old,a,s)==c(rows[k]['old'])
            assert successor(new,b,s)==c(rows[k]['new']);edges+=1
    return {'operation_boundary_pairs':len(rows),'outcome_edges':edges,'halting_pairs':sum(r['old'][0]=='halt' for r in rows),'all_decrement_outcomes_checked':True}

if __name__=='__main__':
    from rh_source import RHBuilder as Old
    from rh120_source import RHBuilder as New
    a,b=Old(),New();old,new=a.build(False),b.build(False)
    cert=generate(old,a.root,new,b.root);info=check(old,a.root,new,b.root,cert)
    (Path(__file__).resolve().parent/'machines/control_bisimulation.json').write_text(json.dumps(cert,indent=2))
    (Path(__file__).resolve().parent/'reports/control_verification.json').write_text(json.dumps(info,indent=2))
    print(json.dumps(info,indent=2))
\end{lstlisting}

\section{Literal bootstrap replay checker}
\label{app:bootstrap}

The following standalone C++ program, \texttt{bootstrap\_check.cpp}, checks
the finite bootstrap statement \cref{lem:bootstrap} directly from the
generated named transition table. It does not import the source compiler
or any arithmetic routine.

\begin{lstlisting}[language=C++,basicstyle=\ttfamily\tiny,breaklines=true,columns=fullflexible,keepspaces=true]
#include <array>
#include <cassert>
#include <fstream>
#include <iostream>
#include <map>
#include <sstream>
#include <string>
#include <unordered_set>
#include <vector>

struct Transition { int write, dir, next; };
struct Raw { std::string q, next; int read, write; char dir; };

int main(int argc, char** argv) {
    const char* path = argc > 1 ? argv[1] : "machines/RH_121_raw_named.tm";
    std::ifstream in(path);
    if (!in) return 2;

    std::map<std::string,int> id;
    std::vector<std::string> name;
    std::vector<Raw> raw;
    std::string line, start;
    auto intern = [&](const std::string& s) {
        auto it=id.find(s); if (it!=id.end()) return it->second;
        int k=(int)id.size(); id[s]=k; name.push_back(s); return k;
    };

    while (std::getline(in,line)) {
        if (line.rfind("#! start",0)==0) {
            std::istringstream ss(line); std::string a,b; ss>>a>>b>>start;
            continue;
        }
        if (line.empty() || line[0]=='#') continue;
        Raw r; std::istringstream ss(line);
        ss>>r.q>>r.read>>r.write>>r.dir>>r.next;
        raw.push_back(r); intern(r.q); if (r.next!="H") intern(r.next);
    }
    std::vector<std::array<Transition,2>> table(id.size());
    for (const auto& r:raw)
        table[id[r.q]][r.read]={r.write,r.dir=='R'?1:-1,
                                r.next=="H"?-1:id[r.next]};

    std::unordered_set<long long> ones;
    long long head=0, steps=0;
    int state=id.at(start);
    while (name[state].empty() || name[state][0] != 'N') {
        int read=ones.count(head)?1:0;
        auto t=table[state][read];
        if (t.write) ones.insert(head); else ones.erase(head);
        head += t.dir; state=t.next; ++steps;
        assert(state>=0);
    }

    assert(steps==89775610LL);
    assert(name[state]=="N54");
    assert(head==-3818);
    std::unordered_set<long long> expected={-3821,-3820};
    for (long long p=3;p<=3821;p+=2) expected.insert(p);
    assert(ones==expected);
    std::cout << "bootstrap certificate verified\n";
}
\end{lstlisting}

\subsection{Literal completed-stage checker}

This C++ checker, \texttt{stage\_check.cpp}, detects returns from the main
source routine and validates the live-register tuple after completed
endpoints $2,3,4$.

\begin{lstlisting}[language=C++,basicstyle=\ttfamily\tiny]
#include <array>
#include <cassert>
#include <fstream>
#include <iostream>
#include <map>
#include <sstream>
#include <string>
#include <vector>
struct T {int w,d,n;};
struct Raw {std::string q,n; int s,w; char d;};
int main(int argc,char**argv){
    std::ifstream in(argc>1?argv[1]:"machines/RH_121_raw_named.tm");assert(in);
    std::map<std::string,int> id;std::vector<std::string> name;std::vector<Raw> rows;
    auto intern=[&](std::string s){auto it=id.find(s);if(it!=id.end())return it->second;
        int i=name.size();name.push_back(s);return id[s]=i;};
    std::string line;while(std::getline(in,line)){if(line.empty()||line[0]=='#')continue;
        Raw r;std::istringstream ss(line);ss>>r.q>>r.s>>r.w>>r.d>>r.n;rows.push_back(r);intern(r.q);intern(r.n);}
    std::vector<std::array<T,2>> tab(name.size());for(auto&x:tab)for(auto&t:x)t={0,1,-1};
    for(auto r:rows)tab[id[r.q]][r.s]={r.w,r.d=='R'?1:-1,id[r.n]};
    const int Z=1<<22;std::vector<unsigned char> tape(2*Z);int h=Z,q=id.at("0a.boot1.A"),stage=0;
    long long steps=0;bool first=true,returned=false;std::cout<<"[\n";
    while(stage<=3){
        assert(q>=0&&h>0&&h<2*Z-1&&steps<1000000000LL);
        bool entry=name[q]=="N54"&&(first||returned);
        if(entry){
            first=false;
            assert(h-Z==-3818&&tape[Z-3821]&&tape[Z-3820]);
            for(int p=0;p<Z-3821;++p)assert(!tape[p]);
            for(int p=Z-3819;p<Z+3;++p)assert(!tape[p]);
            assert(tape[Z+3]&&!tape[Z+4]);int p=Z+5;std::array<int,5> regs;
            for(int i=0;i<5;++i){int r=-1;while(tape[p]){++r;++p;}assert(r>=0);regs[i]=r;++p;}
            assert(regs[0]==stage&&regs[1]==0&&regs[2]==0&&regs[3]==0&&regs[4]==0);
            std::cout<<(stage?",\n":"")<<"  {\"completed_endpoint\":"<<(stage?stage+1:0)<<",\"steps\":"<<steps<<",\"m\":"<<regs[0]<<"}";
            ++stage;if(stage>3)break;
        }
        returned=name[q]=="5.root.2"&&tape[h]==1;
        auto t=tab[q][tape[h]];tape[h]=t.w;h+=t.d;q=t.n;++steps;
    }
    std::cout<<"\n]\n";
}
\end{lstlisting}

\section{Reproduction driver and mathematical checks}

\subsection{Reproduction driver}

The following Python source is \texttt{reproduce.py}.
\begin{lstlisting}[language=Python,basicstyle=\ttfamily\tiny,breaklines=true,columns=fullflexible,keepspaces=true]
"""Rebuild the 120-state machine and verify its finite certificates."""
from pathlib import Path
import argparse,json,hashlib
from rh120_source import RHBuilder
from rh_source import RHBuilder as Previous
from builder import save,count
from minimize import minimize
from tables import canonical,read,write
from window_certificate import generate,verify_projection
from compatible_merge import optimize,quot
from verify_control import generate as control_generate,check as control_check
from godel import encode,decode
ROOT=Path(__file__).resolve().parent

def dump(path,value):path.write_text(json.dumps(value,indent=2)+'\n')
def compile_named(b):
    raw=b.build(False);sm,mp,rounds=minimize(raw)
    for (q,s),(w,d,n) in raw.items():
        assert sm[mp[q],s]==(w,d,mp.get(n,'halt'))
    tab,ids,missing=canonical(sm);unmerged,rawids,_=canonical(raw)
    mapping=[ids[mp[q]] for q,i in sorted(rawids.items(),key=lambda x:x[1])]
    for (q,s),(w,d,n) in unmerged.items():
        assert tab[mapping[q],s]==(w,d,-1 if n<0 else mapping[n])
    return raw,tab,{q:ids[mp[q]] for q in mp},unmerged,mapping,rounds,missing

def build():
    for folder in ('machines','reports'):(ROOT/folder).mkdir(exist_ok=True)
    previous=Previous();oldraw,oldtab,*_=compile_named(previous)
    oldcert=generate(oldtab,2)
    from window_certificate import verify
    oldedges=verify(oldtab,oldcert)['reachable_read_overapproximation']
    oldgroups,_=optimize(oldtab,oldedges);oldfinal,oldmap=quot(oldtab,oldedges,oldgroups)
    verify_projection(oldtab,oldfinal,oldcert,oldmap)
    assert oldfinal==read(ROOT/'baseline/RH_121.tm')
    b=RHBuilder();raw,tab,names,unmerged,bmap,rounds,missing=compile_named(b)
    cert=generate(tab,2);info=verify(tab,cert)
    groups,trace=optimize(tab,info['reachable_read_overapproximation'])
    final,projection=quot(tab,info['reachable_read_overapproximation'],groups)
    verify_projection(tab,final,cert,projection)
    assert len(tab)==242 and len(final)==240
    expected=ROOT/'machines/RH_120.tm'
    if expected.exists():assert read(expected)==final
    save(raw,ROOT/'machines/RH_121_raw_named.tm')
    for name,t in [('RH_121_predecessor',tab),('RH_122_unmerged',unmerged),('RH_120',final)]:
        write(t,ROOT/'machines'/f'{name}.tm')
    dump(ROOT/'machines/bisimulation.json',bmap)
    dump(ROOT/'machines/window_certificate.json',cert)
    dump(ROOT/'machines/RH_120.projection.json',projection)
    dump(ROOT/'machines/RH_121.dispatch.json',b.dispatch_certificate)
    dump(ROOT/'machines/RH_121.names.json',names)
    dump(ROOT/'machines/RH_120.names.json',{q:projection[i] for q,i in names.items()})
    control=control_generate(oldraw,previous.root,raw,b.root)
    control_info=control_check(oldraw,previous.root,raw,b.root,control)
    dump(ROOT/'machines/control_bisimulation.json',control)
    for name,t in [('RH_121_predecessor',tab),('RH_120',final)]:
        n,g,k=encode(t);assert decode(n,g)==t
        dump(ROOT/'machines'/f'{name}.goedel.json',{'states':n,'G_bits':g.bit_length(),'G_decimal':str(g),'K_decimal':str(k)})
    # Deliberate corruptions must be rejected by the projection/control checks.
    bad=dict(final);w,d,n=bad[0,0];bad[0,0]=(1-w,d,n)
    try:verify_projection(tab,bad,cert,projection)
    except ValueError:pass
    else:raise AssertionError('Corrupted reachable write accepted')
    changed=json.loads(json.dumps(control));changed['pairs'][0]['new'][0]='halt'
    try:control_check(oldraw,previous.root,raw,b.root,changed)
    except AssertionError:pass
    else:raise AssertionError('Corrupted control pair accepted')
    info.pop('reachable_read_overapproximation')
    report={'original_121_reproduced_exactly':True,'raw_counts':count(raw),'predecessor_states':len(tab)//2,
        'final_states':len(final)//2,'transitions':len(final),'minimizer_rounds':rounds,'completed_missing':missing,
        'window':info,'merge_trace':trace,'control_bisimulation':control_info,'G_bits':g.bit_length(),
        'sha256':hashlib.sha256((ROOT/'machines/RH_120.tm').read_bytes()).hexdigest(),
        'negative_controls_rejected':['reachable write mutation','control-operation mismatch']}
    dump(ROOT/'reports/build.json',report);return report

if __name__=='__main__':
    p=argparse.ArgumentParser();p.add_argument('--test',action='store_true');args=p.parse_args()
    print(json.dumps(build(),indent=2))
    if args.test:
        from verify_math import run
        print(json.dumps(run(),indent=2))
\end{lstlisting}
\subsection{Exact host-integer checks}

The following Python source is \texttt{verify\_math.py}.
\begin{lstlisting}[language=Python,basicstyle=\ttfamily\tiny,breaklines=true,columns=fullflexible,keepspaces=true]
"""Independent exact host-integer checks; these are NOT binary-tape executions."""
from pathlib import Path
from fractions import Fraction
from itertools import product
import json,math
ROOT=Path(__file__).resolve().parent

def ell(B,b):
    assert B>=0 and b>=2
    d=0
    while B: B//=b; d+=1
    return d

def beta(B,b):
    assert B>=0 and b>=2
    d=0
    while B: B=(B-1)//b; d+=1
    return d

def fold_unit(x,j,v):
    while j:j-=1;v+=1;x+=1
    while v:v-=1;j+=1
    return x,j,v

def transform_unit(x,j,v):
    m=x;x=0
    while True:
        x,j,v=fold_unit(x,j,v)
        if not m:break
        m-=1;x+=1
    return x,j,v,m

def divide_unit(x,j,v):
    m=x;x=0;v+=j;j=0
    while m:
        m-=1
        if v:v-=1;j+=1
        else:v+=j;j=0;x+=1
    return x,j,v,m

def prime_flags(N):
    s=bytearray(b'\1')*(N+1);s[0:2]=b'\0\0'
    for p in range(2,math.isqrt(N)+1):
        if s[p]:s[p*p:N+1:p]=b'\0'*((N-p*p)//p+1)
    return s

def scaling(j,f):
    m=q=0
    while j:
        j-=1;m+=1
        for _ in range(2):m+=(j+1)//2;j//=2
        f//=2;q+=1
    return m,f,q

def scanner_checks():
    from rh120_source import RHBuilder
    total=0;reports={}
    for cls,name in [(RHBuilder,'RH_121')]:
        b=cls(True);tab=b.build(False);cert=b.dispatch_certificate;k=cert['scanner_states']
        assert k==4
        stored=json.loads((ROOT/'machines'/f'{name}.dispatch.json').read_text())
        assert stored==json.loads(json.dumps(cert))
        words={r['bits'] for r in cert['control_words']}
        # Also cover every synthetic length <=10 word beginning in 1 and
        # satisfying the stated forbidden-run premise.
        for length in range(1,11):
            for suffix in product('01',repeat=length-1):
                w='1'+''.join(suffix)
                if '0'*k not in w:words.add(w)
        cases=0
        for word in words:
            assert '0'*k not in word and word[0]=='1'
            tape={i:int(s) for i,s in enumerate(word)}
            for start,bit in enumerate(word):
                if bit!='1':continue
                h=start-1;q='4.run.0';steps=0;copy=dict(tape)
                while q!='5.root.1':
                    assert steps<10*(len(word)+k)+10
                    read=copy.get(h,0);w,d,q=tab[q,read]
                    assert w==read # scanner is read-only
                    h+=1 if d=='R' else -1;steps+=1
                assert h==1 and copy==tape
                cases+=1
        total+=cases;reports[name]={'scanner_states':k,'maximum_zero_run':cert['maximum_consecutive_zeros'],
                'certified_leaf_paths':cert['terminal_paths'],'test_words':len(words),'read_only_scan_executions':cases}
    return {'cases':total,'machines':reports,'scope':'Finite language certificates and synthetic operation-entry scanner runs; inherited tape-layout invariant is an additional semantic premise.'}

def run():
    counts={};c=0
    for x,j,v in product(range(7),repeat=3):
        assert fold_unit(x,j,v)==(x+j,j+v,0)
        assert transform_unit(x,j,v)==(x*(j+v+1)+j,j+v,0,0);c+=1
    counts['fold_unit_cases']=counts['transform_unit_cases']=c;c=0
    for B in range(40):
        for j,v in product(range(7),repeat=2):
            J=j+v+1;q,r=divmod(B,J)
            got=divide_unit(B,j,v);assert got==(q,r,J-1-r,0)
            assert transform_unit(*got[:3])==(B,J-1,0,0);c+=1
    counts['split_divider_and_reconstruction_cases']=c;c=0
    for b in range(2,18):
        for B in list(range(257))+[b**d-1 for d in range(1,20)]+[b**d for d in range(1,20)]+[sum(b**i for i in range(1,d+1)) for d in range(1,15)]:
            e=ell(B,b);d=beta(B,b)
            assert max(e-1,0)<=d<=e
            if B:assert sum(b**i for i in range(1,d))<B<=sum(b**i for i in range(1,d+1))
            if e:assert b**(e-1)<=B<b**e
            c+=1
    counts['offset_digit_witnesses']=c
    for n in range(1,100001):
        j=n-1;q=0
        while j:j=(j-1)//4;q+=1
        assert q==((3*n).bit_length()-1)//2
    counts['scaling_clock_cases']=100000;c=0
    for j in range(1000):
        q=((3*(j+1)).bit_length()-1)//2
        W=(q+1)*2**q
        for f in (0,1,2,3,7,15,W-1,W,W+1,2*W):
            assert scaling(j,f)==(j,f//2**q,q);c+=1
    counts['unit_conserving_scaling_cases']=c
    small=[(1,5,Fraction(4,3)),(2,21,Fraction(8,3)),(3,85,Fraction(4)),(4,341,Fraction(16,3)),
           (5,1365,10*Fraction(56,81)+Fraction(682,2389)),(6,2656,Fraction(22,3))]
    assert all(Fraction(N)/L<(q+1)*2**q for q,N,L in small)
    counts['small_range_rational_inequalities']=len(small)
    sieve=prime_flags(4096);A=F=P=1
    for j in range(2,4097):
        assert (A%j!=0)==bool(sieve[j]);A*=j*(j if A%j else 1)
        F*=j
        if sieve[j]:P*=j
        assert A==F*P
    counts['self_sieve_identities']=4095
    rows=[];steps=0;flags=prime_flags(256)
    for n in range(2,257):
        # Execute the source-level product control through exact unit-kernel
        # contracts, with an independent sieve/factorial oracle at EVERY step.
        x=j=0;f=n-2;F=P=1
        while True:
            j+=1;J=j+1;q,r=divmod(x,J);v=J-1-r;x=q;j=r
            if v:x,j,v=x*(j+v+1)+j,j+v,0
            for _ in range(2):x,j,v=x*(j+v+1)+j,j+v,0
            F*=J
            if flags[J]:P*=J
            assert x==F*P-1 and j==J-1 and v==0;steps+=1
            if not f:break
            f-=1
        f=j;v=j;j=0;old=x;division_steps=0
        while x:
            x-=1;J=j+v+1;x,j=divmod(x,J);v=J-1-j
            f=max(f-1,0);division_steps+=1
        j+=v;v=0
        assert division_steps==beta(old,n)
        assert j==n-1 and f==max(n-1-beta(F*P-1,n),0)
        delta=f;m,f,q=scaling(j,f);final=max(f-q,0)
        assert m==n-1 and final==0 # finite validation, not an RH proof
        rows.append({'n':n,'ordinary_digits':ell(old,n),'offset_digits':division_steps,'deficit':delta,'clock':q,'threshold':(q+1)*2**q,'post_compare_deficit':final})
    counts['complete_host_integer_stages']=len(rows);counts['intermediate_product_invariants']=steps
    report={'counts':counts,'scanner':scanner_checks(),'stages':rows,
            'scope':'Exact host-integer and isolated scanner checks. Full arithmetic stages use mathematically specified kernel contracts, not binary-tape execution or AST formal verification.'}
    (ROOT/'reports/math_checks.json').write_text(json.dumps(report,indent=2)+'\n')
    return {k:v for k,v in report.items() if k!='stages'}
if __name__=='__main__':print(json.dumps(run(),indent=2))
\end{lstlisting}
\subsection{Reversible G\"odel coding}

The following Python source is \texttt{godel.py}.
\begin{lstlisting}[language=Python,basicstyle=\ttfamily\scriptsize,breaklines=true,columns=fullflexible,keepspaces=true]
"""Exact reversible coding of a complete binary transition table."""
import sys
if hasattr(sys, "set_int_max_str_digits"): sys.set_int_max_str_digits(20000)

def encode(table):
    n=len(table)//2;b=4*(n+1)
    G=0
    for q in reversed(range(n)):
        for s in (1,0):
            w,move,target=table[q,s]
            digit=w+2*(move==1)+4*(n if target==-1 else target)
            G=G*b+digit
    return n,G,(1<<n)*(2*G+1)

def decode(n,G):
    b=4*(n+1);table={}
    for q in range(n):
        for s in (0,1):
            G,v=divmod(G,b);target=v//4
            table[q,s]=(v%2,1 if (v//2)%2 else -1,-1 if target==n else target)
    assert G==0
    return table

\end{lstlisting}

\section{Exact 120-state transition table}
\label{app:table}

The syntax is \texttt{state read write direction target}; \texttt{H} denotes
the uncounted halting target. The table is the text file
\texttt{machines/RH\_120.tm}.
\begin{lstlisting}[language={},basicstyle=\ttfamily\tiny,breaklines=false,columns=fullflexible,keepspaces=true]
# start 0; blank 0; halt H; one tape; two symbols; L/R only
0 0 1 R 1
0 1 1 L 2
1 0 0 L 0
1 1 0 L 3
2 0 1 L 0
2 1 1 R 4
3 0 1 L 1
3 1 1 R 5
4 0 1 R 6
4 1 0 R 7
5 0 0 R 3
5 1 0 R 1
6 0 0 R 8
6 1 1 R H
7 0 0 R 9
7 1 1 R 7
8 0 1 R 8
8 1 1 L 10
9 0 0 R 11
9 1 1 R 9
10 0 0 L 10
10 1 1 L 12
11 0 1 R 13
11 1 1 R 11
12 0 0 L 14
12 1 1 L 12
13 0 0 L 15
13 1 0 R 11
14 0 0 L 16
14 1 1 L 12
15 0 0 L 17
15 1 1 L 15
16 0 0 L 18
16 1 1 L 12
17 0 0 L 19
17 1 1 L 15
18 0 0 R 20
18 1 1 L 12
19 0 1 R 21
19 1 0 L 19
20 0 0 R 20
20 1 1 R 22
21 0 0 R 21
21 1 0 R 23
22 0 0 R 24
22 1 1 R 25
23 0 0 L 24
23 1 0 R 23
24 0 1 L 10
24 1 1 R 26
25 0 0 R 27
25 1 0 R 27
26 0 0 R 24
26 1 1 R 28
27 0 0 R 29
27 1 1 R 30
28 0 0 R 4
28 1 1 R 31
29 0 0 R 31
29 1 1 R 32
30 0 0 R 33
30 1 1 R 34
31 0 1 R 6
31 1 0 R 35
32 0 0 R 36
32 1 0 L 37
33 0 0 R 38
33 1 1 R 39
34 0 0 R 40
34 1 1 R 41
35 0 0 R 42
35 1 1 R 35
36 0 1 R 6
36 1 0 R 43
37 0 0 L 37
37 1 0 L 10
38 0 0 R 44
38 1 1 R 45
39 0 0 R 46
39 1 0 L 37
40 0 0 R 47
40 1 1 R 48
41 0 0 R 49
41 1 1 R 50
42 0 0 R 51
42 1 1 R 42
43 0 0 R 52
43 1 1 R 43
44 0 1 R 6
44 1 0 R 53
45 0 0 R 54
45 1 1 R 55
46 0 1 R 6
46 1 0 R 56
47 0 1 R 6
47 1 0 R 57
48 0 0 R 58
48 1 0 L 37
49 0 0 R 59
49 1 1 R 60
50 0 0 R 61
50 1 1 R 62
51 0 1 L 63
51 1 0 R 64
52 0 0 R 53
52 1 1 R 52
53 0 0 R 7
53 1 1 R 53
54 0 0 R 65
54 1 1 R 66
55 0 0 R 67
55 1 1 R 68
56 0 0 R 69
56 1 1 R 56
57 0 0 R 35
57 1 1 R 57
58 0 0 R 70
58 1 1 R 36
59 0 0 R 71
59 1 1 R 72
60 0 0 R 73
60 1 1 R 74
61 0 0 R 47
61 1 1 R 75
62 0 0 R 76
62 1 1 R 77
63 0 0 L 78
63 1 1 L 63
64 0 0 L 51
64 1 1 R 79
65 0 0 R 47
65 1 1 R 80
66 0 0 R 81
66 1 1 R 82
67 0 0 R 73
67 1 1 R 83
68 0 0 R 84
68 1 1 R 84
69 0 0 R 57
69 1 1 R 69
70 0 1 R 6
70 1 0 R 85
71 0 1 R 6
71 1 0 R 69
72 0 0 R 54
72 1 1 R 86
73 0 1 R 6
73 1 0 R 87
74 0 0 R 44
74 1 0 L 37
75 0 0 R 88
75 1 1 R 89
76 0 0 R 71
76 1 1 R 90
77 0 0 R 46
77 1 1 R H
78 0 0 L 91
78 1 1 L 63
79 0 0 R 92
79 1 1 R 79
80 0 0 R 70
80 1 0 L 37
81 0 0 R 71
81 1 1 R 93
82 0 0 R 31
82 1 1 R 94
83 0 0 R 70
83 1 1 R 84
84 0 0 R 81
84 1 1 R 95
85 0 0 R 43
85 1 1 R 85
86 0 0 R 96
86 1 0 L 32
87 0 0 R 56
87 1 1 R 87
88 0 0 R 4
88 1 1 R 97
89 0 0 R 98
89 1 0 L 32
90 0 0 R 96
90 1 0 L 37
91 0 0 L 19
91 1 0 L 91
92 0 0 L 99
92 1 1 R 79
93 0 0 R 4
93 1 0 L 37
94 0 0 R 100
94 1 0 L 37
95 0 0 R 101
95 1 1 R 102
96 0 0 R 46
96 1 0 L 19
97 0 0 R 103
97 1 1 R 104
98 0 1 R 6
98 1 0 R 52
99 0 1 L 105
99 1 1 L 99
100 0 0 R 106
100 1 1 R 107
101 0 0 R 108
101 1 1 R 60
102 0 0 R 31
102 1 1 R 89
103 0 0 R 109
103 1 1 R 60
104 0 0 R 103
104 1 1 R 110
105 0 0 L 15
105 1 0 L 99
106 0 0 R 73
106 1 1 R 111
107 0 0 R 65
107 1 1 R 98
108 0 0 R 47
108 1 1 R 112
109 0 0 R 47
109 1 1 R 113
110 0 0 R 114
110 1 1 R 115
111 0 0 R 44
111 1 0 L 91
112 0 0 R 70
112 1 1 R 89
113 0 0 R 116
113 1 0 L 37
114 0 0 R 46
114 1 1 R 117
115 0 0 R 73
115 1 1 R 32
116 0 0 R 4
116 1 1 R 118
117 0 0 R 119
117 1 0 L 37
118 0 0 R 47
118 1 1 R 70
119 0 0 R 46
119 1 1 R 70
\end{lstlisting}

\clearpage

\paragraph{AI assistance.}
The idea for the Turing machine is the author's. GPT~6 Astra assisted
in writing the proofs and the Turing-machine code, as well as in
manuscript preparation.

\small
\setlength{\bibsep}{0.25em}

\end{document}